\documentclass[12pt]{article}
\usepackage[latin9]{inputenc}
\usepackage{mathrsfs}
\usepackage{natbib}

\usepackage{amsmath,amssymb,amsthm,amsfonts,setspace,fullpage,graphicx,xcolor,mathabx,paralist,paralist,nicefrac,comment,booktabs,multirow, float, caption}

\usepackage[multiple]{footmisc}
\usepackage{thmtools}
\usepackage{thm-restate}
\newtheorem{assumption}{Assumption}

\newtheorem{proposition}{Proposition}
\newtheorem{theorem}{Theorem}
\newtheorem{rem}{Remark}
\usepackage{caption}
\usepackage{bbm}
\usepackage[multiple]{footmisc}
\usepackage{centernot}
\usepackage{apptools}
\usepackage[many]{tcolorbox}

\usepackage{amsthm}
\usepackage{amsmath}
\usepackage{sectsty}

\usepackage{apptools}
\usepackage[margin=1.25in]{geometry}
\allowdisplaybreaks

 \newtcolorbox{recipebox}[1]{
  breakable,
  colback=gray!5!white,
  colframe=gray!60!black,
  fonttitle=\bfseries,
  title=#1,
  arc=2mm,
  boxrule=0.8pt,
}

\makeatletter

\DeclareTextSymbolDefault{\textquotedbl}{T1}

\theoremstyle{definition}
\newtheorem{defn}{\protect\definitionname}
\theoremstyle{definition}
 \newtheorem{example}{\protect\examplename}
\usepackage{graphicx}
\usepackage{tikz}
\usepackage{algorithm}
\usepackage{algorithmic}
\usepackage{pgfplots}
\pgfplotsset{compat=1.17}

\providecommand{\definitionname}{Definition}
\providecommand{\examplename}{Example}

\begin{document}
\global\long\def\a{\alpha}%
 
\global\long\def\b{\beta}%
 
\global\long\def\g{\gamma}%
 
\global\long\def\d{\delta}%
 
\global\long\def\e{\epsilon}%
 
\global\long\def\l{\lambda}%
 
\global\long\def\t{\theta}%
 
\global\long\def\o{\omega}%
 
\global\long\def\s{\sigma}%

\global\long\def\G{\Gamma}%
 
\global\long\def\D{\Delta}%
 
\global\long\def\L{\Lambda}%
 
\global\long\def\T{\Theta}%
 
\global\long\def\O{\Omega}%
 
\global\long\def\R{\mathbb{R}}%
 
\global\long\def\N{\mathbb{N}}%
 
\global\long\def\Q{\mathbb{Q}}%
 
\global\long\def\I{\mathbb{I}}%
 
\global\long\def\P{\mathbb{P}}%
 
\global\long\def\E{\mathbb{E}}%
\global\long\def\B{\mathbb{\mathbb{B}}}%
\global\long\def\S{\mathbb{\mathbb{S}}}%
\global\long\def\V{\mathbb{\mathbb{V}}\text{ar}}%
 
\global\long\def\GG{\mathbb{G}}%
\global\long\def\TT{\mathbb{T}}%

\global\long\def\X{{\bf X}}%
\global\long\def\cX{\mathscr{X}}%
 
\global\long\def\cY{\mathcal{Y}}%
 
\global\long\def\cA{\mathscr{A}}%
 
\global\long\def\cB{\mathscr{B}}%
\global\long\def\cF{\mathcal{F}}%
 
\global\long\def\cM{\mathscr{M}}%
\global\long\def\cN{\mathcal{N}}%
\global\long\def\cG{\mathcal{G}}%
\global\long\def\cC{\mathcal{C}}%
\global\long\def\sp{\,}%

\global\long\def\es{\emptyset}%
 
\global\long\def\mc#1{\mathscr{#1}}%
 
\global\long\def\ind{\mathbf{\mathbbm1}}%
\global\long\def\indep{\perp}%

\global\long\def\any{\forall}%
 
\global\long\def\ex{\exists}%
 
\global\long\def\p{\partial}%
 
\global\long\def\cd{\cdot}%
 
\global\long\def\Dif{\nabla}%
 
\global\long\def\imp{\Rightarrow}%
 
\global\long\def\iff{\Leftrightarrow}%

\global\long\def\up{\uparrow}%
 
\global\long\def\down{\downarrow}%
 
\global\long\def\arrow{\rightarrow}%
 
\global\long\def\rlarrow{\leftrightarrow}%
 
\global\long\def\lrarrow{\leftrightarrow}%

\global\long\def\abs#1{\left|#1\right|}%
 
\global\long\def\norm#1{\left\Vert #1\right\Vert }%
 
\global\long\def\rest#1{\left.#1\right|}%

\global\long\def\bracket#1#2{\left\langle #1\middle\vert#2\right\rangle }%
 
\global\long\def\sandvich#1#2#3{\left\langle #1\middle\vert#2\middle\vert#3\right\rangle }%
 
\global\long\def\turd#1{\frac{#1}{3}}%
 
\global\long\def\ellipsis{\textellipsis}%
 
\global\long\def\sand#1{\left\lceil #1\right\vert }%
 
\global\long\def\wich#1{\left\vert #1\right\rfloor }%
 
\global\long\def\sandwich#1#2#3{\left\lceil #1\middle\vert#2\middle\vert#3\right\rfloor }%

\global\long\def\abs#1{\left|#1\right|}%
 
\global\long\def\norm#1{\left\Vert #1\right\Vert }%
 
\global\long\def\rest#1{\left.#1\right|}%
 
\global\long\def\inprod#1{\left\langle #1\right\rangle }%
 
\global\long\def\ol#1{\overline{#1}}%
 
\global\long\def\ul#1{\underline{#1}}%
 
\global\long\def\td#1{\tilde{#1}}%
\global\long\def\bs#1{\boldsymbol{#1}}%

\global\long\def\upto{\nearrow}%
 
\global\long\def\downto{\searrow}%
 
\global\long\def\pto{\overset{p}{\longrightarrow}}%
 
\global\long\def\dto{\overset{d}{\longrightarrow}}%
 
\global\long\def\asto{\overset{a.s.}{\longrightarrow}}%

\setlength{\abovedisplayskip}{6pt} \setlength{\belowdisplayskip}{6pt}
\title{Model Restrictiveness\\ in Functional and Structural Settings\thanks{We thank Isaiah Andrews and Annie Liang for helpful conversations, and NSF grant SES-2417162 for financial support.}}
\author{Drew Fudenberg\thanks{Department of Economics, Massachusetts Institute of Technology, drewf@mit.edu}, Wayne Yuan Gao\thanks{Department of Economics, University of Pennsylvania, 
 waynegao@upenn.edu.}, and Zhiheng You\thanks{Department of Economics, University of Pennsylvania, 
 zhyou@sas.upenn.edu.}}
\maketitle

\begin{abstract}
\noindent  We extend the restrictiveness measure of \citet*{fudenberg2023flexible} to functional and structural econometric settings using Gaussian process priors. We find that models evaluated over continuum domains appear  more restrictive than when evaluated over finite sets of observations. We also extend the restrictiveness framework to structural models with endogeneity, instrumental variables, multiple equilibria, and nonparametric nuisance components. We explain why the choice of discrepancy function is a substantive modeling decision, and why the Rademacher complexity and GMM criterion functions are unsuitable as discrepancies. We further show that restrictiveness equals the normalized limit of the noise-free average-case learning curve. In  applications to 
preferences under risk, and multinomial choice under exogenous and endogenous settings, 
we find that the same models exhibit uniformly higher restrictiveness when evaluated over continuum domains than based on their predictions on finite sets, and that moment restrictions from endogeneity substantially increase restrictiveness and alter model rankings.

\vspace{1em}

\noindent\textbf{Keywords:} restrictiveness, complexity, flexibility, structural model, semiparametric, nonparametric, endogeneity, Bayesian nonparametrics, Gaussian process.

\end{abstract}
\thispagestyle{empty} \vspace{-1em}

\newpage \setcounter{page}{1}

\section{Introduction}
\label{sec:intro}

Economic models are restrictive by design: they rule out patterns in the data that conflict with theoretical priors, and this structure is what makes them useful for interpretation and decision making. But restrictive models can be misspecified, and researchers often lack a quantitative sense of how much structure a given model imposes relative to plausible alternatives. Should an applied IO economist use multinomial logit or mixed logit? How much additional structure does Cumulative Prospect Theory impose relative to Disappointment Aversion?  These questions are important but difficult to answer without a formal measure of model restrictiveness. Because  AI-assisted tools  are making it   easier to estimate vast numbers of candidate models on any given dataset \citep{projectAPE}, restrictiveness measures are  more needed than ever.


\cite*{fudenberg2023flexible} develops a framework for measuring the \emph{restrictiveness} of a (theoretical) model, viewed as a  \emph{prediction rule} $f:\mathcal X\to\mathcal Y$ mapping (exogenous) covariates $X$ to outcomes $Y$. For a model class $\mathcal{F}_{\Theta}$ and a flexible benchmark $\mathcal{F}$, restrictiveness measures the expected approximation error incurred by restricting
to $\mathcal{F}_{\Theta}$ instead of the benchmark, normalized by a baseline.  Restrictiveness is complementary to goodness-of-fit: it captures the structural content of a model  without reference to observed data, whereas fit-based measures such as the completeness index of \cite{FKLM} assess how well a model captures variation in an observed sample. Together, restrictiveness and completeness allow researchers to ask both how much a model rules out and how much of what matters it captures.

Restrictiveness has been adopted across several areas of economics as a tool for disciplined model evaluation. See, for example, \citet{Schwaninger2022}  on other-regarding preferences in dynamic bargaining,  \citet*{ellis2024} on individual-level choice under uncertainty, \citet*{gentzkow2024} on advertising pricing in media markets, and \citet*{ba2025} on over- and under-reaction in belief updating under cognitive constraints. Collectively, these applications illustrate the breadth of the restrictiveness--completeness framework: it provides a unified, quantitative language for comparing models across behavioral economics, industrial organization, and experimental game theory, enabling researchers to make a   principled assessment of the trade-off between empirical accuracy and theoretical discipline. 

A broader recent literature has  engaged with related questions about model evaluation. \citet{de2024bounded} addresses  the closely related question of \emph{permissiveness}, and emphasizes the concern that many bounded rationality theories can accommodate nearly arbitrary choice patterns. \citet*{andrews2022transfer} studies a complementary dimension of model  evaluation: \emph{transfer performance}, or how well an economic model estimated in one 
domain predicts outcomes in another. \citet{montiel2025competing} studies political competition in which candidates announce  competing ``ideologies'', and explicitly models the trade-off between data fit and model simplicity in an equilibrium setting.

This paper extends the  restrictiveness measure to a much broader range of economic models that feature semi/non-parametric and structural ingredients. First, \cite*{fudenberg2023flexible} focuses mainly on a setting where $\mathcal{F}$ is a finite-dimensional compact space, for which the uniform distribution is well-defined and serves as a natural choice for $\lambda_\cF$. This  paper  allows $\mathcal{F}$ to be an infinite-dimensional functional space, and operationalizes restrictiveness based on Bayesian nonparametric priors such as Gaussian Processes and Dirichlet Processes. We also show how to sample from $\lambda_{\mathcal F}$ in settings where $\mathcal F$ imposes shape restrictions such as monotonicity.

Second, while \cite*{fudenberg2023flexible} only discusses restrictiveness of ``reduced-form'' models (i.e. models with explicit restrictions on prediction rules that map exogenous covariates to the outcomes), we extend the notion of restrictiveness to structural economic models, which are important in applied microeconomic areas such as industrial organization and labor economics.  In structural models, researchers typically specify a structural form that implies a reduced-form distribution for $Y$ conditional on $X$. We show how to define restrictiveness for such models by working with the reduced forms and their induced conditional distributions. We treat several important cases: fully parametric models with endogeneity, models defined by moment equalities, models with multiple equilibria, and semiparametric models. In some settings, such as additive-error models with endogenous regressors, we show that the infinite-dimensional optimization over functions can be reduced to a finite-dimensional optimization over the structural parameters, greatly simplifying computation.

We show that the Rademacher complexity and VC dimension  measures encode a specific discrepancy function that is poorly suited for measuring economic model restrictiveness, and relate this to the degeneracy result of \cite{EllisNeff}.
 We argue that the choice of discrepancy should be guided by interpretability and context rather than by existing capacity measures designed for other purposes.
We also explain why GMM criterion functions are not suitable as discrepancy functions for restrictiveness: GMM criterion functions are defined as quadratic forms that capture violations of moment conditions, rather than measures of distance between the model-implied data distribution and the (pseudo-)true distribution. We further show that, in a pure approximation-error sense and in the absence of noise, restrictiveness can be interpreted as the normalized limit of the average-case learning curve, a well-studied concept in machine learning. A central message of our analysis is that $d$ is not fixed: it should be chosen to reflect the kind of approximation error that matters in the application, rather than inherited mechanically from existing complexity measures.

We apply our approach to three economic problems. First, we revisit the analysis of the restrictiveness of Cumulative Prospect Theory (CPT) and Disappointment Aversion (DA) models in \cite*{fudenberg2023flexible}, replacing that paper's finite set of binary lotteries with the entire space of monotone, bounded prediction rules for arbitrary binary lotteries. Second, we study the restrictiveness of standard discrete-choice models in industrial organization (multinomial logit, nested logit, and mixed logit), and  quantify how much of mixed logit's theoretical flexibility is actually realized by the parametric forms used in practice, and how the standard IO toolkit trades off restrictiveness and completeness. Third, we extend the analysis to multinomial choice with endogenous product characteristics. Using BLP-style instruments, we compare the discrete-choice models in an IV setting, where each specification is constrained both by its functional form and by moment conditions.


Computing restrictiveness is practically feasible: whenever a replication package exists to reproduce empirical results, the same computational pipeline can compute restrictiveness, requiring only the specification of an elible set $\cF$, a discrepancy function $d$, and a Bayesian nonparametric prior as the evaluation distribution $\lambda_{\cF}$. As we argue in this paper, these are
precisely the objects that researchers should contemplate depending on their
specific context.

\subsection*{Related literature}

 \cite*{FKLM}'s    \emph{completeness} measures the fraction of the predictable variation in an outcome that is captured by a given model, relative to a flexible statistical benchmark. It is implemented using machine-learning methods to approximate the best possible prediction given observables, and has been applied to models of choice under risk and other domains. \citet*{fudenberg2023flexible} proposes a  measure of the \emph{restrictiveness} of a model, which is also the object of interest in our current paper. \citet*{fudenberg2023flexible}  also proposes  evaluating models by  comparing  their  restrictiveness together with their completeness, which produces an empirical Pareto frontier that trades off fit on real data against the regularities ruled out by the model. Our contribution here is to: (1) develop a fully nonparametric, population-level notion of restrictiveness that applies to settings with continuum domains, (2) adapt and generalize the notion of restrictiveness to structural econometric models with endogeneity and multiple equilibria, and (3) articulate the choice of discrepancy function as a substantive modeling decision, especially in relationship with some existing concepts in econometrics and machine learning.

 \cite{EllisNeff} studies the connection between restrictiveness and Rademacher complexity in a binary classification setting. It shows that, for a particular choice of eligible set and a  discrepancy inspired by  Rademacher complexity, a normalized version of our restrictiveness index is an affine transformation of the limiting Rademacher complexity of the model class. In that special case, all finite-dimensional falsifiable models appear ``fully restrictive'' in the limit. Section~\ref{subsec:rad} interprets the degeneracy as a critique of this particular discrepancy rather than of restrictiveness itself: when $d$ is chosen to encode an existing capacity measure, restrictiveness inherits that measure's asymptotic behavior and limitations. In addition, \cite{EllisNeff} also propose a finite-sample version of discrepancy function $d_n$ (as a sample average using the $n$ data points on the features), which can be convenient in a variety of empirical settings. \cite{EllisNeff} does not develop the asymptotic theory of $d_n$, which is required to construct confidence intervals for restrictiveness that reflect the finite-sample randomness of the covariates $X_i$; we provide that here. %
 ~
 
 \citet*{crawford2025demand} uses Gaussian process priors to represent demand functions satisfying the restrictions of rational choice in consumer settings. 
 It uses a binary (0-1) distance metric following \citet*{Selten} to evaluate whether observed demand can be rationalized \emph{exactly}, as opposed to our use of a continuous discrepancy function.  Continuous discrepancy functions  help circumvent the degeneracy problem for conditional rationality that \citet*{crawford2025demand} encounters when conditioning variables are continuously distributed.\footnote{This degeneracy issue is related to the broader difficulty arising from 0-1 loss functions in VC dimension and Rademacher complexity analysis, as discussed in \citet*{EllisNeff}.} \citet{HansenJagannathan1997} measure misspecification with least-squares distance, as in our discussion of GMM estimation in Section \ref{subsec:gmm-discrepancy}. As we explain there, however, this criterion depends on the 
  available instruments and the observed data in ways that restrictiveness does not.



More broadly, our framework is related to the statistical learning theory literature on model complexity and capacity, where notions such as VC dimension, Rademacher complexity, and  metric entropy have been used to establish upper bounds on generalization errors under empirical risk minimization. Although conceptually related, our restrictiveness measure is an \emph{average-case} approximation measure, defined with respect to a context-specific user-chosen discrepancy function (under an evaluation distribution on an economically meaningful eligible set), and our restrictiveness measure produces an interpretable number in the unit interval rather than rate bounds. See, also, Section 3.4 of \cite*{fudenberg2023flexible} for related discussions and references. In this paper, we further stress that the choice of the discrepancy function should be treated as an important modeling decision for restrictiveness to be interpretable in a context-specific manner.

Section~\ref{sec:framework} describes how to define and compute restrictiveness in functional settings (with continuous feature space). Section~\ref{sec:struct} discusses how to define and compute the restrictiveness of structural econometric models with endogeneity, multiple equilibria, or semiparametric specifications. Section~\ref{sec:discuss_related} relates and compares our restrictiveness measure to a variety of related but different existing concepts in statistics and econometrics. We consider three concrete applications in Section~\ref{sec:apps}, and conclude in Section~\ref{sec:conc}.

\section{Restrictiveness in Functional Settings}\label{sec:framework}

\subsection{Setup}\label{subsec:Setup}

Our starting point is a random sample $\left(X,Y\right)$,
where $X$ is a \emph{covariate vector} and $Y\in\mathcal{Y}$ is
an \emph{outcome} variable. We use $\mathcal{X}$ to denote the support of the covariates, and $P_{X}$ to denote the marginal distribution of $X$. Our basic setup follows that of \citet*[``FGL'' thereafter]{fudenberg2023flexible}  with the exception that in this paper we do \emph{not} restrict $\mathcal{X}$ to be finite. Instead, we assume that $\mathcal{X}$ is a compact subset of $\R^{d}$, and $P_{X}$ is either chosen by the researcher, known a priori, or estimated from data. A \emph{prediction rule} is a function $f:\mathcal{X}\rightarrow\mathcal{Y}$.
We denote the set of all such functions by $\ol{\mathcal{F}}\equiv\mathcal{Y}^{|\mathcal{X}|}$, which is assumed to be a well-defined metric space.

We take as a primitive a \emph{discrepancy} function $d:\ol{\mathcal{F}}\times\ol{\mathcal{F}}\rightarrow\mathbb{R}_{+}$
where $d(f,f')$ measures how different the two prediction rules $f$
and $f'$ are. For example, if $Y$ is a vector in $\mathbb{R}^{n}$,
a natural choice for $d$ is the expected mean-squared distance between
the predictions (with respect to $P_{X}$), and if $Y$ is a distribution
a natural choice for $d$ is the expected KL-divergence (again with
respect to $P_{X}$). We allow for functions $d$ that are not distances
(such as KL-divergence), but require that $d(f,f')=0$ if and only
if $f=f'$. We also assume that $d$ is uniformly bounded, and that
$d(\cd,f)$ and $d(f,\cd)$ are continuous almost everywhere for each
$f\in\ol{\cF}$.

We will evaluate the restrictiveness of a specific model class $\mathcal{F}_{\Theta}:=\{f_{\theta}\}_{\theta\in\Theta}\subseteq\overline{\mathcal{F}}$,
where the prediction rules $f_{\theta}$ depend continuously on a
parameter $\theta$ from a parameter set $\Theta$, which can be finite
or infinite dimensional. Restrictiveness is defined relative to a
compact set of ``eligible'' rules $\mathcal{F}\subseteq\ol{\mathcal{F}}$
that reflect any constraints the model is known to have. For example,
if a model is known to imply that choices respect first-order stochastic
dominance, we can define $\mathcal{F}$ to be all rules with this
property, and measure the model's additional restrictiveness beyond
this. In general, the eligible set $\mathcal{F}$ consists of all
prediction rules that satisfy user-specified background constraints,
where the special case of $\mathcal{F}=\ol{\mathcal{F}}$ corresponds
to the question of whether $\mathcal{F}_{\Theta}$ imposes any restrictions
at all.

Let $\lambda_{\mathcal{F}}$ denote a chosen evaluation distribution on $\mathcal{F}$. We define the restrictiveness of a model to be its expected discrepancy to a prediction rule $f$ randomly drawn from $\l_\cF$, normalized with respect to the expected discrepancy of a baseline 
prediction rule $\cF_{\text{base}}$. This 
baseline  prediction rule is chosen to suit the setting, and we interpret its performance as a lower bound that any sensible model should outperform: for example, in some scenarios a natural baseline is the constant model $\cF_{\text{base}} = \{c:c\in\R\}$, while in others it may be a singleton set $\mathcal{F}_{\text {base }}=\left\{f_{\theta_0}\right\}$, where $f_{\theta_0}$ is the model in $\mathcal{F}_{\Theta}$ evaluated at baseline parameter $\theta_0$.

\begin{assumption}[Nondegeneracy]
    $\mathbb{E}_{\lambda_{\mathcal{F}}}[d(\cF_{\text{base}},f)] >0$.
\end{assumption}

\begin{defn}[Restrictiveness]\label{def:rest}
The restrictiveness of model $\mathcal{F}_{\Theta}$ with respect
to eligible set $\mathcal{F}$ is 
\begin{equation}
r(\mathcal{F}_{\Theta};\mathcal{F},d)=\frac{\mathbb{E}_{\lambda_{\mathcal{F}}}[d(\mathcal{F}_{\Theta},f)]}{\mathbb{E}_{\lambda_{\mathcal{F}}}[d(\cF_{\text{base}},f)]}\label{eq:r}
\end{equation}
and 
\[
d(\mathcal{F}_{\Theta},f):=\inf_{f_{\theta}\in\mathcal{F}_{\Theta}}d(f_{\theta},f).
\]
\end{defn}

Note that equation \eqref{eq:r} implies that restrictiveness is invariant to affine transformation of the discrepancy function $d$ by the linearity of the expectation operator. In fact, restrictiveness is unitless, and lies within the unit interval $[0,1]$ when $\cF_\T$ nests $\cF_\mathrm{base}$ as a special case.

\subsection{Evaluation  and Numerical  Implementation}\label{subsec:Implement}

Computing restrictiveness requires choosing an evaluation distribution  $\l_{\mathcal{F}}$ over an infinite-dimensional functional space and sampling  from it. This marks the key difference between this paper and FGL, which focuses on sampling from a distribution over a finite-dimensional space. Sampling from a distribution over an infinite-dimensional functional space has been studied and implemented with Bayesian nonparametric methods, which often require sampling from an infinite-dimensional ``prior'' distribution. Specifically, the Gaussian process, Dirichlet process, and their mixtures are commonly used to define such priors, and they can be configured in flexible ways for various problem setups. 

~

{\noindent \textbf{Gaussian process (GP).}\ } GP is a standard tool in Bayesian nonparametric estimation for placing priors over functional spaces. The formal definition of a GP is: a collection of random variables $\{f(x): x \in \mathcal{X}\}$ such that for any finite collection of input points $x_1, \ldots, x_n \in \mathcal{X}$, the joint distribution $\left(f\left(x_1\right), \ldots, f\left(x_n\right)\right)^{\top}$ follows a multivariate normal distribution:
$$
\left(f\left(x_1\right), \ldots, f\left(x_n\right)\right)^{\top} \sim \mathcal{N}(\mu, K),
$$
where $\mu_i=\mathbb{E}\left[f\left(x_i\right)\right]$ and $K_{i j}=\operatorname{Cov}\left(f\left(x_i\right), f\left(x_j\right)\right)$.
We denote this as:
$$
f \sim \mathcal{G} \mathcal{P}\left(m(x), K\left(x, x^{\prime}\right)\right).
$$

A crucial modeling choice is the covariance (kernel) function, which determines smoothness, stationarity, and other structural properties of the prior. Common kernel families include  squared exponential, Mat\'{e}rn, $\gamma$-exponential, rational quadratic, and dot-product (see Chapter 4 of \cite{williams2006gaussian} for an overview and properties of each class).
In our applications, we consider a Mat\'{e}rn $3 / 2$ kernel
\begin{equation*}
   K_{3 / 2}(x, x')=\sigma^2\left(1+\frac{\sqrt{3}r}{l}\right) \exp \left(-\frac{\sqrt{3}r}{l}\right),
\label{eq:matern.kernel}
\end{equation*}
where $r:=||x-x'||$ is Euclidean distance, $\sigma^2$ is the variance of the Gaussian process, and $l$ is the length scale. The Mat\'{e}rn $3 / 2$ kernel yields functions that are mean-square differentiable, and balances  smoothness with flexibility.

~

{\noindent \textbf{Sampling GP with Monotonicity Constraints}.\ }
We may want to sample functions that satisfy particular shape restrictions, e.g., boundedness, monotonicity, or convexity. \cite*{swiler2020survey} surveys common strategies for  incorporating constraints within Gaussian process regression. Our purpose is to sample from a constrained GP, which is equivalent to constrained GP regression without updating the prior.
In general, there are two main categories - one enforces the constraints to hold globally through, for example,  transforming the output of GP \citep*{snelson2003warped} or imposing constraints on the coefficients of the spline functions \citep{maatouk2017gaussian, shively2009bayesian}; the other relaxes the global constraints to constraints at a finite set of ``virtual'' points \citep{riihimaki2010gaussian}.
In our later applications, the sampling algorithms we employ fall into the first category. We provide the details in Online Appendix~\ref{sec:applic.detail}.

\subsection{Estimation and Inference}\label{subsec:EstInf}
We distinguish two cases for estimation and inference. 
First, if the discrepancy function $d(f_\theta,f)$ is known in closed form and does not require estimation from data, the restrictiveness estimator is
\[
\widehat r_M
  =
  \frac{\frac{1}{M}\sum_{m=1}^M d(\mathcal F_\Theta,f_m)}
       {\frac{1}{M}\sum_{m=1}^M d(\cF_{\mathrm{base}},f_m)},
\quad 
f_m \sim \lambda_{\mathcal F}.
\]
In this case sampling variation is the only source of uncertainty, and the standard-error formula is exactly the one provided in \cite*{fudenberg2023flexible}.
In principle, the standard error can be made arbitrarily small by taking $M$ sufficiently large, but in practice $M$ is constrained by computational resources, since each draw requires evaluating the discrepancy, which in turn involves solving the associated optimization problem.

Second, in many applications the discrepancy must be estimated from an i.i.d.\ sample $S_n=(X_1,\dots,X_n)$, as in \cite*{EllisNeff}. Suppose
\[
d(f_\theta,f)=\mathbb E[g(X,\theta;f)],
\]
with empirical analog
\[
d_n(f_\theta,f)=\frac{1}{n}\sum_{i=1}^n g(X_i,\theta;f),
\quad
d_n(\mathcal F_\Theta,f)=\inf_{\theta\in\Theta} d_n(f_\theta,f).
\]
The resulting estimator is
\[
\widehat r_{n,M}
  =
  \frac{\frac{1}{M}\sum_{m=1}^M d_n(\mathcal F_\Theta,f_m)}
       {\frac{1}{M}\sum_{m=1}^M d_n(\cF_{\mathrm{base}},f_m)} .
\]
Here, inference must account for both Monte Carlo uncertainty and sampling error in $d_n$. 
We derive the corresponding asymptotic distribution of $\widehat r_{n,M}$ for fixed $M$ as $n\to\infty$ and construct a feasible variance estimator that incorporates both sources of variability. The full procedure is provided in Appendix~\ref{sec:EstimatedDisc}.

In our empirical applications, the first example falls into the known-discrepancy case, whereas the second and third examples require estimation of the discrepancy and therefore follow the second procedure.

\section{Restrictiveness for Structural Models}\label{sec:struct}

So far we have treated $\cF_\Theta$ as an abstract class of prediction rules. This section extends restrictiveness to structural models in three scenarios, each requiring different treatment: (1) models with explicit reduced forms, (2) partial-equilibrium models identified via instruments, and (3) models with incomplete structure, e.g.multiple equilibria or semiparametric components.

\subsection{Generic Structural Models with Endogeneity}

We start from a generic structural equation model with potentially endogenous covariates. For simplicity, write the structural system as
\begin{equation}\label{eq:struc-form}Y_{i}=f_{\t_{0}}\left(Y_{i},X_{i},\e_{i}\right),\quad X_{i}\indep\e_{i},
\end{equation}
for some known mapping $f_\t$, a random element $\e_i$ with a known distribution (without loss of generality)\footnote{If the distribution of $\e_i$ is  unknown, it can be absorbed into $\t$.}, and an unknown parameter $\t\in\T$. Here $Y_i$ may be a vector that collects all variables endogenously generated under model \eqref{eq:struc-form}, including both the outcome variable and any endogenous covariates. In contrast, $X_i$ collects all exogenous covariates that are independent of the structural errors $\e_i$. 

To fix ideas, we will repeatedly refer to the following simultaneous equation model of demand and supply as a working example.

\begin{example}[Demand and Supply]
 Consider the classic linear demand and supply simultaneous equation model:
\begin{align*}
Q_{i} & =\a_{1}+\b_{1}P_{i}+\g_{1}X_{i1}+\e_{i1}\\
P_{i} & =\a_{2}+\b_{2}Q_{i}+\g_{2}X_{i2}+\e_{i2}
\end{align*}
where $Q_{i}$ is quantity, $P_{i}$ is price, $X_{i1}$ and $X_{i2}$
are exogenous demand and supply shifters. Writing $Y_{i}=\left(Q_{i},P_{i}\right)'$, $\t_{0}:=\left(\a,\b,\g\right)'$,
the structural form can be summarized as
\begin{equation}\label{eq:PQ_SF}
Y_{i}=BY_{i}+\a+\G X_{i}+\e_{i}=:f_{\t_{0}}\left(Y_{i},X_{i},\e_{i}\right)    
\end{equation}
with
\[
B:=\left(\begin{array}{cc}
0 & \b_{1}\\
\b_{2} & 0
\end{array}\right),\ \G:=\left(\begin{array}{cc}
\g_{1} & 0\\
0 & \g_{2}
\end{array}\right).
\]
\end{example}

~

We first describe how to define restrictiveness when the structural form admits a reduced-form representation.

\subsubsection{Restrictiveness via Reduced-Form Representation}

\begin{assumption}[Reduced-Form Representation]\label{assu:RF_rep}
Assume that the structural equation model \eqref{eq:struc-form} admits the following reduced form representation
\begin{equation}\label{eq:reduce-form}
Y_{i}=f_{\t_{0}}\left(X_{i},\e_{i}\right)
\end{equation}
for some known mapping $f_\t$ and parameter $\t_0 \in \T$.    
\end{assumption}

Let $\mathcal{Y}$ be the space of distributions on the range space of $f_\t$, and let $\mathcal{F}_{RF}$ be a given eligible class of mappings that associates each covariate value $x\in\mathcal{X}$ with a conditional distribution $P_{Y|X=x} \in \mathcal{Y}$. For each structural parameter $\theta$ and each admissible distribution of $(X,\varepsilon)$, the reduced form \eqref{eq:reduce-form} induces a conditional distribution $P_{Y|X}(f_\t)$. Hence, the reduced-form model class associated with the structural model is given by
\[
  \mathcal{F}_{\T,RF}
  := \big\{ P_{Y|X}(f_\t) : \t \in \T \big\}
  \subseteq \mathcal{F}_{RF}.
\]
The above coincides with the standard definition of a statistical (or reduced-form econometric) model as a constrained class of data generating processes (DGPs) with the marginal distribution of the exogenous covariates $X$ held fixed or unrestricted.

Given a primitive discrepancy function $d$ on (conditional) distributions, such as KL-divergence or Wasserstein distance, we may define the reduced-form discrepancy function $d_{RF}$ induced by $d$:
\begin{equation} \label{eq:d_RF}
d_{RF}\left(f_{\t},g\right):=\left\{ d\left(P_{\rest YX}\left(f_{\t}\right),g\right)\right\},\quad \forall g\in{\cF_{RF}},
\end{equation}
We then define the restrictiveness $r$ according to Definition \ref{def:rest} based on the discrepancy function $d_{RF}$ above.

\begin{defn}[Restrictiveness via Reduced-Form Representation]\label{def:rest_RF} Under Assumption \ref{assu:RF_rep}, the restrictiveness of model \eqref{eq:struc-form} is defined as 
$$r := r\left(\cF_{\T,RF};\cF_{RF},d_{RF}\right),$$
i.e., the restrictiveness of reduced-form model $\cF_{\T,RF}$ under eligible set $\cF_{RF}$ based on the discrepancy function $d_{RF}$ according to Definition \ref{def:rest}. 
\end{defn}

Given that $\mathcal{F}_{\T,RF}$ as the set of reduced form conditional distributions $P_{Y|X}$ implied by the structural model, we interpret the restrictiveness $r$ as a measure of how much the structural model \eqref{eq:struc-form} restricts the space of admissible reduced forms \eqref{eq:reduce-form}, relative to the eligible set $\mathcal{F}$.

\begin{rem}[Reduced-Form Additivity]\label{rem:RF_Add}
When the reduced-form model \eqref{eq:reduce-form} has an additive-error structure of the form
\begin{equation}\label{eq:reduce-form-add}
Y_{i}=f_{\t_{0}}\left(X_{i}\right) + \e_{i}, \quad \E[\e_i|X_i] =0,
\end{equation}
researchers may only care about the mapping $f_\t(x)$. 
In this case, the definition of restrictiveness  admits a natural simplification. We take $\mathcal{Y}
$ to be the support of the $Y_i$, $\mathcal{F}$  to be an  eligible set of mappings from $\mathcal{X}$ to 
$\mathcal{Y}$, and $d$ to be the mean squared $L_{2,X}$  As noted in FGL Appendix E, this discrepancy function pairs naturally with mean squared error, which is typically used in the definition of completeness. 
\end{rem}

\setcounter{example}{0}
\begin{example}[Demand and Supply: Continued] A standard argument shows that the structural form \eqref{eq:PQ_SF} yields the reduced form
\[
Y_{i} =:f_{\t_{0}}\left(X_{i},\e_{i}\right) \equiv \ol{f}_{\t_0}(X_i)+u_{\t_0}(\e_i)
\]
where $\ol{f}_{\t}(x):=(I-B)^{-1}(\a+\G x)$ and $u_\t(\e):=(I-B)^{-1}\e$. 

This example features additive errors, and thus we may define restrictiveness  using the model class of conditional means function $\cF_\T:= \{\ol{f}_\t:\t\in\T\}$, a constant baseline model $\cF_{\mathrm{base}} = \{c:c\in\R^2\}$,  a given eligible set $\cF$ of mappings from demand and supply shifters to conditional expectations of prices and quantities, and the mean-squared Euclidean distance\footnote{One may use other distances such as absolute distance $\E_X\left[|f_1(X)-g_1(X)|+|f_2(X)-g_2(X)|\right]$.}
for $f,g\in\cF$. The above then induces, writing $f\equiv(f_1,f_2)$, $d_{RF}(f,g):=\E_X\left[\norm{f(X)-g(X)}^2\right]$
for $f,g\in\cF$, 
\begin{align}\label{eq:d_on_X}
    d_{RF}(\cF_\T,f) = \inf_{\t \in \T} \E\left[\norm{\ol{f}_\t(X) - f(X)}^2\right],
\end{align}
and in particular $d(\cF_{\mathrm{base}},f)=\text{Var}(f_1(X))+\text{Var}(f_2(X)).$ 
Given an evaluation distribution $\l_\cF$, the restrictiveness is given by 
\begin{equation}\label{eq:r_DS_RF}
   r= \frac{\E_{f \sim \l_{\cF}} \left[
        \inf_{\t \in \T}\E_X\left[\norm{\ol{f}_\t(X) - f(X)}^2\right]
        \right]
  }{
    \E_{f \sim \lambda_{\cF}} \left[\text{Var}(f_1(X))+\text{Var}(f_2(X))\right]
  }.   
\end{equation}

\end{example}

\subsubsection{Restrictiveness under Structural-Form Error Additivity}

Many econometric models are incomplete, in the sense that the structural form specification may not admit a reduced-form representation as in Assumption \ref{assu:RF_rep}. One class of such models are models with multiple equilibria, in which the reduced form takes the form of a correspondence rather than a function. We treat this scenario in Section \ref{subsec:MultiEq}, and show how the definition of restrictiveness can be adapted accordingly.
Another important class of models without explicit reduced-form representations are partial equilibrium models identified via instrumental variables (IVs): for example, a demand model identified using exogenous demand and supply shifters without an explicit specification of the supply model. Such partial equilibrium models are prevalent in applied work, and often impose additivity of the structural error as in \eqref{eq:SF_ErrorAdd} below:

\begin{assumption}[Structural-Form Error Additivity]\label{assu:SF_ErrorAdd}
Assume that the structural equation model \eqref{eq:struc-form} admits the following outcome representation
\begin{equation}\label{eq:SF_ErrorAdd}
Y_{o,i}= \Lambda\left(f_{\t_{0}}\left(Y_{c,i},X_{i}\right) + \e_{i}\right),
\end{equation}
for some known mappings $\Lambda$ and $f_\t$ up to the parameter $\t_0 \in \T$, with $Y_{o,i}$ denoting the outcome variable and $Y_{c,i}$ denoting the endogenous covariates. 
\end{assumption}

\setcounter{example}{0}
\begin{example}[Demand and Supply: Continued] To illustrate Assumption \ref{assu:SF_ErrorAdd}, consider the demand model equation without the supply-side specification:
\begin{equation}\label{eq:OnlyDemand}
    Q_i = \a_0 +\b_0P_i+\g_0X_{i1}+\e_{i},\quad \E[\rest{\e_{i}}X_i] = 0,
\end{equation}
so that the quantity $Q_i$ corresponds to the outcome variable $Y_{o,i}$, the price $P_i$ to the endogenous covariate $Y_{c,i}$, and $X_i = (X_{i1},X_{i2})$ to the exogenous demand and supply shifters (IVs). Under the exogeneity condition $\E[\e_i|X_i]=0$ and the standard relevance condition, the demand parameters $\t_0:=(\a_0,\b_0,\g_0)$ can be identified without the supply equation specification. Here we cannot invert the structural demand model \eqref{eq:OnlyDemand} to obtain a reduced-form representation as in Assumption \ref{assu:RF_rep}, precisely because there is no explicit supply-side specification.
\end{example}

We now show how restrictiveness can be defined under Assumption \ref{assu:SF_ErrorAdd}. Let $\mathcal{Y}_o^{\text{pre-}g}$ be the domain of $\Lambda$, and let $\mathcal{F}:\mathcal{Y}_c\times\mathcal{X}\to\mathcal{Y}_o^{\text{pre-}\Lambda}$ denote the space of eligible mappings, and let $\mathcal{F}_{\Theta}\subseteq\mathcal{F}$ denote the set of mappings consistent with model \eqref{eq:SF_ErrorAdd}, i.e.,
\begin{align*}
    \mathcal{F}_{\Theta} & :=\left\{ f_{\theta}\left(y_c,x\right)+\E\left[\epsilon|x\right]:\ \E\left[\epsilon|x\right]=0,\theta\in\Theta\right\} \\
     & =\left\{ f_{\theta}\left(y_c,x\right)+f\left(y_c,x\right)-\E\left[f\left(y_c,x\right)|x\right]:\ \theta\in\Theta,f\in\mathcal{F}\right\}.
\end{align*}
Note that even though $f_{\theta}$ is parametric, $\mathcal{F}_{\Theta}$ is an infinite-dimensional functional space, since $\E\left[f\left(y_c,x\right)|x\right]$ is a nonparametric function given that the distribution of $\epsilon$ is unrestricted beyond the exogeneity condition $\E\left[\epsilon_{i}|x\right]=0$.

We then push forward $\mathcal{F}_{\Theta}$ under $\Lambda$ to the outcome space $\cY_o$ and define
$$\mathcal{F}_{\Theta}^\Lambda:=\Lambda(\cF_\T),\quad \mathcal{F}^\Lambda:=\Lambda(\cF),$$
and define restrictiveness based on any given discrepancy function on $\mathcal{F}^\Lambda$.

\begin{defn}[Restrictiveness under Structural-Form Error Additivity] Under Assumption \ref{assu:SF_ErrorAdd}, let $d$ be any given discrepancy function on $\mathcal{F}^\Lambda$. We define the restrictiveness of model \eqref{eq:SF_ErrorAdd} as
$$r := r\left(\mathcal{F}^\Lambda_\T;\mathcal{F}^\Lambda,d\right),$$
i.e., the restrictiveness of pushed-forward model $\cF^\Lambda_\T$ under eligible set $\cF^\Lambda$ based on the discrepancy function $d$ according to Definition \ref{def:rest}. 
\end{defn}

In typical scenarios with scalar-valued $Y_o$, we can set the discrepancy function $d$ as the mean-squared distance on the outcome space $\mathcal{Y}_o$,
\begin{align*}
    d\left(f^\Lambda,g^\Lambda\right) & :=\E_{P_{X,Z}}\left[\left(f^\Lambda\left(Y_{c,i},X_{i}\right)-g^\Lambda\left(Y_{c,i},X_{i}\right)\right)^{2}\right],\quad \forall f^\Lambda, g^\Lambda \in \cF^\Lambda.
\end{align*}\smallskip

\begin{rem}[Simplification under Linearity of $\Lambda$]\label{rem:SF_Add}
When $\Lambda$ is linear, we can simplify the definition of restrictiveness $r$ in a similar manner as in Remark \ref{rem:RF_Add}, since the expectation $\E$ can be moved inside of $\Lambda.$ In addition, when $\cF$ and $\cF_\T$ are linear spaces (as they typically are), we also have $\cF^\Lambda =\cF$ and $\cF^\Lambda_\T=\cF_\T$.
\end{rem}

\begin{proposition}
\label{prop:d_equiv_additive} Under \ref{assu:SF_ErrorAdd}, suppose that $\cF_\T,\cF$ are linear spaces, and $\Lambda$ is linear. Let $d$ be the mean-squared distance on $\cF$. Then, for any $g \in \mathcal{F}$,
    \[
    d\left(\mathcal{F}_{\Theta},g\right)=\inf_{\theta\in\Theta}\overline{d}\left(\overline{f}_{\theta},\overline{g}\right),
    \]
    where $\overline{f}_\theta(x) := \E[m_\theta(Y_c, x)|x]$, $\overline{g}(x) := \E\left[g\left(Y_c,x\right)|x\right]$, and
    \begin{equation}
\overline{d}\left(\overline{f},\overline{g}\right) :=\E_{P_X}\left[\left(\overline{f}\left(X_{i}\right)-\overline{g}\left(X_i\right)\right)^{2}\right].      
    \end{equation}
\end{proposition}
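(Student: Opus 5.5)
The plan is to reduce the infimum over the nonparametric class $\cF_\T$ to an infimum over $\t$ alone, by first minimizing over the nuisance component $h$ for each fixed $\t$. By Remark \ref{rem:SF_Add}, linearity of $\Lambda$ and of the spaces gives $\cF^\Lambda_\T=\cF_\T$ and $\cF^\Lambda=\cF$, so we work directly with $\cF_\T$. Writing $m_\t\equiv f_\t$, every element of $\cF_\T$ has the form
\[
m_\t(y_c,x)+h(y_c,x)-\E[h(Y_c,x)\mid x],\qquad \t\in\T,\ h\in\cF .
\]
Hence
\[
d(\cF_\T,g)=\inf_{\t\in\T}\ \inf_{h\in\cF}\E\Big[\big(m_\t(Y_c,X)+h(Y_c,X)-\E[h\mid X]-g(Y_c,X)\big)^2\Big].
\]

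For fixed $\t$, set $\Delta:=g-m_\t$ and split it as $\Delta=(\Delta-\E[\Delta\mid X])+\E[\Delta\mid X]$. Also write $\tilde h:=h-\E[h\mid X]$, which satisfies $\E[\tilde h\mid X]=0$. The integrand becomes
\[
\big(\tilde h-(\Delta-\E[\Delta\mid X])\big)-\E[\Delta\mid X].
\]
Expanding the square, the cross term is $\E\big[\E[\Delta\mid X]\cdot \E[\tilde h-\Delta+\E[\Delta\mid X]\mid X]\big]=0$ by iterated expectations, since the inner conditional mean vanishes. This leaves
\[
\E\big[(\tilde h-\Delta+\E[\Delta\mid X])^2\big]+\E\big[(\E[\Delta\mid X])^2\big].
\]

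The second term equals $\overline d(\overline f_\t,\overline g)$, because $\E[\Delta\mid X]=\overline g(X)-\overline f_\t(X)$. The first term is nonnegative, so the inner infimum is at least $\overline d(\overline f_\t,\overline g)$. It attains zero at $h=\Delta=g-m_\t$, which lies in $\cF$ because $\cF$ is linear and contains both $g$ and $m_\t\in\cF_\T\subseteq\cF$. Hence the inner infimum equals $\overline d(\overline f_\t,\overline g)$, and taking the infimum over $\t$ gives the claim.

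The main obstacle is step-level bookkeeping rather than depth: the conditional expectations must be well defined and square integrable, which follows from boundedness of $d$ and $\cF\subseteq L_2$. We also need $m_\t\in\cF$ so that $h=g-m_\t$ is admissible. This holds because the $\t$-part sits inside $\cF_\T\subseteq\cF$, taking $h=0$, and I would state this explicitly, interpreting the $\E[\cdot\mid x]$ notation as conditioning on $X=x$ under $P_{Y_c\mid X}$.
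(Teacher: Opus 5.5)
Your proof is correct and takes essentially the same route as the paper's: split into the conditional-mean part and a conditionally mean-zero remainder, use iterated expectations to remove the cross term, and choose $h=g-m_\theta$ so the remainder vanishes. Your bookkeeping is slightly tighter than the paper's. The paper splits the infimum over $(\theta,h)$ into separate infima even though the remainder depends on $\theta$, and it never checks explicitly that $g-m_\theta\in\cF$.
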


\setcounter{example}{0}
\begin{example}[Demand and Supply: Continued] Applying Proposition \ref{prop:d_equiv_additive} to the demand equation model in \eqref{eq:OnlyDemand}, we note that $\overline{d}$ is effectively a version of $d_{RF}$ in \eqref{eq:d_on_X} defined on the demand component only. In particular, $d(\cF_{\mathrm{base}},\ol{f})=\text{Var}(\ol{f}(X)))$ and thus 
\[
  r= \frac{\E_{f \sim \l_{\cF}} \left[
        \inf_{\t \in \T}\E_X\left[\left(\ol{f}_\t(X) - f(X)\right)^2\right]
        \right]
  }{
    \E_{f \sim \lambda_{\cF}} \left[\text{Var}(\ol{f}(X))\right]
  },
\]
a ``subvector analog'' of \eqref{eq:r_DS_RF}.
\end{example}

\subsection{Structural Model with Multiple Equilibria} \label{subsec:MultiEq}

Some structural models may not produce a unique equilibrium, so that the reduced form of the model cannot be written as a regular function. Instead, the reduced form becomes a correspondence:
\[
\ol f_{\t_{0}}:\mathcal{X}\times\mathcal{E}\to2^{\mathcal{Y}}.
\]
which maps the covariates and errors $(X_i,\e_i)$ to the set of equilibrium outcomes $Y_i$ under parameter $\t_0$. In such cases the structural model does not pin down a single conditional distribution of $Y_i$ given $X_i$: for a given $(x,\e)$ there may be multiple admissible equilibria $y$.

To define restrictiveness in this context, let 
$$\ol{\cF}_\T :=\left\{f\in\cF:\ f(x,\e)\in\ol{f}_\t(x,\e)\ \forall (x,\e),\ \t\in \T\right\}$$
denote the space of selection mappings. Given a discrepancy function $d$ between two conditional
distributions of $Y_{i}$ conditional on $X_{i}$, and define
\begin{equation}\label{eq:d_incomp}
    d\left(\ol f_{\t},g\right):=\inf_{f\in\ol{\cF}_\T}d\left(f,g\right)
\end{equation}
This is the best approximation error when the  equilibrium selection rule $s$ is chosen optimally for each pseudo-truth $g$.  Notice that we still just need to simulate $g$ from
$\l_{\mathcal{F}}$, a distribution of conditional distributions of $Y_{i}$ given $X_{i}$; 
there is no need to simulate the reduced-form correspondences.

\begin{rem}[Equilibrium Selection and Model Completion]

The definition in \eqref{eq:d_incomp}  implicitly assumes that the model $f$ has incorporated all relevant restrictions. Thus any $f \in \ol{\cF}_\T$ is consistent with all stated model restrictions,  which is why the definition uses the infimum approximation error. In  some scenarios, one may want to  impose an  equilibrium selection restriction that picks a unique equilibrium. This additional equilibrium selection restriction effectively converts an incomplete model into a complete one and makes the minimization in equation \eqref{eq:d_incomp} trivial: in other words, restrictiveness can be defined in the same way as in Definition \ref{def:rest_RF}.
\end{rem}

We illustrate the above with the following example.
\begin{example}[Entry Game]
 Consider the two-firm entry game in \cite{tamer2003incomplete}:
\begin{align*}
y_{i1} & =\ind\left\{ \a_{1}+\b_{1}y_{i2}+\g_{1}x_{i1}\geq\e_{i1}\right\} \\
y_{i2} & =\ind\left\{ \a_{2}+\b_{2}y_{i1}+\g_{2}x_{i2}\geq\e_{i2}\right\} 
\end{align*}
where $\b\leq0$ to capture strategic substitutability between the two firms. The reduced form of the equilibrium of this entry game features multiple
equilibrium. Writing 
\begin{align*}
\pi_{1}\left(y_{i2}\right) & :=\a_{1}+\b_{1}y_{i2}+\g_{1}x_{i1},\\
\pi_{2}\left(y_{i1}\right) & :=\a_{2}+\b_{2}y_{i1}+\g_{2}x_{i2},
\end{align*}
we have
\[
y_{i}=\ol f_{\t_{0}}\left(x_{i},\e_{i}\right):=\begin{cases}
\left(0,0\right), & \e_{i1}>\pi_{1}\left(0\right),\e_{i2}>\pi_{2}\left(0\right)\\
\left(1,1\right), & \e_{i1}\leq\pi_{1}\left(1\right),\e_{i2}\leq\pi_{2}\left(1\right)\\
\left(1,0\right), & \left(\e_{i1}\leq\pi_{1}\left(1\right),\e_{i2}>\pi_{2}\left(1\right)\right)\\
 & \text{or }\left(\pi_{1}\left(1\right)<\e_{i1}\leq\pi_{1}\left(0\right),\e_{i2}>\pi_{2}\left(0\right)\right)\\
\left(0,1\right), & \e_{i1}>\pi_{1}\left(1\right),\e_{i2}\leq\pi_{2}\left(1\right)\\
 & \text{or }\left(\e_{i1}>\pi_{1}\left(0\right),\pi_{2}\left(1\right)<\e_{i2}\leq\pi_{2}\left(0\right)\right)\\
\left\{ \left(0,1\right),\left(1,0\right)\right\}  & \pi_{1}\left(1\right)<\e_{i1}<\pi_{1}\left(0\right),\pi_{2}\left(1\right)<\e_{i2}<\pi_{2}\left(0\right)
\end{cases}
\]
Let $\cF$ be the set of all conditional choice-probability mappings
\[
  g : (x_{1},x_{2}) \mapsto
  \big( p_{00}(x),p_{01}(x),p_{10}(x),p_{11}(x) \big),
\]
with $p_{jk}(x) \ge 0$ and $\sum_{j,k} p_{jk}(x)=1$. For each $\theta$ and each admissible
selection rule $s$, the model implies a CCP vector
\[
  f_{\theta,s}(x)
  := \big( \mathbb{P}_\theta(y=(0,0) \mid X=x,s),\dots,
          \mathbb{P}_\theta(y=(1,1) \mid X=x,s) \big),
\]
and hence a conditional distribution $P_{Y|X}^{\theta,s}$. We may then choose an $L^2$
discrepancy between CCPs,
\[
  d(f,g) := \mathbb{E}\left[\sum_{j,k} (f_{jk}(X)-g_{jk}(X))^2\right],
\]
and define $d(\ol{f}_\theta,g)$ via \eqref{eq:d_incomp}. Restrictiveness
$r(\ol{f}_\theta,\cF)$ reflects how much the equilibrium structure and strategic interaction restrict the feasible CCPs, after optimally choosing an equilibrium selection rule for each pseudo-true mapping $g$. Given the discrepancy function $d$, restrictiveness can then be defined correspondingly.

\end{example}

\subsection{Semiparametric Structural Models}
\label{subsec:sp.model}
Some structural models are semiparametric: some primitives are parametrically modeled, while others are left nonparametric. Often we may not wish to impose parametric assumptions on the unobserved error terms. A generic representation of such models takes the form 
\begin{equation}   \label{eq:struc_sp}
  Y_i = f_{\theta_0,h_0}(X_i,\varepsilon_i),
\end{equation}
where $f_{\t,h}$ is a known mapping, $h_{0}$ is an infinite-dimensional nuisance parameter that captures the nonparametric distribution of $\e_{i}$ and/or other nonparametric components of the model. Typically $h_0$ is restricted to lie in some function space $H$ with encoded shape restrictions and regularity conditions.

From our perspective, this can be viewed as a special case of the incomplete-model framework above: for each $\theta$ there is a whole \emph{family} of reduced forms indexed by $h$. Given a discrepancy $d$ on $\cF$ and an eligible pseudo-truth $g \in \cF$, we can write
\[
\ol f_{\t}:=\left\{ P_{Y|X}\left(f_{\t,h}\right):h\in\mathcal{H}\right\} 
\]
and define 
\[
d\left(\ol f_{\t},g\right):=\inf_{f\in\ol f_{\t}}d\left(f,g\right).
\]
Conceptually, $d(\ol{f}_{\t},g)$ 
measures how far the structural parameter $\t$ alone restricts the reduced form in the presence of the flexible nonparametric component $h\in H$. 
The semiparametric structural model is then more or less restrictive depending on how tightly the union over $\t$ and $h$ of
induced reduced forms $P_{Y|X}^{\t,h}$ sits inside $\cF$.

\begin{example}[The BLP Multinomial Choice Model]

Consider the baseline BLP model \citep*{berry1995automobile} with market-level data on
$\left(s_{jt},x_{jt},z_{jt}\right)$, 
where $j$ indexes a product, $t$ indexes a market, and $s_{jt}$ denotes market share of product $j$. The underlying model, parametrized by $\t=\left(\b,\s^{2}\right)$,
is given by
\begin{align*}
y_{jt} & :=\ind\left\{ \d_{jt}+x_{jt}^{'}\nu_{i}+\e_{ijt}\geq \max_k\left(\d_{kt}+x_{kt}^{'}\nu_{i}+\e_{ikt}\right) \right\} 
\end{align*}
with $\d_{jt}=x_{jt}^{'}\b+\xi_{jt}$, $\nu_{i}\sim\cN\left(0,\s^{2}\right)$,
and $\e_{ijt}\sim TIEV\left(0,1\right)$. Then 
\[
S_{j}\left(\d_{jt},x_{t}\right):=\P\left(\rest{y_{jt}=1}x_{t},\xi_{t}\right)=\int\frac{\exp\left(\d_{jt}+x_{jt}^{'}\nu\right)}{\sum_{k}\exp\left(\d_{kt}+x_{kt}^{'}\nu\right)}\phi\left(\nu\right)d\nu
\]
with $\d_{jt}=S^{-1}\left(s_{t};x_{t},\s^{2}\right)$
and
$\xi_{jt}=S^{-1}\left(s_{t};x_{t},\s^{2}\right)-x_{jt}^{'}\b$ and
$\E\left[\rest{\xi_{jt}}z_{t}\right]=0$.
Again, let $h:\mathcal{X}\times\mathcal{Z}\to\R^{J}$, and 
$\E\left[\rest{\xi_{jt}}x_{t},z_{t}\right]=h_{j}-\ol h_{j}$. 
Hence, the BLP model class can be characterized by
\[
\mathcal{F}_{\T}=\left\{ S\left(x_{t}^{'}\b+h-\ol h;x_{t},\s^{2}\right):h\in\mathcal{H},\t\in\T\right\} 
\]
where $\mathcal{H}$ is an appropriately chosen functional space. The eligible set $\mathcal{F}$ in this case is all mappings 
$\mathcal{F}=\left\{ f:\mathcal{X}\times\mathcal{Z}\to\D^{J-1}\right\}$
where $\D^{J-1}$ denotes the $\left(J-1\right)$-dimensional simplex. We can set the discrepancy function as
\[
d\left(f,g\right):=\E\left[\norm{f-g}^{2}\right]
\]
and define restrictiveness of $\mathcal{F}_{\T}$ based on 
\begin{align*}
\inf_{f\in\mathcal{F}_{\T}}d\left(f,g\right) & =\inf_{\t\in\T,h\in\mathcal{H}}\E\left[\norm{S\left(x_{t}^{'}\b+h-\ol h;x_{t},\s^{2}\right)-g}^{2}\right]
\end{align*}
\end{example}
~
. 

We numerically approximate $\mathcal{H}$ by simulating $M$ copies of
$h$ from $\l_{\mathcal{H}}$, a specified space of functions mapping from $\mathcal{X} \times \mathcal{Z}$ to $\R^J$, and define
\[
\mathcal{F}_{\t}^{\left(M\right)}:=\left\{ S\left(x_{t}^{'}\b+h^{\left(m\right)}-\ol h^{\left(m\right)};x_{t},\s^{2}\right):m=1,...,M_{1}\right\} 
\]
which is a finite set. Based on the above, we compute 
\[
\inf_{f\in\mathcal{F}_{\T}}d^{\left(M\right)}\left(f,g\right)=\inf_{\t\in\T}\min_{m=1,...,M}\E\left[\norm{S\left(x_{t}^{'}\b+h^{\left(m\right)}-\ol h^{\left(m\right)};x_{t},\s^{2}\right)-g}^{2}\right]
\]
which we use as an approximation for 
$d\left(\mathcal{F}_{\T},g\right)$.

\section{Related Concepts}
\label{sec:discuss_related}
This section clarifies the relationship between restrictiveness and related concepts in in econometrics and machine learning.   

\subsection{Rademacher Complexity and VC Dimensions}
\label{subsec:rad}

The Rademacher complexity and VC dimension  measures implicitly encode a specific discrepancy function that is poorly suited for measuring economic model restrictiveness. This section makes that precise by identifying the discrepancy underlying each capacity measure and tracing the degeneracy result of \cite{EllisNeff} directly to its properties.

Following \cite{EllisNeff}, suppose that $\mathcal X$ is a closed and bounded infinite subset of $\mathbb R^m$ for some finite $m \in \mathbb N$, $X$ is a random vector on $\mathcal X$ with distribution $P_X$, and the outcome space is $\mathcal Y = \{-1,1\}$.\footnote{Ellis and Neff allow the outcome $Y$ to take values in $[-1,1]$. For expositional purposes, we restrict attention to the canonical binary case, since the most standard definitions of both the Rademacher complexity and the VC dimension are formulated for binary-valued function classes.} Let $\bar{\mathcal F}_\sigma = \{f:\mathcal X\to\{-1,1\}\}$ be the set of all deterministic binary classifiers and let $\mathcal F_\Theta\subseteq\bar{\mathcal F}_\sigma$ be a model class (for example, a parametric class indexed by $\theta\in\Theta$). \cite{EllisNeff} take the eligible set to be $\mathcal F := \bar{\mathcal F}_\sigma$ with some probability measure $\lambda_{\mathcal F}$, and the discrepancy to be the correlation-based quantity
\begin{equation}
    d_{\text{Rad}}(f,g)
    :=
    \mathbb E_{X\sim P_X}\left[1 - f(X)g(X)\right],
    \quad f,g\in\bar{\mathcal F}_\sigma.
    \label{eq:d-rad}
\end{equation}

As we show in Online Appendix~\ref{sec:app-rad-vc}, since $f$ and $g$ are binary-valued, the discrepancy $d_{\text{VC}}(f,g)=\P(f(X)\neq g(X))$ satisfies $d_{\text{Rad}}=2d_{\text{VC}}$, so it induces exactly the same restrictiveness ordering. Thus Rademacher complexity and VC dimension, despite their different motivations, correspond to the same underlying notion of approximation for the purposes of restrictiveness.

\cite{EllisNeff} establishes that under this discrepancy, all model classes with finite VC dimension satisfy $r=1$ in the limit --- they appear maximally restrictive regardless of their actual expressive power. This degeneracy traces directly to the properties of $d_{\text{Rad}}$, which was designed to control worst-case generalization under adversarial labelings, a goal that is orthogonal to comparing the expressive power of economic models.

The choice of discrepancy function is a substantive modeling decision, not a technicality. A discrepancy inherited from an existing capacity measure will reproduce that measure's asymptotic behavior and limitations, whether or not those properties are desirable for the application at hand. In our applications we use squared $L^2$ discrepancies on prediction rules, which measure approximation error in economically interpretable units and do not exhibit this degeneracy. The next subsection applies this same logic to GMM criterion functions, showing that they too are unsuitable as discrepancies and should instead enter the framework as constraints on the eligible set.

\subsection{Discrepancy Associated with GMM Criterion Function}
\label{subsec:gmm-discrepancy}
This section explains why GMM criterion functions are not discrepancy functions in the sense required for restrictiveness. It also makes  a positive recommendation for how moment conditions should enter the restrictiveness framework.

Consider the additive-error model with instruments
\begin{equation}
    Y_i = m_\theta(X_i) + \varepsilon_i,
    \quad
    \mathbb E[\varepsilon_i \mid Z_i = z] = 0,
    \label{eq:additive-iv}
\end{equation}
where $X_i$ and $Z_i$ are observed covariates and instruments, respectively. For simplicity, suppose that $m_\theta:\mathcal X \to \mathbb R$ is a parametric family indexed by $\theta\in\Theta$, and that the true conditional mean is given by some $g:\mathcal X\to\mathbb R$, which is not necessarily in $\{m_\theta:\theta\in\Theta\}$.

We proposed a \emph{discrepancy} of the form
\begin{equation}
    d(f,g)
    :=
    \mathbb E\left[(f(X)-g(X))^2\right],
    \label{eq:l2-discrepancy}
\end{equation}
and defined restrictiveness of $\mathcal F_\Theta = \{m_\theta:\theta\in\Theta\}$ relative to an eligible set $\mathcal F$ using the induced quantity
\begin{equation*}
    d(\mathcal F_\Theta,g)
    :=
    \inf_{f\in\mathcal F_\Theta} d(f,g).
\end{equation*}
The interpretation is straightforward: $d(\mathcal F_\Theta,g)$ is the best achievable mean-squared prediction error when approximating the pseudo-true rule $g$ with elements of $\mathcal F_\Theta$.

By contrast, a standard population GMM criterion for \eqref{eq:additive-iv} takes the form
\begin{equation*}
    Q_{\text{GMM}}(\theta)
    :=
    \mathbb E\left[g_i(\theta)\right]^\prime
    W\,
    \mathbb E\left[g_i(\theta)\right],
    \quad
    g_i(\theta) := Z_i\big(Y_i - m_\theta(X_i)\big),
    \label{eq:gmm-criterion}
\end{equation*}
for some positive semi-definite weighting matrix $W$. Under correct specification there exists $\theta_0$ such that
\begin{equation*}
    \mathbb E[g_i(\theta_0)] = 0,
    \quad
    Q_{\text{GMM}}(\theta_0) = \min_{\theta\in\Theta} Q_{\text{GMM}}(\theta) = 0.
\end{equation*}
In finite samples, the GMM estimator is defined as
\begin{equation*}
    \hat\theta
    :=
    \arg\min_{\theta\in\Theta} \widehat Q_{\text{GMM}}(\theta),
    \quad
    \widehat Q_{\text{GMM}}(\theta) := g_n(\theta)^\prime W g_n(\theta),
    \quad
    g_n(\theta) := \frac{1}{n}\sum_{i=1}^n g_i(\theta).
\end{equation*}

When the model is misspecified with  $Q_{\text{GMM}}(\theta) > 0$ for all $\theta\in\Theta$,  the \emph{GMM pseudo-true parameter} is defined by
\begin{equation*}
    \theta^\circ
    :=
    \arg\min_{\theta\in\Theta} Q_{\text{GMM}}(\theta).
\end{equation*}
One might interpret $Q_{\text{GMM}}(\theta)$ as defining a discrepancy between $m_\theta$ and the true regression function $g$, and hence to define
\begin{equation}
    d_{\text{GMM}}(m_\theta,g)
    :=
    Q_{\text{GMM}}(\theta),
    \quad
    d_{\text{GMM}}(\mathcal F_\Theta,g)
    :=
    \inf_{\theta\in\Theta} Q_{\text{GMM}}(\theta).
    \label{eq:gmm-discrepancy-temptation}
\end{equation}
However, there are three reasons this interpretation is problematic. First, $Q_{\text{GMM}}$ measures violations of the \emph{moment condition}
\begin{equation*}
    \mathbb E\left[Z_i(Y_i - m_\theta(X_i))\right] = 0,
\end{equation*}
rather than predictive performance. Even in the correctly specified case, $Q_{\text{GMM}}(\theta)$ is invariant to transformations of $m_\theta$ that leave the conditional moments $\mathbb E[Z_i(Y_i - m_\theta(X_i))]$ unchanged, and in general
\begin{equation}
    Q_{\text{GMM}}(\theta) = 0
    \quad\centernot\Rightarrow\quad
    m_\theta(x) = g(x)
    \ \text{for all } x,
\end{equation}
unless the instruments are sufficiently rich to identify $g$ pointwise. Thus as shown in the next example $Q_{\text{GMM}}$ does not satisfy $
    d(f,g) = 0 \Rightarrow f=g \ \text{(almost surely)}$  which we require of  discrepancy functions.

\begin{example}[Example via Irrelevant Instruments]
Let $W=(Y,X,Z)$ with $Y=\theta_0 X+\varepsilon$, $\E[\varepsilon\mid X,Z]=0$, and $\E[X^2]>0$. Consider $\cF_\Theta=\{f_\theta(x)=\theta x:\theta\in\mathbb{R}\}$ and the linear IV/GMM moment $\psi(W,\theta)=Z\,(Y-\theta X)$ with population criterion $Q(\theta)=\big(\E[\psi(W,\theta)]\big)'\,\Omega^{-1}\,\big(\E[\psi(W,\theta)]\big)$ for any positive definite $\Omega$. Take instruments that are valid but irrelevant: $\E[Z]=0$ and $Z\!\perp\!(X,\varepsilon)$, so $\E[Z\varepsilon]=0$ and $\E[ZX]=0$. Then, for every $\theta$, $\E[\psi(W,\theta)]=\E[ZY]-\theta \E[ZX]=\E[Z(\theta_0 X+\varepsilon)]-\theta \E[ZX]=0$, hence $Q(\theta)=0$ for all $\theta$ and the set of population minimizers is all of $\Theta$. In contrast, the predictive $L^2$ discrepancy between $f_\theta$ and the pseudo-truth $g(x)=\theta_0 x$ is $d_2(f_\theta,g)=\E[(f_\theta(X)-g(X))^2]=( \theta-\theta_0 )^2 \E[X^2]$, which is strictly positive whenever $\theta\neq\theta_0$. Thus $Q_{\text{GMM}}(\theta)=0$ does not imply $f_\theta=g$, and the GMM value depends on the choice of instruments rather than on predictive distance.    
\end{example}

Second, $Q_{\text{GMM}}$ depends not only on the prediction rule $m_\theta$ and the data-generating process, but also on the choice of instruments $Z_i$ and the weighting matrix $W$. Two researchers analyzing the same model class $\mathcal F_\Theta$ and the same set of pseudo-truths $g$ but using different instruments or weights would obtain different values of $d_{\text{GMM}}(\mathcal F_\Theta,g)$, even though the underlying space of prediction rules is unchanged. In our framework, by contrast, restrictiveness is a property of $\mathcal F_\Theta$ relative to an eligible set $\mathcal F$ and an evaluation distribution $\lambda_{\mathcal F}$, not of auxiliary choices that are convenient or optimal for estimation.

Third, $Q_{\text{GMM}}$ is not directly interpretable in the unit of prediction error. By definition,
\begin{equation}
    Q_{\text{GMM}}(\theta)
    =
    \mathbb E\left[Z_i\big(g(X_i)-m_\theta(X_i)\big)\right]^\prime
    W\,
    \mathbb E\left[Z_i\big(g(X_i)-m_\theta(X_i)\big)\right],
\end{equation}
which is a quadratic form in averaged and \emph{instrumented} residuals. In general there is no simple relationship between $Q_{\text{GMM}}(\theta)$ and the predictive discrepancy $d(m_\theta,g)$ in \eqref{eq:l2-discrepancy}, even up to monotone transformations.

The same critique also applies to the  minimum-distance (MD) estimation criterion, where the  objective functions are quadratic forms in deviations of low-dimensional summaries from their targets, and the scale is driven by arbitrary choices of normalization and weighting. These features make the criterion convenient for estimation and inference but are ill-suited to serve as discrepancies.

For these reasons, we do not use  GMM (and MD) criterion functions as discrepancies for the purposes of restrictiveness or completeness.  Our proposal for moment-equality models is instead to: (i) use moment conditions to define the \emph{eligible set} of admissible prediction rules, and (ii) evaluate restrictiveness with respect to discrepancies such as \eqref{eq:l2-discrepancy} that are explicitly defined on prediction rules and have a clear interpretation in terms of approximation or prediction error.

\subsection{Limit of the Average-Case Learning Curve}
\label{subsec:learning-curve}

We now connect restrictiveness to the limit of the \emph{average-case learning curve} for a given estimation procedure. We show that restrictiveness can be interpreted as the normalized limit of a ``noise-free'' version of the average-case learning curve that focuses on functional-space approximation errors.

Specifically, consider the following setup. Given a probability measure $P_X$ on $\mathcal X$ and a loss function $\ell:\mathcal Y\times\mathcal Y\to\mathbb R_+$, let the discrepancy $d$ be given by, for all $f,g\in\mathcal F$,
\begin{equation}
    d(f,g)
    :=
    \mathbb E_{ P_X}\left[\ell(f(X),g(X))\right].
    \label{eq:discrepancy-from-loss}
\end{equation}
Suppose that $\ell$ is continuous, nonnegative, and uniformly bounded so that, for any model class $\mathcal F_\Theta$ and evaluation distribution $\lambda_{\mathcal F}$, the random variable $d(\mathcal F_\Theta,f)$ is also uniformly bounded and thus
\begin{equation*}
    d(\mathcal F_\Theta,f)
    =
    \inf_{f_\theta\in\mathcal F_\Theta} d(f_\theta,f)
    =
    \inf_{f_\theta\in\mathcal F_\Theta}
    \mathbb E_{X\sim P_X}[\ell(f_\theta(X),f(X))],
\end{equation*}
is well-defined and uniformly bounded.

For a given pseudo-truth $f\in\mathcal F$, define the sample
\begin{equation}\label{eq:Sample}
    S_n(f) := \{(X_i,Y_i)\}_{i=1}^n,
    \quad
    X_i \stackrel{\text{i.i.d.}}{\sim} P_X,
    \quad
    Y_i = f(X_i).
\end{equation}
Let $\mathcal A$ be an estimation algorithm that maps the sample into a parameter estimate $\hat\theta_n = \mathcal A(S_n(f)) \in \Theta$, thus producing an estimated prediction rule  $\hat f_n := f_{\hat\theta_n}\in\mathcal F_\Theta$.

We then define the average-case learning curve associated with $(\mathcal F_\Theta,\mathcal A)$, where the average is taken over pseudo-truths $f\sim\lambda_{\mathcal F}$ and over samples $S_n(f)$.

\begin{defn}[Noise-Free Average-Case Learning Curve]
\label{def:learning-curve}
For each $n\ge 1$, the \emph{average-case learning curve} of $(\mathcal F_\Theta,\mathcal A)$ is
\begin{equation}
    L_n(\mathcal F_\Theta,\mathcal A)
    :=
    \mathbb E_{f\sim\lambda_{\mathcal F}}
    \mathbb E_{S_n(f)}
    \left[
        R(\hat f_n; f)
    \right],
    \quad
    R(h;f)
    :=
    \mathbb E_{P_X}[\ell(h(X),f(X))],
    \label{eq:learning-curve}
\end{equation}
\end{defn}

We impose the following risk-consistency assumption.

\begin{assumption}[Risk-Consistency]\label{assu:risk-consistency}
The estimation algorithm $\mathcal A$ is risk-consistent for $\mathcal F_\Theta$ (relative to $\lambda_{\mathcal F}$), i.e., for $\lambda_{\mathcal F}$-almost every $f\in\mathcal F$,
\begin{equation}
    \lim_{n\to\infty}
    \mathbb E_{S_n(f)}
    \left[
        R(\hat f_n; f)
    \right]
    =
    \inf_{f_\theta\in\mathcal F_\Theta} R(f_\theta; f)
    =
    d(\mathcal F_\Theta,f).
    \label{eq:risk-consistency}
\end{equation}   
\end{assumption}

Risk-consistency is satisfied by a wide range of regularized empirical risk minimization procedures under suitable conditions on $\mathcal F_\Theta$, $P_X$, and $\ell$, and hence should be interpreted as a very mild requirement.

We now present our main result of this subsection, establishing the connection of restrictiveness to the limit of the average-case learning curve.

\begin{proposition}[Restrictiveness as Normalized Limit of the Learning Curve]
\label{prop:learning-curve} Under the setup of this subsection and, in particular, Assumption \ref{assu:risk-consistency}, we have
\begin{equation*}
    \lim_{n\to\infty} L_n(\mathcal F_\Theta,\mathcal A)
    =
    \mathbb E_{\lambda_{\mathcal F}}\left[d(\mathcal F_\Theta,f)\right].
\end{equation*}
Moreover, if $\mathcal A_{\text{base}}$ is a risk-consistent estimator for $\cF_{\text{base}}$, then
\begin{equation*}
    r(\mathcal F_\Theta,\mathcal F)
    =
    \frac{\displaystyle\lim_{n\to\infty} L_n(\mathcal F_\Theta,\mathcal A)}
         {\displaystyle\lim_{n\to\infty} L_n(\cF_{\text{base}},\mathcal A_{\text{base}})}.
\end{equation*}
\end{proposition}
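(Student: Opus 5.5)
The plan is to write the learning curve as an outer expectation over pseudo-truths of an inner quantity, and then move the limit $n\to\infty$ inside the outer expectation by dominated convergence. For each fixed $f\in\mathcal F$, define
\[
\phi_n(f):=\mathbb E_{S_n(f)}\left[R(\hat f_n;f)\right],
\]
so that, by Definition \ref{def:learning-curve}, $L_n(\mathcal F_\Theta,\mathcal A)=\mathbb E_{f\sim\lambda_{\mathcal F}}[\phi_n(f)]$. Assumption \ref{assu:risk-consistency} states exactly that $\phi_n(f)\to d(\mathcal F_\Theta,f)$ for $\lambda_{\mathcal F}$-almost every $f$. What remains is to justify the interchange of limit and integral and then take the ratio.

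For the interchange, I will use the uniform boundedness of $\ell$. Let $B:=\sup_{y,y'}\ell(y,y')<\infty$. Since $R(h;f)=\mathbb E_{P_X}[\ell(h(X),f(X))]\in[0,B]$ for every $h$ and $f$, we have $0\le\phi_n(f)\le B$ for all $n$ and $f$. The constant $B$ is integrable under the probability measure $\lambda_{\mathcal F}$, so it serves as a dominating function. Dominated convergence then yields
\[
\lim_{n\to\infty}L_n(\mathcal F_\Theta,\mathcal A)=\mathbb E_{\lambda_{\mathcal F}}[d(\mathcal F_\Theta,f)],
\]
which is the first claim.

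The only delicate point, and the main obstacle, is measurability: $f\mapsto\phi_n(f)$ must be $\lambda_{\mathcal F}$-measurable for $L_n$ to be defined at all, and the limit function must be measurable as well. I will take measurability of $\phi_n$ as implicit in Definition \ref{def:learning-curve}, which presupposes that $L_n$ is well defined; this amounts to joint measurability of $(f,x_1,\dots,x_n)\mapsto R(f_{\mathcal A(S_n(f))};f)$. The almost-everywhere limit of measurable functions is then measurable after modification on a null set, which is enough for the argument. The same reasoning covers measurability of $f\mapsto d(\mathcal F_\Theta,f)$ as it enters Definition \ref{def:rest}.

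For the second claim, I will apply the same argument to $(\cF_{\text{base}},\mathcal A_{\text{base}})$ under its own risk-consistency assumption, which gives $\lim_n L_n(\cF_{\text{base}},\mathcal A_{\text{base}})=\mathbb E_{\lambda_{\mathcal F}}[d(\cF_{\text{base}},f)]$. By the Nondegeneracy assumption this limit is strictly positive, so the ratio of the two limits exists and equals the ratio of the limits. Comparing with equation \eqref{eq:r} in Definition \ref{def:rest}, that ratio is exactly $r(\mathcal F_\Theta;\mathcal F,d)$ with $d$ given by \eqref{eq:discrepancy-from-loss}, which completes the proof.
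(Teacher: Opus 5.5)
Your argument is correct. The bound $0\le \E_{S_n(f)}[R(\hat f_n;f)]\le \sup\ell<\infty$ and the almost-everywhere convergence in Assumption~\ref{assu:risk-consistency} give the first limit by dominated convergence. The same argument for $(\cF_{\text{base}},\mathcal A_{\text{base}})$, together with Nondegeneracy to keep the denominator positive, yields the ratio.

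The paper does not write out a proof of this proposition. Its setup imposes a uniformly bounded loss so that $d(\cF_\Theta,f)$ is bounded, which is exactly what this dominated-convergence argument needs, so your route is the intended one.
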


Proposition~\ref{prop:learning-curve} shows that under mild assumptions, restrictiveness can be interpreted as the ratio of two  average-case  \emph{approximation errors}: one achieved by the best-fitting member of the model class $\mathcal F_\Theta$, the other by the baseline model class $\cF_{\text{base}}$. Low restrictiveness corresponds to a model whose best-fitting members achieve low asymptotic risk across pseudo-truths $f\sim\lambda_{\mathcal F}$, while high restrictiveness corresponds to a model that, on average, cannot reduce risk much relative to the baseline. This interpretation is analogous to, but distinct from, the standard notion of a learning curve in machine learning. There, irreducible error arises from the noise term $\e_i$ in the data-generating process $Y_i = f(X_i)+\e_i$, whereas our setup \eqref{eq:learning-curve} is noise-free. Restrictiveness can therefore be understood as a normalized, noise-free analog of the limiting average-case learning curve. This perspective reinforces our central point: restrictiveness captures the functional-form flexibility of the model under consideration.

The connection to the Gaussian process (GP) learning-curve literature makes this distinction concrete. In GP regression with squared loss, the average generalization error of the posterior mean (BLUP) takes exactly the form of $L_n(\cF_\T,\cA)$ when the average is over both training samples and pseudo-truth draws from the GP prior. In this specialization, Proposition~\ref{prop:learning-curve} implies that under a noise-free, risk-consistent regime the learning-curve limit equals $\E_{f\sim\l_\cF}[d(\cF_\T,f)]$, the pure approximation error.  By contrast, the GP learning-curve literature \citep{legratietgarnier2015} scales the observation-noise variance proportionally to the sample size ($\mathrm{Var}(\e_i) = n\tau$) so as to produce a nontrivial limit driven by estimation uncertainty rather than model flexibility: even when the GP prior is correctly specified so that $d(\cF_\T,f) = 0$, the limit is strictly positive. The key difference, therefore, is in what survives in the limit: our noise-free setup isolates the population approximation error (and hence the structural content of the model class), whereas the standard GP limit reflects the fundamental trade-off between learning and noise. See Online Appendix~\ref{sec:oa-gp-learning-curve} for further details on the eigenvalue representation and the formal correspondence.

\section{Applications}\label{sec:apps}
This section applies our restrictiveness framework to three settings: certainty equivalents, multinomial choice with exogenous product characteristics, and multinomial choice with endogenous prices. In the certainty-equivalent application, the relative importance of parameters mirrors that in FGL, but restrictiveness is uniformly higher when models are evaluated over the full continuum of lotteries. In discrete-choice models without endogeneity, restrictiveness is driven primarily by flexibility in mean utility, and the restrictiveness of commonly used empirical specifications differs meaningfully despite the theoretical generality of mixed logit. When prices are endogenous, moment restrictions substantially increase restrictiveness and alter model rankings. Together, the first two applications illustrate how the framework evaluates restrictiveness for functional models on continuum domains, while the third shows how it extends naturally to structural, semiparametric models with endogeneity.

\subsection{Cumulative Prospect Theory}
\label{subsec:example.cpt}
We revisit the certainty-equivalent setting of \citet*{fudenberg2023flexible}, using the same models but replacing their finite set of 25 binary lotteries with the full continuum of lotteries. This allows us to isolate the effect of moving from a finite evaluation domain to a functional one, holding the models and discrepancy fixed. We find that restrictiveness is uniformly higher over the continuum, and that the relative contribution of individual parameters is largely preserved, suggesting that FGL's finite-sample rankings are robust but their absolute levels systematically understate how much structure these models impose.

\subsubsection*{Setting}

Each lottery is characterized by a tuple $x=(\bar{z}, \underline{z}, p)$, where $\bar{z}>\underline{z} \geq 0$ are the possible prizes, and $p$ is the probability of the larger prize. A prediction rule is a function $f: \mathcal{D} \rightarrow \mathbb{R}$ that maps a lottery to its certainty equivalent, where $\mathcal{D}=\left\{(\bar{z}, \underline{z}, p) \in[0,1]^3: \bar{z} \geq \underline{z}\right\}$.

Both CPT and DA specify prediction rules of the form
\[
f(\bar{z}, \underline{z}, p)
= v^{-1}\!\left(w(p)v(\bar{z})+(1-w(p))v(\underline{z})\right),
\]
where $v(z)=z^\alpha$, and the two models differ in their probability weighting functions.
For CPT,
\[
w(p)=\frac{\delta p^\gamma}{\delta p^\gamma+(1-p)^\gamma},
\quad
(\alpha,\gamma,\delta)\in[0,1]^2\times\mathbb R_+,
\]
while for DA,
\[
w(p)=\frac{p}{1+(1-p)\eta},
\quad
(\alpha,\eta)\in[0,1]\times(-1,\infty).
\]
The baseline model class is a singleton that fixes parameters at their benchmark values: $(\alpha,\gamma,\delta)=(1,1,1)$ for CPT and $(\alpha,\eta)=(1,0)$ for DA.

To study the contribution of individual parameters, we consider submodels obtained by fixing one or more parameters at their baseline values while allowing the remaining parameters to vary. For example, $\mathrm{CPT}(\alpha,\gamma)$ denotes the CPT model with $\delta$ fixed at its baseline value. Analogous variants are considered for DA. We refer to the unrestricted models as $\mathrm{CPT}(\alpha,\gamma,\delta)$ and $\mathrm{DA}(\alpha,\eta)$.

\subsubsection*{Eligible Set, Evaluation Distribution, and Discrepancy}
We define the eligible set $\mathcal{F}$ to be all prediction rules $f: \mathcal{D} \rightarrow \mathbb{R}$, where $\mathcal{D}=\{(\bar{z}, \underline{z}, p) \in$ $\left.[0,1]^3: \bar{z} \geq \underline{z}\right\}$, satisfying two criteria: (i) $\underline{z} \leq f(\bar{z}, \underline{z}, p) \leq \bar{z}$; and (ii) $f(\bar{z}, \underline{z}, p)$ is monotone increasing with respect to a partial order $\succ$ on vectors $\mathbf{x}=(\bar{z}, \underline{z}, p)$. Specifically, we define $\mathbf{x}_1 \succ \mathbf{x}_2$ if $\bar{z}_1 \geq \bar{z}_2, \underline{z}_1 \geq \underline{z}_2$, and $p_1 \geq p_2$, where $\mathbf{x}_1=\left(\bar{z}_1, \underline{z}_1, p_1\right)$ and $\mathbf{x}_2=\left(\bar{z}_2, \underline{z}_2, p_2\right)$. A function $f(\bar{z}, \underline{z}, p)$ is monotone increasing if $\mathbf{x}_1 \succ\mathbf{x}_2 \Rightarrow f\left(\mathbf{x}_1\right) \geq f\left(\mathbf{x}_2\right)$.

We define a constrained GP prior $\lambda_{\mathcal F}$ on the prediction rule $f$. To sample $f$, we proceed in two steps. First, we draw a monotone increasing function $g(\bar{z}, \underline{z}, p)$ from a constrained Gaussian process with a Mat\'ern $3 / 2$ kernel. \footnote{As robustness checks, we (i) replace the Mat\'ern $3/2$ kernel with a squared exponential kernel, and (ii) replace the GP draws with spline basis draws; the qualitative ranking of restrictiveness across models is unchanged. See Online Appendix~\ref{subsec:detail.example1}.} Second, we map $g$ to $f$ via the sigmoid transformation
$
f(\bar{z}, \underline{z}, p)=\underline{z}+(\bar{z}-\underline{z}) \sigma(g(\bar{z}, \underline{z}, p)),
$
ensuring that $f$ takes values in $[\underline{z}, \bar{z}]$. We use $M=2000$ draws from $\lambda_{\mathcal{F}}$ in our implementation. Random samples from the constrained GP prior satisfy the required monotonicity constraints and exhibit non-flat behavior; see Online Appendix~\ref{subsec:detail.example1} for illustrative plots. The discrepancy between $f_1$ and $f_2$ is defined as $L^2$-norm of their difference.

\subsubsection*{Results}

Table \ref{tab:modelrestrict} compares our restrictiveness estimates with those reported in FGL. For the CPT specification, two main observations emerge. First, the relative contribution of the three parameters is the same as in FGL--$\delta$ contributes the most, followed by $\gamma$, and then $\alpha$. Comparing the full CPT model (0.56) with models that drop each parameter shows this ordering: dropping $\delta$ increases restrictiveness to 0.77, dropping $\gamma$ increases it to 0.67, and dropping $\alpha$ increases only to 0.59. Second, the absolute level of restrictiveness is uniformly higher in our estimates. For every CPT specification, our restrictiveness estimates exceed those in FGL, often by a large margin--e.g., 0.56 vs.\ 0.28 for the full model, and 0.77 vs.\ 0.51 for $(\alpha,\gamma)$. In addition, the range of restrictiveness across CPT specifications becomes much narrower in our results (0.56 to 0.92) than in FGL's (0.28 to 0.91). Both patterns are consistent with the conceptual difference between the two approaches: we measure restrictiveness over the entire eligible functional space, whereas FGL evaluate restrictiveness only over predictions on a finite-sample dataset of lotteries. When the eligible set expands, models naturally appear more restrictive.

For the DA specification, our results suggest that adding parameter $\alpha$ makes the model only slightly more flexible--lowering restrictiveness from 0.69 (with only $\eta$) to 0.67 (with both $\alpha$ and $\eta$). In contrast, FGL report a much larger decrease, from 0.69 to 0.47. Hence, while both papers agree that $\alpha$ increases flexibility in DA, our framework finds its effect to be quantitatively much smaller.

\begin{table}[t!]
  \caption{Restrictiveness for Certainty Equivalents}
  \begin{center}
    \begin{tabular}{rccc} \hline 
          & \#Param & New & Old \\ \hline \hline
    \multicolumn{1}{l}{CPT Spec.} &       &       &  \\ \hline
    \multicolumn{1}{l}{$\alpha$, $\delta$, $\gamma$} & 3     & 0.56  & 0.28  \\
          &       & (0.00)  & (0.00)  \\
    \multicolumn{1}{l}{$\alpha$, $\gamma$} & 2     & 0.77  & 0.51  \\
          &       & (0.00)  & (0.01)  \\
    \multicolumn{1}{l}{$\gamma$, $\delta$} & 2     & 0.59  & 0.37  \\
          &       & (0.00)  & (0.00)  \\
    \multicolumn{1}{l}{$\alpha$, $\delta$} & 2     & 0.67  & 0.49  \\
          &       & (0.01)  & (0.01)  \\
    \multicolumn{1}{l}{$\alpha$} & 1     & 0.92  & 0.91  \\
          &       & (0.00)  & (0.01)  \\
    \multicolumn{1}{l}{$\gamma$} & 1     & 0.86  & 0.59  \\
          &       & (0.00)  & (0.01)  \\
    \multicolumn{1}{l}{$\delta$} & 1     & 0.69  & 0.68  \\
          &       & (0.01)  & (0.01)  \\ \hline
    \multicolumn{1}{l}{DA Spec.} &       &       &  \\ \hline
    \multicolumn{1}{l}{$\alpha$, $\eta$} & 2     & 0.67  & 0.47  \\
          &       & (0.01)  & (0.01)  \\
    \multicolumn{1}{l}{$\eta$} & 1     & 0.69  & 0.69  \\
          &       & (0.01)  & (0.01)  \\ \hline
    \end{tabular}%
   \end{center} 

   {\footnotesize {\em Notes}: New: estimates from this paper. Old: estimates from Fudenberg, Gao, and Liang (2023).}
	\setlength{\baselineskip}{4mm}
    
  \label{tab:modelrestrict}
\end{table}%

Figure \ref{fig:modelrestrict} illustrates our results by comparing model restrictiveness and completeness across various CPT and DA specifications. As in FGL, several specifications lie strictly inside the restrictiveness--completeness Pareto frontier, meaning that there exists another model that is both more complete and more restrictive. The undominated models are preferred: they rule out more regularities, yet capture the regularities that are present in real data. In FGL, the interior models are CPT($\alpha,\delta$) and DA($\alpha,\eta$). In our results, these two remain interior, but we find that CPT($\delta$) and DA($\eta$) also fall inside the frontier.

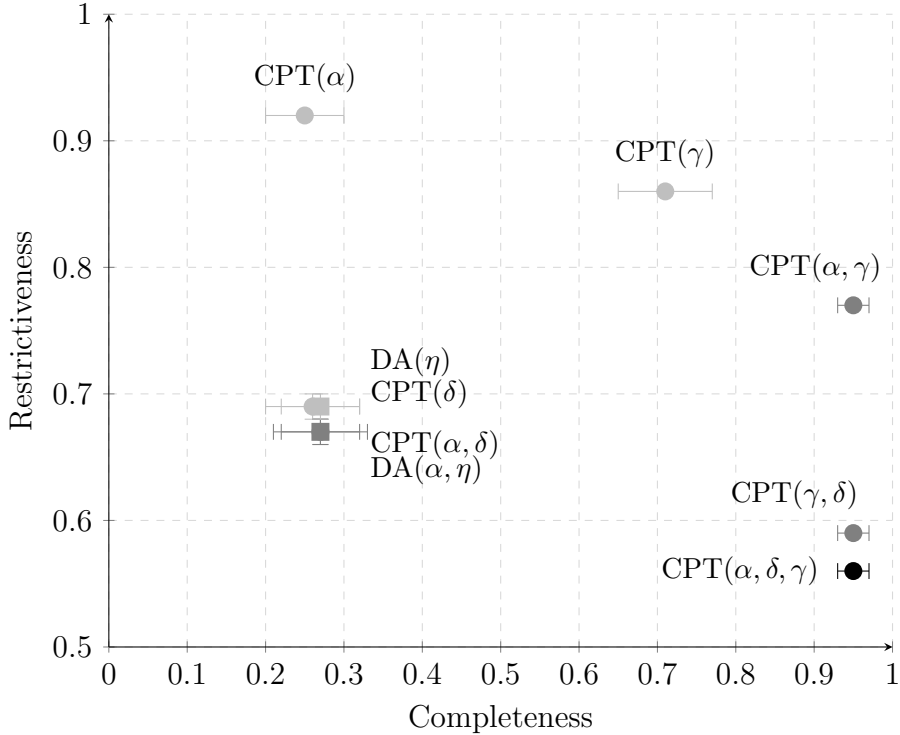
\begin{figure}[t!]
	\caption{Model Comparison by Their Completeness and Restrictiveness}
	\label{fig:modelrestrict}
\begin{center}
\begin{tikzpicture}
\centering
\begin{axis}[
    width=12cm,
    height=10cm,
    xlabel={Completeness},
    ylabel={Restrictiveness},
    title style={font=\large\bfseries},
    ymin=0.5,
    ymax=1,
    xmin=0,
    xmax=1,
    grid=major,
    grid style={dashed, gray!30},
    tick align=outside,
    axis lines=left,
    every axis plot/.append style={thick},
]

\addplot[only marks, mark=*, mark size=3pt, color=black, 
    error bars/.cd, y dir=both, y explicit, x dir=both, x explicit]
coordinates {(0.95, 0.56) +- (0.02, 0.00)};

\addplot[only marks, mark=*, mark size=3pt, color=gray, 
    error bars/.cd, y dir=both, y explicit, x dir=both, x explicit]
coordinates {(0.95, 0.77) +- (0.02, 0.00)};

\addplot[only marks, mark=*, mark size=3pt, color=gray, 
    error bars/.cd, y dir=both, y explicit, x dir=both, x explicit]
coordinates {(0.95, 0.59) +- (0.02, 0.00)};

\addplot[only marks, mark=*, mark size=3pt, color=gray, 
    error bars/.cd, y dir=both, y explicit, x dir=both, x explicit]
coordinates {(0.27, 0.67) +- (0.05, 0.01)};

\addplot[only marks, mark=*, mark size=3pt, color=lightgray, 
    error bars/.cd, y dir=both, y explicit, x dir=both, x explicit]
coordinates {(0.25, 0.92) +- (0.05, 0.00)};

\addplot[only marks, mark=*, mark size=3pt, color=lightgray, 
    error bars/.cd, y dir=both, y explicit, x dir=both, x explicit]
coordinates {(0.71, 0.86) +- (0.06, 0.00)};

\addplot[only marks, mark=*, mark size=3pt, color=lightgray, 
    error bars/.cd, y dir=both, y explicit, x dir=both, x explicit]
coordinates {(0.26, 0.69) +- (0.06, 0.01)};

\addplot[only marks, mark=square*, mark size=3pt, color=gray, 
    error bars/.cd, y dir=both, y explicit, x dir=both, x explicit]
coordinates {(0.27, 0.67) +- (0.06, 0.01)};

\addplot[only marks, mark=square*, mark size=3pt, color=lightgray, 
    error bars/.cd, y dir=both, y explicit, x dir=both, x explicit]
coordinates {(0.27, 0.69) +- (0.05, 0.01)};

\node[anchor=south east] at (axis cs:0.92,0.54) {\small CPT($\alpha,\delta,\gamma$)};
\node[anchor=south east] at (axis cs:1.00,0.78) {\small CPT($\alpha,\gamma$)};
\node[anchor=south east] at (axis cs:0.97,0.60) {\small CPT($\gamma,\delta$)};
\node[anchor=south west] at (axis cs:0.32,0.64) {\small CPT($\alpha,\delta$)};
\node[anchor=south east] at (axis cs:0.33,0.93) {\small CPT($\alpha$)};
\node[anchor=south] at (axis cs:0.71,0.87) {\small CPT($\gamma$)};
\node[anchor=south west] at (axis cs:0.32,0.68) {\small CPT($\delta$)};
\node[anchor=south west] at (axis cs:0.32,0.62) {\small DA($\alpha,\eta$)};
\node[anchor=south west] at (axis cs:0.32,0.705) {\small DA($\eta$)};

\end{axis}
\end{tikzpicture}
\end{center}
\end{figure}

\subsection{Multinomial Choice Models}

In the second example, we evaluate the restrictiveness of three multinomial choice models commonly used in industrial organization: multinomial logit (MNL), nested logit (NL), and mixed logit (MXL). Theoretical work, going back to \cite{mcfadden2000mixed}, shows that mixed logit models can approximate any random-utility model arbitrarily well under mild conditions. More recently, \cite{chang2022approximating} provide necessary and sufficient conditions under which mixed logit models span the full nonparametric random-utility class. While these results are theoretically powerful, empirical applications of mixed logit typically impose strong parametric structure--most commonly assuming normally distributed random coefficients and linear utility in product characteristics. As a result, the mixed logit models used in practice are far less flexible than the general formulation studied in theory. This makes it meaningful to compare the restrictiveness of the specifications actually estimated in empirical industrial organization.

\subsubsection*{Setting}
We consider markets indexed by $m=1,\ldots,M$. In each market $m$, the choice set is $\mathcal{J}=\{0,1,\ldots,J\}$, where $j=0$ denotes the outside option (``no purchase''). Let $s_{jm}$ denote the market share of product $j$ in market $m$, with the outside share given by $s_{0m}=1-\sum_{j\ge 1}s_{jm}$. Each product $j$ in market $m$ has a vector of observed characteristics $x_{jm}\in\mathbb{R}^{K}$, and we write $X_m=(x_{1m},\ldots,x_{Jm})$ for the collection of covariates for all products in market $m$. This application studies a purely exogenous setting; see Section \ref{subsec:BLP} for a structural multinomial choice model with endogeneity.

Define the model as the prediction rule $p_\theta$ that maps covariates $X_m$ in market $m$ to a $J\times 1$ vector of product shares $p_m(X_m; \theta)=(p_{1m}(X_m; \theta),\ldots, p_{Jm}(X_m; \theta))^{\prime}.$ The baseline model class is a singleton consisting of the uniform share vector $p_{base}=1/(J+1) \cdot \iota$, where $\iota$ is a $J \times 1$ vector. We evaluate the restrictiveness of three widely used discrete-choice specifications:
multinomial logit (MNL),  
nested logit (NL), and   mixed logit (MXL),  
which differ in the substitution patterns they allow: MNL implies independence of irrelevant alternatives, NL allows within-nest correlation, and MXL incorporates random taste heterogeneity. Formal model definitions are given in Online Appendix~\ref{subsec:detail.example2}.

We use the cereal dataset from \cite{nevo2000mergers}, which contains $M = 94$ markets and $J = 24$ products. Product characteristics include a continuous price variable and a binary indicator for ``mushy" cereals. Here we treat prices as exogenous; consequently, restrictiveness is measured solely with respect to the functional form of $p_m(.)$.

\subsubsection*{Eligible Set, Evaluation Distribution, and Discrepancy}
In this  setting, an eligible set \(\mathcal{F}\) is a collection of prediction rules \(s\) mapping covariates \(X_m\) to market-share vectors \(s_m\). Our specification  of the eligible set here is motivated by theoretical work on the flexibility of mixed logit models. Starting from a parametric MXL structure, we consider three variants that differ in which components of utility are allowed to be general functions of product characteristics $x_{jm}$, possibly subject to monotonicity restrictions. Specifically, we consider: (i) an eligible set in which both the common mean utility and individual heterogeneity are allowed to be general functions (``NP Both"), (ii) an eligible set in which the mean utility is allowed to be a general function while individual heterogeneity remains parametric (``NP Mean"), and (iii) an eligible set in which the mean utility remains parametric while individual heterogeneity is allowed to be a general function (``NP Individual"). We introduce an evaluation distribution $\lambda_{\mathcal{F}}$ over each eligible set, defined using (constrained) Gaussian process priors\footnote{Robustness checks that vary the kernel choice and the function-draw procedure yield similar results; see Online Appendix~\ref{subsec:detail.example2}.}, and draw 1000 functions from this distribution. Full details are provided in Online Appendix~\ref{subsec:detail.example2}.

For the discrepancy, we use the squared $L^2$-norm as our distance metric between
product shares $p_m(X_m; \theta)$ and $s_m(X_m)$, taking expectation over covariates $X_m$: $$d\left(p_\theta, s\right)= \mathbb{E}_{X_m}\left[ \sum_{j=1}^{J} \Big| p_{jm}(X_m; \theta) - s_{jm}(X_m) \Big|^2 \right].$$

\subsubsection*{Results}
In Panel A of Table~\ref{tab:emp_restrictiveness_completeness}, the ``No Endogeneity'' columns report restrictiveness for the multinomial choice models under three eligible sets. When both the mean and the individual heterogeneity are modeled nonparametrically (NP Both), all three models remain restrictive, with MNL the most restrictive (0.154) and NL and MXL very similar and less restrictive (0.113 and 0.112). Although mixed logit is theoretically the most flexible, the parametric structure imposed in empirical implementations makes NL and MXL exhibit very similar restrictiveness in our application. Our results also show that the restrictiveness of these models  is driven almost entirely by the mean utility component. When the eligible set permits a nonparametric mean utility but keeps individual heterogeneity parametric (NP Mean), restrictiveness is essentially the same as when the eligible set allows both components to be nonparametric (NP Both), and the ranking across models is unchanged. In contrast, when the eligible set keeps the mean utility parametric but allows nonparametric heterogeneity (NP Individual), restrictiveness is near zero for all models, with MXL being the least restrictive. In the ``NP Individual" eligible set, flexibility enters only through individual-specific terms and is largely integrated out when forming market shares; as a result, the eligible set is not materially expanded relative to parametric mixed logit. By contrast, allowing a nonparametric mean utility relaxes the functional form of the common utility component and effectively enlarges the eligible set beyond the parametric specification.



In Panel B of Table~\ref{tab:emp_restrictiveness_completeness}, the ``No Endogeneity'' columns report completeness measures, which quantify the fraction of predictable variation in market shares captured by each model relative to a flexible statistical benchmark.\footnote{In this setting, a fully nonparametric estimation approach will perfectly fit the observed shares.} Using the squared prediction error of market shares as the loss function, we find that models exhibit nearly identical completeness when applied to the cereal data, indicating that the additional parameters in NL and MXL models do not provide meaningful additional predictive power for this dataset.


\begin{table}[t!]
\setlength{\abovecaptionskip}{0cm}
\caption{Restrictiveness and Completeness of Multinomial Choice Models}
\label{tab:emp_restrictiveness_completeness}
\begin{center}
\textbf{Panel A: Restrictiveness}
\vspace{0.2cm}

\begin{tabular}{l l c c c c c}
\hline
\textbf{Eligible Set} & \textbf{Model} & \multicolumn{2}{c}{\textbf{No Endogeneity}} & 
\multicolumn{2}{c}{\textbf{With Endogeneity}} & \textbf{Change} \\
& & \textbf{Restr.} & \textbf{SE} & \textbf{Restr.} & \textbf{SE} & $\Delta$ \\
\hline
\multirow{3}{*}{NP Both} 
& MNL & 0.154 & (0.008) & 0.745 & (0.014) & +0.591 \\
& NL  & 0.113 & (0.005) & 0.745 & (0.014) & +0.632 \\
& MXL & 0.112 & (0.005) & 0.667 & (0.022) & +0.555 \\
\hline
\multirow{3}{*}{NP Mean} 
& MNL & 0.156 & (0.008) & 0.699 & (0.010) & +0.543 \\
& NL  & 0.115 & (0.005) & 0.699 & (0.010) & +0.584 \\
& MXL & 0.116 & (0.005) & 0.607 & (0.008) & +0.491 \\
\hline
\multirow{3}{*}{NP Individual} 
& MNL & 0.002 & (0.000) & 0.594 & (0.009) & +0.592 \\
& NL  & 0.002 & (0.000) & 0.594 & (0.009) & +0.592 \\
& MXL & 0.000 & (0.000) & 0.548 & (0.014) & +0.548 \\
\hline
\end{tabular}

\vspace{0.4cm}
\textbf{Panel B: Completeness}
\vspace{0.2cm}

\begin{tabular}{l c c c c}
\hline
\textbf{Model} & \multicolumn{2}{c}{\textbf{No Endogeneity}} & 
\multicolumn{2}{c}{\textbf{With Endogeneity}} \\
& \textbf{Complete.} & \textbf{SE} & \textbf{Complete.} & \textbf{SE} \\
\hline
MNL & 0.396 & (0.039) & 0.301 & (0.035) \\
NL  & 0.396 & (0.039) & 0.302 & (0.035) \\
MXL & 0.397 & (0.040) & 0.335 & (0.040) \\
\hline
\end{tabular}
\end{center}

{\footnotesize \textit{Notes}: Panel A Column $\Delta$ denotes the difference in restrictiveness between the endogenous and exogenous cases, for each multinomial choice specification and eligible set. \setlength{\baselineskip}{4mm}}
\end{table}

\subsection{Multinomial Choice with Endogeneity}\label{subsec:BLP}
In the third example, we compare the restrictiveness of multinomial choice models in a setting where product characteristics (such as price) may be endogenous. Although the parametric functional forms coincide with those in the previous section, the presence of endogeneity fundamentally changes the evaluation: valid instruments are required, and the resulting specifications incorporate additional moment conditions. These moment restrictions make the models semiparametric, so we need to employ the procedures outlined in Section~\ref{subsec:sp.model}. We find that endogeneity raises restrictiveness for all models. Unlike in the case without endogeneity, MXL becomes the least restrictive, while NL delivers restrictiveness nearly identical to MNL.

\subsubsection*{Setting}
We now turn to the empirically important case where product characteristics, particularly price, are endogenous. 
 We address this endogeneity using the BLP instruments constructed in \cite{nevo2000mergers}, selecting the two instruments that are most strongly correlated with price from the full set of 20. \footnote{In Online Appendix~\ref{subsec:detail.example3}, we also consider selecting the three instruments most strongly correlated with price; the qualitative ranking of restrictiveness across the three models is unchanged.} The model class is defined as in Section 3.3. The structural mapping $S$ corresponds to the mixed logit functional form in the BLP framework, while for the multinomial logit and nested logit specifications we replace $S$ with the corresponding functional forms used in the previous section. See Online Appendix \ref{subsec:detail.example3} for details.

\subsubsection*{Eligible Set, Evaluation Distribution, and Discrepancy}
To make the results comparable to the no-endogeneity case in Section 5.2, we use the same definition for the eligible sets--``NP Both," ``NP Mean," and ``NP Individual," which differ in whether the mean utility and/or individual heterogeneity are allowed to be nonparametric--along with its corresponding evaluation distribution, to generate the pseudo-true prediction rules. We draw 500 functions from the evaluation distribution $\lambda_{\mathcal{F}}.$

The discrepancy function follows Section~\ref{subsec:sp.model} and is defined as
\begin{align*}
\inf_{f\in\mathcal{F}_{\T}}d\left(f,g\right) & =\inf_{\t\in\T,h\in\mathcal{H}}\E\left[\norm{S\left(x_{t}^{'}\b_{0}+h-\ol h;x_{t},\s^{2}\right)-g}^{2}\right],
\end{align*}
where $\bar{h}(Z) = \mathbb{E}[h(X,Z)|Z]$. In the implementation, $\bar{h}$ is obtained by projecting $h$ onto higher-order polynomial functions of the instruments $Z$. When computing the infimum over $h$, we draw 100 candidate $h$-functions from a Gaussian process prior with Mat\'ern-3/2 kernel. \footnote{Robustness checks that vary the kernel choice and the function-draw procedure yield similar results; see Online Appendix~\ref{subsec:detail.example3}.}

\subsubsection*{Results}
In Panel A of Table~\ref{tab:emp_restrictiveness_completeness}, the ``With Endogeneity'' columns report restrictiveness for the discrete-choice models when price is treated as endogenous. 
Relative to the no-endogeneity case, restrictiveness is substantially higher for each model under each eligible set, as recorded in the ``Change'' column.
For example, under eligible set ``NP Both", restrictiveness for MNL increases from 0.154 to 0.745, and for MXL from 0.112 to 0.667.
This reflects the additional constraints imposed by the moment conditions: each structural model must now match both its functional form and the exogeneity restrictions, which makes it harder to approximate the pseudo-true share.

Across all three eligible sets, MXL is the least restrictive model, while MNL and NL yield nearly identical restrictiveness. This contrasts with the no-endogeneity case, where both NL and MXL are less restrictive than MNL; here, the nesting parameter in NL does not play a role in relaxing restrictiveness once endogeneity restrictions are imposed. Moreover, as in the no-endogeneity case, the role of the mean utility component remains central. Restrictiveness under the ``NP Mean" eligible set is close to  that under ``NP Both" for all models. By contrast, the ``NP Individual'' eligible set delivers somewhat lower restrictiveness, but the levels remain high, ranging from 0.548 to 0.594. Moment restrictions prevent a parametric mean utility component from yielding restrictiveness values close to zero.

In Panel B of Table~\ref{tab:emp_restrictiveness_completeness}, the ``With Endogeneity'' columns report completeness for the three models. Mixed logit achieves slightly higher completeness than MNL and NL, while MNL and NL remain nearly identical.

\section{Conclusion}\label{sec:conc}
This paper establishes that model restrictiveness can be defined, computed, and
interpreted in functional and structural settings far beyond those considered in
prior work. Doing so yields economically meaningful findings that finite-support
evaluations miss: most strikingly, IV moment conditions can raise restrictiveness
more dramatically than functional form restrictions alone, as when mixed logit
restrictiveness increases from 0.11 to 0.67 in our BLP application.

Together with the completeness measure of \citet*{FKLM}, restrictiveness gives
researchers a two-dimensional lens for model evaluation: how much a model rules
out, and how much of what matters it captures. Neither dimension alone is
sufficient: a model can be highly restrictive yet empirically uninformative,
or highly complete yet theoretically vacuous. The restrictiveness--completeness
frontier, first proposed in \citet*{fudenberg2023flexible}, thus provides a
principled basis for comparing models that would be difficult to rank on either
criterion alone.

Our work suggests three directions for future research. First, mapping the
restrictiveness--completeness frontier across a wider range of structural models, including dynamic models and models with strategic interaction,  would
provide a more comprehensive picture of where economic theory adds value beyond
statistical fit. Second, more computationally efficient sampling algorithms for
constrained functional spaces, drawing on recent advances in Bayesian
nonparametrics, would make the framework more accessible to applied researchers
working with high-dimensional covariates.
Third, and most intriguing, is the possibility of using restrictiveness as a
regularization device. Standard complexity penalties such as AIC and BIC count
parameters, a syntactic measure that ignores whether those parameters encode
economically meaningful structure. A restrictiveness-based penalty that favors models with high population-level structural content would instead reward
models that rule out economically implausible patterns, regardless of their
parameter count. Such a penalty could be applied both to selection across
non-nested model classes and to regularization within parametric families, and
whether it improves finite-sample estimation and prediction relative to standard
penalties is an important open question.

\setstretch{1}

\bibliography{library}

\begin{thebibliography}{26}
\providecommand{\natexlab}[1]{#1}

\bibitem[{Andrews \textit{et~al.}(2025)Andrews, Fudenberg, Lei, Liang and
  Wu}]{andrews2022transfer}
\textsc{Andrews, I.}, \textsc{Fudenberg, D.}, \textsc{Lei, L.}, \textsc{Liang,
  A.} and \textsc{Wu, C.} (2025). The transfer performance of economic models.
  \textit{arXiv preprint arXiv:2202.04796}.

\bibitem[{Ba \textit{et~al.}(2025)Ba, Bohren and Imas}]{ba2025}
\textsc{Ba, C.}, \textsc{Bohren, J.~A.} and \textsc{Imas, A.} (2025). Over-and
  underreaction to information: Belief updating with cognitive constraints.
  \textit{Working Paper}.

\bibitem[{Berry \textit{et~al.}(1995)Berry, Levinsohn and
  Pakes}]{berry1995automobile}
\textsc{Berry, S.}, \textsc{Levinsohn, J.} and \textsc{Pakes, A.} (1995).
  Automobile prices in market equilibrium. \textit{Econometrica},
  \textbf{63}~(4), 841--890.

\bibitem[{Chang \textit{et~al.}(2022)Chang, Narita and
  Saito}]{chang2022approximating}
\textsc{Chang, H.}, \textsc{Narita, Y.} and \textsc{Saito, K.} (2022).
  Approximating choice data by discrete choice models. \textit{arXiv preprint
  arXiv:2205.01882}.

\bibitem[{Crawford(2025)}]{crawford2025demand}
\textsc{Crawford, I.} (2025). Demand systems without errors.
  \textit{International Economic Review}, \textbf{66}~(4), 1599--1617.

\bibitem[{de~Clippel and Rozen(2024)}]{de2024bounded}
\textsc{de~Clippel, G.} and \textsc{Rozen, K.} (2024). Bounded rationality in
  choice theory: A survey. \textit{Journal of economic literature},
  \textbf{62}~(3), 995--1039.

\bibitem[{Ellis \textit{et~al.}(2024)Ellis, Kariv and Ozbay}]{ellis2024}
\textsc{Ellis, K.}, \textsc{Kariv, S.} and \textsc{Ozbay, E.} (2024).
  \textit{Predicting and Understanding Individual-Level Choice Under
  Uncertainty}. Tech. rep., Working Paper.

\bibitem[{Ellis and Neff(2025)}]{EllisNeff}
\textsc{---} and \textsc{Neff, S.} (2025). Model complexity and
  restrictiveness. \textit{Working Paper}.

\bibitem[{Fudenberg \textit{et~al.}(2026)Fudenberg, Gao and
  Liang}]{fudenberg2023flexible}
\textsc{Fudenberg, D.}, \textsc{Gao, W.} and \textsc{Liang, A.} (2026). How
  flexible is that functional form? quantifying the restrictiveness of
  theories. \textit{Review of Economics and Statistics}, \textbf{108}~(1),
  194--209.

\bibitem[{Fudenberg \textit{et~al.}(2022)Fudenberg, Kleinberg, Liang and
  Mullainathan}]{FKLM}
\textsc{---}, \textsc{Kleinberg, J.}, \textsc{Liang, A.} and
  \textsc{Mullainathan, S.} (2022). Measuring the completeness of economic
  models. \textit{Journal of Political Economy}, \textbf{130}~(4), 956--990.

\bibitem[{Gentzkow \textit{et~al.}(2024)Gentzkow, Shapiro, Yang and
  Yurukoglu}]{gentzkow2024}
\textsc{Gentzkow, M.}, \textsc{Shapiro, J.~M.}, \textsc{Yang, F.} and
  \textsc{Yurukoglu, A.} (2024). Pricing power in advertising markets: Theory
  and evidence. \textit{American Economic Review}, \textbf{114}~(2), 500--533.

\bibitem[{Hansen and Jagannathan(1997)}]{HansenJagannathan1997}
\textsc{Hansen, L.~P.} and \textsc{Jagannathan, R.} (1997). Assessing
  specification errors in stochastic discount factor models. \textit{The
  Journal of Finance}, \textbf{52}~(2), 557--590.

\bibitem[{Le~Gratiet and Garnier(2015)}]{legratietgarnier2015}
\textsc{Le~Gratiet, L.} and \textsc{Garnier, J.} (2015). Asymptotic analysis of
  the learning curve for gaussian process regression. \textit{Machine
  learning}, \textbf{98}~(3), 407--433.

\bibitem[{Maatouk and Bay(2017)}]{maatouk2017gaussian}
\textsc{Maatouk, H.} and \textsc{Bay, X.} (2017). Gaussian process emulators
  for computer experiments with inequality constraints. \textit{Mathematical
  Geosciences}, \textbf{49}~(5), 557--582.

\bibitem[{McFadden and Train(2000)}]{mcfadden2000mixed}
\textsc{McFadden, D.} and \textsc{Train, K.} (2000). Mixed mnl models for
  discrete response. \textit{Journal of applied Econometrics}, \textbf{15}~(5),
  447--470.

\bibitem[{Montiel~Olea and Prat(2025)}]{montiel2025competing}
\textsc{Montiel~Olea, J.~L.} and \textsc{Prat, A.} (2025). Competing
  ideologies: Fit, simplicity, and fear. \textit{Simplicity, and Fear (March
  21, 2025)}.

\bibitem[{Nevo(2000)}]{nevo2000mergers}
\textsc{Nevo, A.} (2000). Mergers with differentiated products: The case of the
  ready-to-eat cereal industry. \textit{The Rand journal of economics}, pp.
  395--421.

\bibitem[{Riihim{\"a}ki and Vehtari(2010)}]{riihimaki2010gaussian}
\textsc{Riihim{\"a}ki, J.} and \textsc{Vehtari, A.} (2010). Gaussian processes
  with monotonicity information. In \textit{Proceedings of the thirteenth
  international conference on artificial intelligence and statistics}, JMLR
  Workshop and Conference Proceedings, pp. 645--652.

\bibitem[{Schwaninger(2022)}]{Schwaninger2022}
\textsc{Schwaninger, M.} (2022). Sharing with the powerless third:
  Other-regarding preferences in dynamic bargaining. \textit{Journal of
  Economic Behavior \& Organization}, \textbf{197}, 341--355.

\bibitem[{Selten(1991)}]{Selten}
\textsc{Selten, R.} (1991). Properties for a measure of predictive success.
  \textit{Mathematical Social Sciences}, \textbf{21}, 153--167.

\bibitem[{Shively \textit{et~al.}(2009)Shively, Sager and
  Walker}]{shively2009bayesian}
\textsc{Shively, T.~S.}, \textsc{Sager, T.~W.} and \textsc{Walker, S.~G.}
  (2009). A bayesian approach to non-parametric monotone function estimation.
  \textit{Journal of the Royal Statistical Society Series B: Statistical
  Methodology}, \textbf{71}~(1), 159--175.

\bibitem[{Snelson \textit{et~al.}(2003)Snelson, Ghahramani and
  Rasmussen}]{snelson2003warped}
\textsc{Snelson, E.}, \textsc{Ghahramani, Z.} and \textsc{Rasmussen, C.}
  (2003). Warped gaussian processes. \textit{Advances in neural information
  processing systems}, \textbf{16}.

\bibitem[{Swiler \textit{et~al.}(2020)Swiler, Gulian, Frankel, Safta and
  Jakeman}]{swiler2020survey}
\textsc{Swiler, L.~P.}, \textsc{Gulian, M.}, \textsc{Frankel, A.~L.},
  \textsc{Safta, C.} and \textsc{Jakeman, J.~D.} (2020). A survey of
  constrained gaussian process regression: Approaches and implementation
  challenges. \textit{Journal of Machine Learning for Modeling and Computing},
  \textbf{1}~(2).

\bibitem[{Tamer(2003)}]{tamer2003incomplete}
\textsc{Tamer, E.} (2003). Incomplete simultaneous discrete response model with
  multiple equilibria. \textit{The Review of Economic Studies},
  \textbf{70}~(1), 147--165.

\bibitem[{Williams and Rasmussen(2006)}]{williams2006gaussian}
\textsc{Williams, C.~K.} and \textsc{Rasmussen, C.~E.} (2006). \textit{Gaussian
  processes for machine learning}, vol.~2. MIT press Cambridge, MA.

\bibitem[{Yanagizawa-Drott(2026)}]{projectAPE}
\textsc{Yanagizawa-Drott, D.} (2026). Project {APE}: {A}utonomous policy
  evaluation. \texttt{https://ape.socialcatalystlab.org/}, Social Catalyst Lab,
  University of Zurich, accessed: February 2026.

\end{thebibliography}
\bibliographystyle{ecca}
\onehalfspacing


\newpage
\appendix
\setcounter{figure}{0}
\setcounter{table}{0}
\renewcommand{\thefigure}{A.\arabic{figure}}
\renewcommand{\thetable}{A.\arabic{table}}

\noindent\textbf{\Large Appendix}

\section{Proof of Proposition \ref{prop:d_equiv_additive}}

\begin{proof}
    Any element $f \in \mathcal{F}_\Theta$ can be written as
    \[
        f(y_c,x) = f_\theta(y_c,x) + h(y_c,x) - \overline{h}(x),
    \]
    for some $\theta \in \Theta$ and $h \in \mathcal{F}$, where $\overline{h}(x) := \E[h(Y_c,x)|x]$. Similarly we write $\overline{f}(x) := \E[f(Y_c,x)|x]$, and by the linearity of the conditional expectation we have
    \[
        \overline{f}(x) = \E[f_\theta(Y_c,x)|x] + \E[h(Y_c,x)|x] - \overline{h}(x) = \overline{f}_\theta(x).
    \]
    Thus, we can decompose $f$ into:
    \[
        f(y_c,x) = \overline{f}_\theta(x) + u(y_c,x), \quad \text{where } u(y_c,x) = f_\theta(y_c,x) - \overline{f}_\theta(x) + h(y_c,x) - \overline{h}(x).
    \]
    Similarly, decompose the target function $g(y_c,x)$ as $g(y_c,x) = \overline{g}(x) + v(y_c,x)$, where $v(y_c,x) := g(y_c,x) - \overline{g}(x)$ satisfies $\E[v|x]=0$. Then 
    \begin{align*}
        d(f,g) &= \E_{P_{Y_c,X}}\left[ \left( (\overline{f}_\theta(X) + u(Y_c,X)) - (\overline{g}(X) + v(Y_c,X)) \right)^2 \right] \\
               &= \E_{P_{Y_c,X}}\left[ \left( (\overline{f}_\theta(X) - \overline{g}(X)) + (u(Y_c,X) - v(Y_c,X)) \right)^2 \right].
    \end{align*}
    Expanding the square, we observe that the cross-term vanishes:
    \[
        \E\left[ (\overline{f}_\theta(X) - \overline{g}(X)) (u(Y_c,X) - v(Y_c,X)) \right] = \E_X \left[ (\overline{f}_\theta - \overline{g}) \underbrace{\E[u - v | X]}_{=0} \right] = 0.
    \]
    Therefore, the discrepancy separates into two additive terms:
    \[
        d(f,g) = \E_{P_X}\left[ (\overline{f}_\theta(X) - \overline{g}(X))^2 \right] + \E_{P_{Y_c,X}}\left[ (u(Y_c,X) - v(Y_c,X))^2 \right].
    \]
    Hence, 
    \begin{align*}
    \inf_{f\in\cF_\T}d(f,g) & = \inf_{\t\in\T}\E_{P_X}\left[ (\overline{f}_\theta(X) - \overline{g}(X))^2 \right] + \inf_{h\in\cF}\E_{P_{X}}\left[ (u(Y_c,X) - v(Y_c,X))^2 \right]\\ 
    &= \inf_{\t\in\T}\E_{P_X}\left[ (\overline{f}_\theta - \overline{g})^2 \right] + \inf_{h\in\cF}\E\left[ \left( (f_\theta - \overline{f}_\theta + h - \overline{h}) - (g - \overline{g}) \right)^2 \right]\\
    & =\inf_{\t\in\T}\E_{P_X}\left[ (\overline{f}_\theta - \overline{g})^2 \right] +0
    \end{align*}
since we may choose $h(y_c,x) = g(y_c,x) - f_\theta(y_c,x)$ so that that $\overline{h} = \overline{g} - \overline{f}_\theta$ and thus $\E\left[ \left( (f_\theta - \overline{f}_\theta + h - \overline{h}) - (g - \overline{g}) \right)^2\right]=0$.
    Thus,
    \[
        \inf_{f \in \mathcal{F}_\Theta} d(f,g) = \inf_{\theta \in \Theta} \E_{P_X}\left[ (\overline{f}_\theta(X) - \overline{g}(X))^2 \right] + 0 = \inf_{\theta \in \Theta} \overline{d}(\overline{f}_\theta, \overline{g}).\qedhere
    \]
\end{proof}

\section{Inference with Estimated Discrepancies}\label{sec:EstimatedDisc}
 As in \cite{EllisNeff}, we assume that
for $\lambda_\cF$-almost every $f\in \cF$ the discrepancy admits a moment
representation
$
d(f_\theta,f)
  \;=\; \E_{P_X}\left[g(X,\theta;f)\right],$ for a measurable function $g:\X\times\Theta\times \cF \to\R$, and similarly
$
d(f_{\mathrm{base}},f)
  \;=\; \E_{P_X}\left[g_{\mathrm{base}}(X;f)\right]
$
for some measurable $g_{\mathrm{base}}:\X\times \cF\to\R$.

\begin{assumption}[Finite Moments] $\sup_{\t,f}\E\left[\left(g(\cdot,\theta;f)\right)^{2+\e}\right]<\infty$ and $\sup_{\t,f}\E\left[(g_{\mathrm{base}}(\cdot;f))^{2+\e}\right]<\infty$ for some $\e>0.$
\end{assumption}

Given $S_n$ we define the empirical analogs
\[
d_n(f_\theta,f)
  := \frac{1}{n}\sum_{i=1}^n g(X_i,\theta;f),
  \quad
d_n(f_{\mathrm{base}},f)
  := \frac{1}{n}\sum_{i=1}^n g_{\mathrm{base}}(X_i;f),
\]
and the profiled empirical discrepancy
\[
d_n(F_\Theta,f)
  := \inf_{\theta\in\Theta} d_n(f_\theta,f)
  = \inf_{\theta\in\Theta} \frac{1}{n}\sum_{i=1}^n g(X_i,\theta;f).
\]

Let $f_1,\dots,f_M$ be independent draws from $\lambda_\cF$, independent of
$S_n$. Recall that for each $f$ we denote the population profiled discrepancy by
\[
d(F_\Theta,f)
  := \inf_{\theta\in\Theta} d(f_\theta,f)
  = \inf_{\theta\in\Theta} \E_{P_X}[g(X,\theta;f)].
\]
The Monte Carlo estimator of the numerator and denominator of the
restrictiveness index is
\[
\widehat\mu_{1,n,M}
  := \frac{1}{M}\sum_{m=1}^M d_n(F_\Theta,f_m),
  \quad
\widehat\mu_{0,n,M}
  := \frac{1}{M}\sum_{m=1}^M d_n(f_{\mathrm{base}},f_m),
\]
and the plug-in estimator of restrictiveness is
$
\widehat r_{n,M}
  \;:=\;
  \frac{\widehat\mu_{1,n,M}}{\widehat\mu_{0,n,M}}.$ 
For fixed $M$ this targets the finite-$M$ quantity
\[
\mu_{1,M}
  := \frac{1}{M}\sum_{m=1}^M d(F_\Theta,f_m),
  \quad
\mu_{0,M}
  := \frac{1}{M}\sum_{m=1}^M d(f_{\mathrm{base}},f_m),
  \quad
r_M := \frac{\mu_{1,M}}{\mu_{0,M}},
\]
with $r_M\to r(\cF _\Theta,\cF )$ almost surely as $M\to\infty$ by the LLN.

We now seek to establish a central limit theorem for
$\widehat r_{n,M}$ as $n\to\infty$, for fixed $M$, that explicitly incorporates sampling uncertainty in $(X_1,\dots,X_n)$, and to propose a feasible variance estimator.

\begin{assumption}[Asymptotic Linearity]
\label{ass:value_linear}
For $\lambda_\cF$-almost every $f\in \cF$ there exists a measurable function
$\phi_1(\cdot;f):\X\to\R$ with $\E_{P_X}[\phi_1(X;f)]=0$ and
$\E_{P_X}[\phi_1(X;f)^2]<\infty$ such that, as $n\to\infty$,
\begin{equation}
d_n(F_\Theta,f) - d(F_\Theta,f)
  \;=\;
  \frac{1}{n}\sum_{i=1}^n \phi_1(X_i;f) + o_p(n^{-1/2}).
\label{eq:value_expansion_model}
\end{equation}
\end{assumption}

For the baseline discrepancy $d_n(f_{\mathrm{base}},f)$, an explicit expansion is always available:
\[
d_n(f_{\mathrm{base}},f) - d(f_{\mathrm{base}},f)
  = \frac{1}{n}\sum_{i=1}^n
    \Big(
      g_{\mathrm{base}}(X_i;f)
      - \E_{P_X}[g_{\mathrm{base}}(X;f)]
    \Big):= \frac{1}{n}\sum_{i=1}^n \phi_0(X_i; f),
\]

Given Assumption~\ref{ass:value_linear}, the estimator $\widehat r_{n,M}$ has a simple influence-function representation. For each observation $X_i$ define
\[
\Phi_i^{(1)}
  := \frac{1}{M}\sum_{m=1}^M \phi_1(X_i;f_m),
  \quad
\Phi_i^{(0)}
  := \frac{1}{M}\sum_{m=1}^M \phi_0(X_i;f_m),
\]
and the combined influence function 
\begin{equation}
\psi_M(X_i)
  := \frac{1}{\mu_{0,M}}\big(
        \Phi_i^{(1)} - r_M \Phi_i^{(0)}
      \big).
\label{eq:psiM_def}
\end{equation}
Note that $\E_{P_X}[\psi_M(X_1)]=0$ and
$\text{Var}(\psi_M(X_1))<\infty$ by Assumption~\ref{ass:value_linear}.

\begin{theorem}[Asymptotic Normality of $\widehat r_{n,M}$]
\label{thm:clt_rhat_value}
Under Assumption~\ref{ass:value_linear}, fix $M\ge1$ and independent
draws $f_1,\dots,f_M\sim\lambda_\cF$, and define $\mu_{1,M}$, $\mu_{0,M}$ and
$r_M$ as above. Then, conditional on $(f_1,\dots,f_M)$,
\[
\sqrt{n}\big(\widehat r_{n,M} - r_M\big)
  \;\;\dto\;\; N(0,\sigma_M^2),
  \quad
  \sigma_M^2 := \text{Var}\big(\psi_M(X_1)\big),
\]
as $n\to\infty$, where $\psi_M$ is given by \eqref{eq:psiM_def}.
\end{theorem}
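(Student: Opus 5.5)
We condition on $(f_1,\dots,f_M)$ throughout. Since these draws are independent of $S_n$, the $X_i$ remain i.i.d.\ from $P_X$ conditionally, and the expansion in Assumption~\ref{ass:value_linear} holds simultaneously for all $m=1,\dots,M$ on a probability-one set of draws. This uses only that $M$ is finite and each expansion holds for $\lambda_\cF$-almost every $f$. The argument is then a standard linearize-and-delta-method proof in three steps.

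\textbf{Step 1: linearize the numerator and denominator.} Average the expansions (\ref{eq:value_expansion_model}) over $m$. Because $M$ is fixed, a finite sum of $o_p(n^{-1/2})$ terms is still $o_p(n^{-1/2})$, so
\[
\widehat\mu_{1,n,M}-\mu_{1,M}=\frac1n\sum_{i=1}^n\Phi_i^{(1)}+o_p(n^{-1/2}).
\]
For the denominator the expansion is exact with no remainder:
\[
\widehat\mu_{0,n,M}-\mu_{0,M}=\frac1n\sum_{i=1}^n\Phi_i^{(0)}.
\]
Here $\E[\phi_0^2]<\infty$ follows from the Finite Moments assumption.

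\textbf{Step 2: joint CLT.} The vectors $(\Phi_i^{(1)},\Phi_i^{(0)})$ are i.i.d., have mean zero, and have finite second moments (each is a finite average of square-integrable functions). The multivariate Lindeberg--L\'evy CLT gives
\[
\sqrt n\big(\widehat\mu_{1,n,M}-\mu_{1,M},\ \widehat\mu_{0,n,M}-\mu_{0,M}\big)\dto N(0,\Sigma_M),
\]
using Slutsky's lemma to discard the $o_p(1)$ remainder.

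\textbf{Step 3: ratio algebra.} Write
\[
\widehat r_{n,M}-r_M=\frac{(\widehat\mu_{1,n,M}-\mu_{1,M})-r_M(\widehat\mu_{0,n,M}-\mu_{0,M})}{\widehat\mu_{0,n,M}}.
\]
Step 1 gives $\widehat\mu_{0,n,M}\pto\mu_{0,M}$. The numerator times $\sqrt n$ equals
\[
n^{-1/2}\sum_i\big(\Phi_i^{(1)}-r_M\Phi_i^{(0)}\big)+o_p(1),
\]
which is $O_p(1)$. Slutsky's lemma then yields
\[
\sqrt n(\widehat r_{n,M}-r_M)=n^{-1/2}\sum_i\psi_M(X_i)+o_p(1)\dto N(0,\sigma_M^2).
\]
This is the same as applying the delta method to $(a,b)\mapsto a/b$ with gradient $(1/\mu_{0,M},-r_M/\mu_{0,M})$.

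\textbf{Main obstacle.} The delicate point is ensuring $\mu_{0,M}>0$, so that the ratio and $\psi_M$ are well defined. Nondegeneracy only gives $\E_{\lambda_\cF}[d(\cF_{\text{base}},f)]>0$, which does not rule out a particular finite sample of draws with $\mu_{0,M}=0$. I would handle this by restricting to the event $\mu_{0,M}>0$. Since $d\ge0$, that event has probability $1-\lambda_\cF(d(\cF_{\text{base}},f)=0)^M$, which is positive and tends to one as $M$ grows; on it the result holds as stated.

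A secondary point is the almost-sure simultaneous validity of the $M$ expansions. This is immediate because a finite union of null sets is null. The remainder of the argument is routine.
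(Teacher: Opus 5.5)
Your proof is correct and follows the paper's route. You linearize $\widehat\mu_{1,n,M}$ and $\widehat\mu_{0,n,M}$ via the asymptotic-linearity expansions, apply the multivariate CLT conditional on $(f_1,\dots,f_M)$, and pass to the ratio by the delta method; your explicit Slutsky computation is that same step written out. Your remarks on the almost-sure simultaneous validity of the $M$ expansions, and on restricting to the event $\mu_{0,M}>0$ (probability $1-\lambda_\cF(d(\cF_{\text{base}},f)=0)^M$), correctly fill in points the paper leaves implicit.
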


\begin{proof}
By Assumption~\ref{ass:value_linear},
\begin{align*}
\widehat\mu_{1,n,M}
  &= \frac{1}{M}\sum_{m=1}^M d_n(F_\Theta,f_m) \\
  &= \frac{1}{M}\sum_{m=1}^M
      \Big(
        d(F_\Theta,f_m)
        + \frac{1}{n}\sum_{i=1}^n \phi_1(X_i;f_m)
        + o_p(n^{-1/2})
      \Big) \\
  &= \mu_{1,M}
     + \frac{1}{n}\sum_{i=1}^n \Phi_i^{(1)}
     + o_p(n^{-1/2}),
\end{align*}
and similarly
\[
\widehat\mu_{0,n,M}
  = \mu_{0,M}
    + \frac{1}{n}\sum_{i=1}^n \Phi_i^{(0)}
    + o_p(n^{-1/2}).
\]
Therefore
\[
\sqrt{n}
\begin{pmatrix}
  \widehat\mu_{1,n,M} - \mu_{1,M} \\
  \widehat\mu_{0,n,M} - \mu_{0,M}
\end{pmatrix}
=
\frac{1}{\sqrt{n}}\sum_{i=1}^n
\begin{pmatrix}
  \Phi_i^{(1)} \\
  \Phi_i^{(0)}
\end{pmatrix}
+ o_p(1).
\]
Conditional on $(f_1,\dots,f_M)$, the vector
$(\Phi_i^{(1)},\Phi_i^{(0)})$ is i.i.d.\ in $i$ with finite second
moments, so by the multivariate central limit theorem,
\[
\sqrt{n}
\begin{pmatrix}
  \widehat\mu_{1,n,M} - \mu_{1,M} \\
  \widehat\mu_{0,n,M} - \mu_{0,M}
\end{pmatrix}
\dto N(0,\Sigma_M),
\]
where $\Sigma_M$ is the $2\times 2$ covariance matrix of
$(\Phi_1^{(1)},\Phi_1^{(0)})$.

View $\widehat r_{n,M}$ as
$\widehat r_{n,M} = h(\widehat\mu_{1,n,M},\widehat\mu_{0,n,M})$ with
$h(u,v):=u/v$. The gradient of $h$ at $(\mu_{1,M},\mu_{0,M})$ is
\[
\nabla h(\mu_{1,M},\mu_{0,M})
  = \Big(\frac{1}{\mu_{0,M}},\; -\,\frac{\mu_{1,M}}{\mu_{0,M}^2}\Big)
  = \Big(\frac{1}{\mu_{0,M}},\; -\,\frac{r_M}{\mu_{0,M}}\Big).
\]
By the delta method,
\[
\sqrt{n}\big(\widehat r_{n,M} - r_M\big)
  \dto N(0,\sigma_M^2),
\]
with $\sigma_M^2
   = \nabla h^\top \Sigma_M \nabla h$.
Then,
$\nabla h^\top(\Phi_1^{(1)},\Phi_1^{(0)})^\top = \psi_M(X_1)$, so
$\sigma_M^2 = \text{Var}(\psi_M(X_1))$, as claimed.
\end{proof}

We now consider variance estimation. The baseline term has  the exact representation
$
\phi_0(X;f_m)
  = g_{\mathrm{base}}(X;f_m) - d(f_{\mathrm{base}},f_m),
$
so a natural plug-in estimator is
\[
\widehat\phi_0(X_i;f_m)
  := g_{\mathrm{base}}(X_i;f_m) - d_n(f_{\mathrm{base}},f_m).
\]
For the model term,  we use
$
\widehat\phi_1(X_i;f_m)
  := g\big(X_i,\widehat\theta_n(f_m);f_m\big)
     - d_n(F_\Theta,f_m),
$
where $\widehat\theta_n(f_m)$ is any approximate minimizer of
$d_n(f_\theta,f_m)$ (for instance, the output of the numerical optimization used
to compute $d_n(F_\Theta,f_m)$). Under the conditions that justify
\eqref{eq:value_expansion_model}, this plug-in is consistent for $\phi_1$ in
$L^2(P_X)$.

Define the empirical analogs
\[
\widehat\Phi_i^{(1)}
  := \frac{1}{M}\sum_{m=1}^M \widehat\phi_1(X_i;f_m),
  \quad
\widehat\Phi_i^{(0)}
  := \frac{1}{M}\sum_{m=1}^M \widehat\phi_0(X_i;f_m),
\]
and
\[
\widehat\psi_i
  := \frac{1}{\widehat\mu_{0,n,M}}
     \Big(
       \widehat\Phi_i^{(1)} - \widehat r_{n,M}\,\widehat\Phi_i^{(0)}
     \Big).
\]
The plug-in variance estimator is
\begin{equation*}
\widehat\sigma_M^2
  := \frac{1}{n-1}\sum_{i=1}^n
      \big(\widehat\psi_i - \overline{\widehat\psi}\big)^2,
  \quad
  \overline{\widehat\psi}
    := \frac{1}{n}\sum_{i=1}^n \widehat\psi_i.
\label{eq:var_estimator}
\end{equation*}
Under Assumption~\ref{ass:value_linear} and mild additional moment conditions
(e.g.\ $\E[\psi_M(X_1)^4]<\infty$), a standard law-of-large-numbers argument and the 
continuous mapping theorem yield
$
\widehat\sigma_M^2 \;\pto\; \sigma_M^2,$
so that
$
\frac{\sqrt{n}\big(\widehat r_{n,M} - r_M\big)}{\widehat\sigma_M}
  \;\dto\; N(0,1).
$
Alternatively, we can use nonparametric bootstrap in $X$: resampling $X_1,\dots,X_n$ with replacement while holding $f_1,\dots,f_M$ fixed.

\newpage	
\setcounter{section}{0}
\renewcommand{\thesection}{S.\arabic{section}}
\setcounter{subsection}{0}
\renewcommand{\thesubsection}{\thesection.\arabic{subsection}}

\clearpage
\begin{center}
{\Large\bfseries Online Appendix \par}
\vspace{1em}
\end{center}

\section{Structural-Form Restrictiveness}

Alternatively, we may want to focus on the \emph{structural form} as a mapping from endogenous and exogenous covariates to outcomes instead of the reduced form. In addition, there are many partially specified structural models (often identified via the use of IV exogeneity conditions) that do not admit a reduced-form representation. In such cases, we define a notion of structural-form restrictiveness under the following  assumption.

\begin{assumption}[Outcome-Covariate Representation]\label{assu:oc_rep}
Assume that the structural equation model \eqref{eq:struc-form} admits the following outcome representation
\begin{equation}\label{eq:outcome-rep}
Y_{o,i}=f_{\t_{0}}\left(Y_{c,i},X_{i},\e_{i}\right)
\end{equation}
for some known mapping $f_\t$ and parameter $\t_0 \in \T$, with $Y_{o,i}$ denoting the outcome variable and $Y_{c,i}$ denoting the endogenous covariates.
\end{assumption}

Structural-form restrictiveness measures constraints on counterfactual mappings, not on observed moments. The subtlety of the structural-form restrictiveness lies in how we define the class of conditional distributions generated by $f_\t$, which we now describe.  Let $\mathcal{Y}_o$ be the space of \emph{distributions} on the domain of $Y_{o,i}$, let $\ol{\mathcal{X}}:=\mathcal{Y}_c\times \mathcal{X}$ be the joint domain of augmented covariates $(Y_{c,i},X_i)$, and let $\mathcal{F}$ be a given eligible class of mappings that associates with each covariate vector $(y_c,x)\in\ol{\mathcal{X}}$  the conditional distribution $P_{Y_o|y_c,x} \in \mathcal{Y}_o$.

For each structural parameter $\theta$ and each admissible distribution of $\e_i$, we define the (counterfactual) conditional distribution of outcome as:
\begin{equation}
    P^*_{Y_o|Y_c,X}(f_\t) := P_{f_{\t}(Y_c,X,\tilde{\e})}, \quad\text{with } \tilde{\e}\sim\e \text{ and } \tilde{\e}\indep(Y_c,X)
\end{equation}
Importantly, in the expression $f_{\t}(y_c,X,\tilde{e})$, the error argument $\tilde{e}$ is an \emph{independent} copy of the structural error $\e$ and independent of all the covariates $(Y_c,X)$.

The endogenous covariate $Y_c$ is (counterfactually) held fixed at value $y_c$ as a constant, and the randomness of $X$ is conditioned upon (which is nevertheless irrelevant for the distribution of $\e$ due to assumed independence between $X$ and $\e$). Crucially, notice that 
$$f_{\t}(y_c,X,\e)|X=x\quad \not\sim  \quad f_{\t}(Y_c,X,\e)|(Y_c,X)=(y_c,x),$$
since the conditional distribution of $\e$ given $Y_c=c$ is generally different from the (unconditional) distribution of $\e$, precisely due to the endogeneity of $Y_c$. Economists are interested in the structural form precisely because of the need for counterfactual analysis as encoded in the definition of $P^*_{Y_o|y_c,X}(f_\t)$.

Hence, the class of structural form mappings associated with \eqref{eq:struc-form} is given by
\[
  \mathcal{F}_{\T,SF}
  := \big\{ P^*_{Y_o|Y_c,X}(f_\t) : \t \in \T \big\}
  \subseteq \mathcal{F}.
\]
Given a primitive discrepancy function $d$ on (conditional) distributions, such as KL-divergence or Wasserstein distance, we may define the structural-form discrepancy function $d_{SF}$ induced by $d$:
\begin{equation} \label{eq:d_SF}
d_{SF}\left(f_{\t},g\right):=\left\{ d\left(P^*_{\rest {Y_o}Y_c,X}\left(f_{\t}\right),g\right)\right\},\quad \forall g\in{\cF},
\end{equation}
We then define the \emph{structural-form restrictiveness} $r_{SF}$ as follows.

\begin{defn}[Structural-Form Restrictiveness] Under Assumption \ref{assu:oc_rep}, we define the structural-form restrictiveness of model \eqref{eq:struc-form} as
$$r_{SF} := r(\cF_{\T,SF};\cF,d_{SF}),$$
based on the discrepancy function $d_{SF}$ according to Definition \ref{def:rest}. 
\end{defn}

The structural-form restrictiveness $r_{SF}$ captures how tightly the model specification \eqref{eq:struc-form} constrains the
space of mappings from the space of endogenous and exogenous covariates to the outcome space, independently of the form of endogeneity between the structural error $\e$ and the endogenous covariate $Y_c$ and how the endogeneity issue is dealt with to achieve econometric identification. The endogeneity issue ``disappears'' when we replace the structural error $\e$ with a completely i.i.d. copy of it $\tilde{\e}$. However, it should be pointed out that this should be by no means interpreted as a ``solution'' of the endogeneity issue; rather, the construction with an independent $\tilde{\e}$ is intended to provide the structural-form restrictiveness with an interpretation as a counterfactual relationship that economists often hope to obtain in structural models.

\begin{rem}[Simplification under Additive Errors]\label{rem:SF_Add2}
When the reduced-form model \eqref{eq:reduce-form} has an additive-error structure of the form \eqref{eq:reduce-form-add}, we can again simplify the definition of the structural-form restrictiveness $r_{SF}$ in a similar manner as in Remark \ref{rem:RF_Add}. Specifically, we may take:
\begin{itemize}
    \item $\mathcal{Y}$ to be the support of the outcome variable $Y_{o,i}$.
    \item $\mathcal{F}$ to a given eligible set of mappings from the support of $(Y_c,X)$ to $\mathcal{Y}$.
    \item $d$ to be the mean squared distance, i.e., $L_{2,(Y_c,X)}$ distance.
\end{itemize}
\end{rem}

\setcounter{example}{0}
\begin{example}[Demand and Supply: Continued] To illustrate the structural form restrictiveness in this example, focus on the demand equation 
$$ Q_i = \a_1 +\b_1P_i+\g_1X_{i1}+\e_{i1},\quad \E[\rest{\e_{i1}}X_i] = 0,$$
so that the quantity $Q_i$ is the outcome variable $Y_{o,i}$, the price $P_i$ is the endogenous covariate $Y_{c,i}$, while $X_i = (X_{i1},X_{i2})$ are the exogenous demand and supply shifters. Then the structural-form discrepancy function on the demand function is given by 
$$d_{SF}(\cF_\T,f) = \inf_{\a_1,\b_1,\g_1} \E\left[\left(\a _1+\b_1 P_i +\g_1X_{i1} - f(P_i,X_{i})\right)^2\right]$$
and the structural-form restrictiveness is given by
$$r_{SF} = \frac{
\E_{f\sim\l_\cF}\left[\inf_{\a_1,\b_1,\g_1} \E\left[\left(\a _1+\b_1 P_i +\g_1X_{i1} - f(P_i,X_{i})\right)^2\right]\right]}
{\E_{f \sim \lambda_{\cF}} \left[\text{Var}(f(P_i,X_{i}))\right]
}$$
One may similarly define the discrepancy function $d_{SF}$ and the structural-form restrictiveness $r_{SF}$ for the supply equation as well, or for the demand and supply equations together.

Notice that  structural form restrictiveness coincides with the restrictiveness of the following demand model with a  ``misspecified'' exogeneity condition:
$$ Q_i = \a_1 +\b_1P_i+\g_1X_{i1}+\e_{i1},\quad \E[\rest{\e_{i1}}P_i,X_{i}] = 0.$$
which imposes a linear additive structure on the demand equation, together with the exclusion restriction that the supply shock $X_{i2}$ does not enter into the structural demand equation. This is clearly different, both in terms of mathematical definition and economic interpretations, from the reduced-form restrictiveness defined earlier for this example.
\end{example}

\subsubsection*{Reduced-Form vs. Structural-Form Restrictiveness}

Formally, the structural primitives are mapped into the reduced form via an equilibrium operator
\[
  T : \mathcal{S}_{\mathrm{SF}}
      \longrightarrow \mathcal{S}_{RF},
\]
where $\mathcal{S}_{\mathrm{SF}}$ is a space of admissible structural equations, and $T$ assigns to each structural
specification its implied reduced form \eqref{eq:reduce-form}, viewed as an element in $\mathcal{S}_{RF}$.

The reduced-form restrictiveness $r_{\mathrm{RF}}$ of a model is therefore a property of the \emph{image} of a structural-form model \eqref{eq:struc-form} parametrized by $\t\in\T$ under the equilibrium operator $T$, while the structural-form restrictiveness $r_{\mathrm{SF}}$ is a property of the structural class structural-form model \eqref{eq:struc-form} itself, before applying the equilibrium operator $T$. In general, these two need not coincide.

More generally, one can view $r_{\mathrm{SF}}$ and $r_{\mathrm{RF}}$ as complementary diagnostics. Structural-form restrictiveness $r_{\mathrm{SF}}$ answers the question:
\emph{How strongly does the economic structure constrain the mapping from endogenous variables and shifters to outcomes (e.g.\ demand curves)?} Reduced-form restrictiveness $r_{\mathrm{RF}}$ answers the question: \emph{How strongly does the combination of
structure, equilibrium, and error assumptions constrain the observable mapping from instruments to equilibrium outcomes?} In applications, it may be informative to report both, to separate the contributions of structural assumptions from those of equilibrium and reduced-form structure.

\section{Rademacher Complexity and VC Dimension}
\label{sec:app-rad-vc}

Let $\mathcal X$ be a closed and bounded infinite subset of $\mathbb R^m$ for some finite $m\in\mathbb N$, let $X$ be a random vector on $\mathcal X$ with distribution $P_X$, and let the outcome space be $\mathcal Y=\{-1,1\}$. Let $\bar{\mathcal F}_\sigma=\{f:\mathcal X\to\{-1,1\}\}$ denote the set of all deterministic binary classifiers, and let $\mathcal F_\Theta\subseteq\bar{\mathcal F}_\sigma$ be a model class. Take the eligible set to be $\mathcal F=\bar{\mathcal F}_\sigma$ with evaluation distribution $\lambda_{\mathcal F}$, and define the discrepancy
\begin{equation}
d_{\mathrm{Rad}}(f,g)
:=
\mathbb E_{X\sim P_X}\!\left[1-f(X)g(X)\right],
\quad f,g\in\bar{\mathcal F}_\sigma .
\label{eq:d-rad-app}
\end{equation}
Since $f(X),g(X)\in\{-1,1\}$, $d_{\mathrm{Rad}}$ is proportional to the misclassification rate under labels $g(X)$.

For this choice of $(\mathcal F,d)$, the population approximation error is
\[
e(\mathcal F_\Theta,\mathcal F,d_{\mathrm{Rad}})
=
\mathbb E_{f\sim\lambda_{\mathcal F}}
\left[d_{\mathrm{Rad}}(\mathcal F_\Theta,f)\right],
\quad
d_{\mathrm{Rad}}(\mathcal F_\Theta,f)
=
\inf_{h\in\mathcal F_\Theta} d_{\mathrm{Rad}}(h,f).
\]
Under standard regularity conditions and with a singleton baseline class $\cF_{\mathrm{base}}=\{f_{\mathrm{base}}\}$, \citet{EllisNeff} show that
\begin{equation}
r(\mathcal F_\Theta,\mathcal F,d_{\mathrm{Rad}})
=
1-\lim_{n\to\infty}\mathfrak R_n(\mathcal F_\Theta),
\label{eq:r-vs-rademacher}
\end{equation}
where $\mathfrak R_n(\mathcal F_\Theta)$ denotes the Rademacher complexity of $\mathcal F_\Theta$. Since $\mathfrak R_n(\mathcal F_\Theta)\to0$ for any class with finite VC dimension, it follows that
$
r(\mathcal F_\Theta,\mathcal F,d_{\mathrm{Rad}})=1$
for all such model classes. The resulting degeneracy follows from the correlation-based discrepancy $d_{\mathrm{Rad}}$; it does not arise for many alternative discrepancies that measure approximation error directly.

VC dimension is associated with an equivalent discrepancy. For a finite set $\{x_1,\dots,x_m\}\subset\mathcal X$, define 
$
\hat d_{\mathrm{VC}}(f,g)
=
\frac{1}{m}\sum_{i=1}^m \mathbf 1\{f(x_i)\neq g(x_i)\},
$
with population analog
$
d_{\mathrm{VC}}(f,g)=\mathbb E[\mathbf 1\{f(X)\neq g(X)\}].$
This discrepancy induces the same notion of restrictiveness studied by \citet{EllisNeff}. The VC dimension of $\mathcal F$ is the largest $m$ for which such a finite set exists.

Since $f(X),g(X)\in\{-1,1\}$,
$
1-f(X)g(X)=2\,\mathbf 1\{f(X)\neq g(X)\},
$
which yields the following equivalence.

\begin{proposition}
\label{prop:RadVC_equiv}
In the binary classification setting,
$
d_{\mathrm{Rad}}(f,g)=2\,d_{\mathrm{VC}}(f,g).$ Hence
$
r(\cF_\Theta,\cF,d_{\mathrm{Rad}})
\equiv
r(\cF_\Theta,\cF,d_{\mathrm{VC}}),$
since restrictiveness is invariant to rescaling of the discrepancy.
\end{proposition}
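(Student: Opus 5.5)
The plan is to prove the pointwise identity first and then pass it through Definition~\ref{def:rest}. Fix $f,g\in\bar{\mathcal F}_\sigma$ and a realization $x\in\mathcal X$. Since $f(x),g(x)\in\{-1,1\}$, I check the two cases. If $f(x)=g(x)$, then $f(x)g(x)=1$, so $1-f(x)g(x)=0=2\,\mathbf 1\{f(x)\neq g(x)\}$. If $f(x)\neq g(x)$, then $f(x)g(x)=-1$, so $1-f(x)g(x)=2=2\,\mathbf 1\{f(x)\neq g(x)\}$. Both integrands are bounded and measurable. Taking expectations under $P_X$ and using linearity then gives $d_{\mathrm{Rad}}(f,g)=2\,d_{\mathrm{VC}}(f,g)$ for every pair $(f,g)$.

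Next I lift the identity to set-valued discrepancies. For any class $\mathcal G\subseteq\bar{\mathcal F}_\sigma$, whether the model class $\mathcal F_\Theta$ or the baseline $\cF_{\mathrm{base}}$, multiplying by the positive constant $2$ commutes with the infimum. Hence $d_{\mathrm{Rad}}(\mathcal G,f)=\inf_{h\in\mathcal G}2\,d_{\mathrm{VC}}(h,f)=2\,d_{\mathrm{VC}}(\mathcal G,f)$ for every $f$. Taking $\lambda_{\mathcal F}$-expectations, the numerator and denominator of \eqref{eq:r} under $d_{\mathrm{Rad}}$ are each exactly twice the corresponding quantities under $d_{\mathrm{VC}}$.

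The ratio therefore cancels the factor $2$, which is the invariance to rescaling noted after Definition~\ref{def:rest}. The only point needing care is the nondegeneracy assumption. Since $\mathbb E_{\lambda_{\mathcal F}}[d_{\mathrm{Rad}}(\cF_{\mathrm{base}},f)]=2\,\mathbb E_{\lambda_{\mathcal F}}[d_{\mathrm{VC}}(\cF_{\mathrm{base}},f)]$, the denominator is positive under one discrepancy if and only if it is positive under the other. So both restrictiveness indices are well defined together and coincide. There is no substantive obstacle. The argument is a two-case computation followed by linearity of expectation and homogeneity of the infimum.
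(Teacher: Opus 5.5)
Your proposal is correct and follows the paper's route: the pointwise identity $1-f(X)g(X)=2\,\mathbf 1\{f(X)\neq g(X)\}$ for $\{-1,1\}$-valued classifiers, then cancellation of the factor $2$ in the ratio that defines restrictiveness. Your extra steps only spell out what the paper leaves implicit: moving the positive constant through the infimum, and noting that nondegeneracy holds under one discrepancy exactly when it holds under the other.
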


Thus, Rademacher complexity and VC dimension correspond to the same underlying discrepancy in this setting, and both lead to degenerate restrictiveness under average-case approximation when the covariate space is infinite.

\section{Gaussian-Process Learning Curve: Details}\label{sec:oa-gp-learning-curve}

This section provides further details on the connection between restrictiveness and the Gaussian process (GP) learning curve, complementing the discussion in Section~\ref{subsec:learning-curve}.

It is useful to separate the roles of approximation, estimation, and irreducible noise. In GP regression, one typically takes squared loss $\ell(y,y') = (y-y')^2$, so $d(f,g)$ is an $L^2(P_X)$ prediction error.  \cite{legratietgarnier2015} consider the setting with $X_i \sim \mu$ on $\mathbb{R}^d$, a zero-mean GP prior with covariance kernel $k$, and noisy observations $Z(x_i) + \varepsilon_i$ where the noise variance scales as $\mathrm{Var}(\varepsilon_i) = n\tau$. They study the integrated mean squared error (IMSE) of the best linear unbiased predictor (BLUP),
	\[
	\mathrm{IMSE}_n
	= \int \sigma_n^2(x)\,d\mu(x),
	\]
	where $\sigma_n^2(x)$ is the posterior MSE of the predictor for the latent function
	value $Z(x)$. Their main result shows that, for a broad class of kernels,
	\begin{equation}
		\mathrm{IMSE}_n \;\xrightarrow{p}\;
		\sum_{p \geq 0} \frac{\tau \lambda_p}{\tau + \lambda_p},
		\text{ as } n\to\infty,
	\end{equation}
	where $(\lambda_p,\phi_p)$ are the eigenvalues and eigenfunctions of the covariance
	operator associated with $k$ and the design measure $\mu$.\footnote{See Section~3 of \citet{legratietgarnier2015} for a general statement and
		examples.}

In our notation, this corresponds to the following specialization. Let $\mathcal{F}$ be the set
	of sample paths of the GP, let $\lambda_{\mathcal{F}}$ be the GP prior, let $P_X = \mu$, and take
	$\ell(y,y')=(y-y')^2$, so that $d(f,g)$ is the squared $L^2(\mu)$ distance between functions.
	Let $\mathcal{A}$ be the BLUP estimator based on noisy observations. Then
	$L_n(\mathcal{F}_\Theta,\mathcal{A})$ is exactly the GP learning curve (IMSE) when we average over both
	$f \sim \lambda_{\mathcal{F}}$ and samples. By Proposition~\ref{prop:learning-curve}, in
	a noise-free ($\tau = 0$) and risk-consistent setting, $\lim_n L_n(\mathcal{F}_\Theta,\mathcal{A})$ must equal
	$\mathbb{E}_{f \sim \lambda_{\mathcal{F}}}[d(\mathcal{F}_\Theta,f)]$, i.e.\ the average approximation error of
	$\mathcal{F}_\Theta$ relative to $(\mathcal{F},\lambda_{\mathcal{F}},d)$.

	The key difference lies in what is kept in the
	limit. In Proposition~\ref{prop:learning-curve} we consider a noise-free design
	($Y_i = f(X_i)$) and a risk-consistent estimator, so that the finite-sample estimation error vanishes in the limit and
	the learning-curve limit contains only the \emph{population approximation error}
	$d(\mathcal{F}_\Theta,f)$. By contrast, \cite{legratietgarnier2015} let the observation noise variance grow proportionally to $n$, i.e., $\mathrm{Var}(\varepsilon_i) = n\tau$, so as to produce a nontrivial limit. Even when the GP prior
	is correctly specified so that the true function lies in the model class
	(and hence $d(\mathcal{F}_\Theta,f)=0$ in our notation), the limit is strictly
	positive: it reflects the uncertainty due to noisy observations (asymptotically balanced with the learning algorithm)
	rather than any lack of flexibility of the model class.

\section{Details and Additional Results for Section 5}\label{sec:applic.detail}
\subsection{Details for Section 5.1}
\label{subsec:detail.example1}
\paragraph{Algorithms to Sample from Constrained GP.}
To sample a multi-dimensional monotonic increasing function, the starting point is to initially sample a function $h$ from a Gaussian process $\mathcal{G} \mathcal{P}\left(0, K\left(x, x^{\prime}\right)\right)$. Since the drawn sample $h$ is not necessarily monotonic, we apply an algorithm to enforce monotonic increasing properties and obtain a function $g$ that satisfies the required monotonicity. The full procedure is detailed in Algorithms \ref{algo:monotone.sampler} and \ref{algo:mce} below. To enforce monotonic increasing behavior, we discretize the domain $\mathcal{D}$ into grid points. When a pair of points $\left(\mathbf{x}_i, \mathbf{x}_j\right)$ violates monotonicity-that is, $\mathbf{x}_j \succ \mathbf{x}_i$ and $h_j<h_i$-we update their values to the average: $h_j^{\prime}=h_i^{\prime}=\left(h_j+h_i\right) / 2$. We then apply linear interpolation between grid points to extend the function over $\mathcal{D}$ while preserving global monotonicity.
\begin{algorithm}[htbp]
\caption{Monotonic Function Sampler}
\label{algo:monotone.sampler}
\begin{algorithmic}[1]
\STATE \textbf{Input:} Number of samples $N$, kernel parameters $(\sigma^2, \ell)$
\STATE \textbf{Output:} Monotonic increasing function samples $\{g_1, g_2, \ldots, g_N\}$

\STATE  Generate constrained grid $\mathcal{G} \subset \mathcal{D}$ with $n$ points
\STATE  Compute covariance matrix $K \in \mathbb{R}^{n \times n}$ using Mat\'{e}rn $3/2$ kernel

\FOR{$i = 1$ to $N$}
    \STATE Sample function values on grid: $\mathbf{h}^{(i)} \sim \mathcal{N}(\mathbf{0}, K)$
    
    \STATE Compute increasing monotonic approximation
    \STATE $\mathbf{g}_{+}^{(i)} \leftarrow$ \textsc{MCE}$(\mathbf{h}^{(i)}, \mathcal{G})$ 
    
    \STATE  Compute decreasing monotonic approximation
    \STATE $\mathbf{g}_{-}^{(i)} \leftarrow -$\textsc{MCE}$(-\mathbf{h}^{(i)}, \mathcal{G})$ 
    
    \STATE  Compute approximation errors
    \STATE $e_{+}^{(i)} \leftarrow \|\mathbf{h}^{(i)} - \mathbf{g}_{+}^{(i)}\|^2, \quad e_{-}^{(i)} \leftarrow \|\mathbf{h}^{(i)} - \mathbf{g}_{-}^{(i)}\|^2$
    
    \STATE Compute adaptive weight
    \STATE $w_{+}^{(i)} \leftarrow e_{-}^{(i)}/(e_{+}^{(i)} + e_{-}^{(i)})$
    
    \STATE Create mixed monotonic increasing function
    \STATE $\mathbf{g}_{\text{mix}}^{(i)} \leftarrow w_{+}^{(i)} \cdot \mathbf{g}_{+}^{(i)} - (1-w_{+}^{(i)}) \cdot \mathbf{g}_{-}^{(i)}$
    
    \STATE Create linear interpolant
    \STATE $g_i \leftarrow \text{Interpolate}(\mathbf{g}_{\text{mix}}^{(i)}, \mathcal{G})$
\ENDFOR

\STATE \textbf{return} $\{g_1, g_2, \ldots, g_N\}$
\end{algorithmic}
\end{algorithm}

\begin{algorithm}[htbp]
\caption{Monotonic Constraint Enforcement (MCE) via Iterative Averaging}
\label{algo:mce}
\begin{algorithmic}[1]
\STATE \textbf{Input:} Function values $\mathbf{f} \in \mathbb{R}^n$, grid points $\mathcal{G} = \{\mathbf{x}_1, \ldots, \mathbf{x}_n\}$
\STATE \textbf{Output:} Monotonic function values $\mathbf{f}^*$

\STATE $\mathbf{f}^* \leftarrow \mathbf{f}$

\REPEAT
    \STATE $\mathbf{f}_{\text{old}} \leftarrow \mathbf{f}^*$
    
    \FOR{all pairs of grid points $(i,j)$ where $i, j \in \{1, \ldots, n\}$}
        \IF{$\mathbf{x}_j \succ \mathbf{x}_i$ and $f^*_j < f^*_i$} 
            \STATE $f^*_i, f^*_j \leftarrow (f^*_i + f^*_j)/2$
        \ENDIF
    \ENDFOR
    
\UNTIL{$\mathbf{f}^* = \mathbf{f}_{\text{old}}$ or max iterations reached}

\STATE \textbf{return} $\mathbf{f}^*$
\end{algorithmic}
\end{algorithm}

Figure \ref{appfig:mono.check} plots five random samples drawn based on the algorithm. We see that they all satisfy the required monotonicity constraints and exhibit non-flat behavior.
\begin{figure}[t!]
	\caption{Monotonicity Check: Five Samples of $f$ and $g$}
	\label{appfig:mono.check} 
	\begin{center} 
		\begin{tabular}{ccc}
			$f(\bar{z}, 0.2, 0.3)$ & $g(\bar{z}, 0.2, 0.3)$ \\
			\includegraphics[width=2.6in]{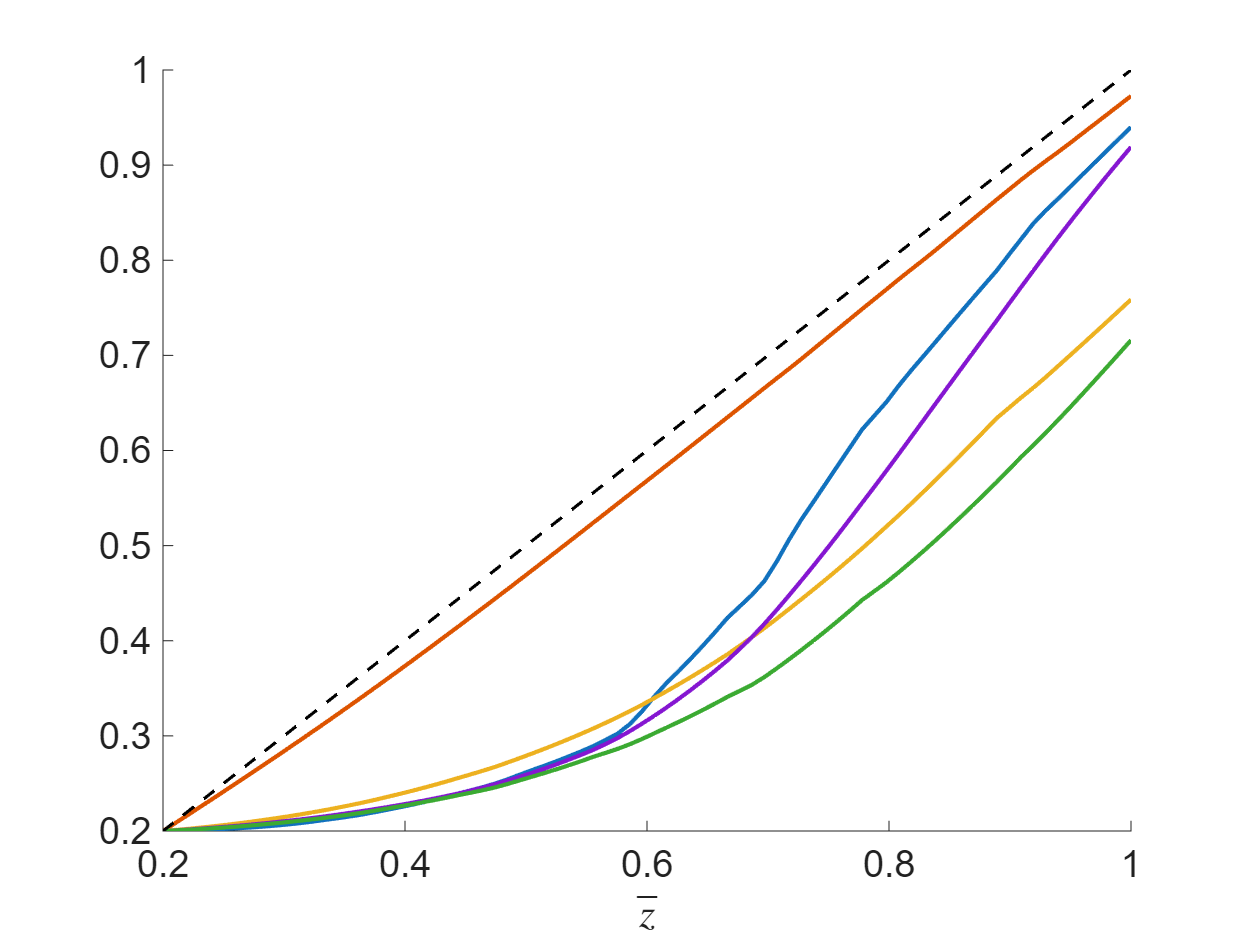} &
			\includegraphics[width=2.6in]{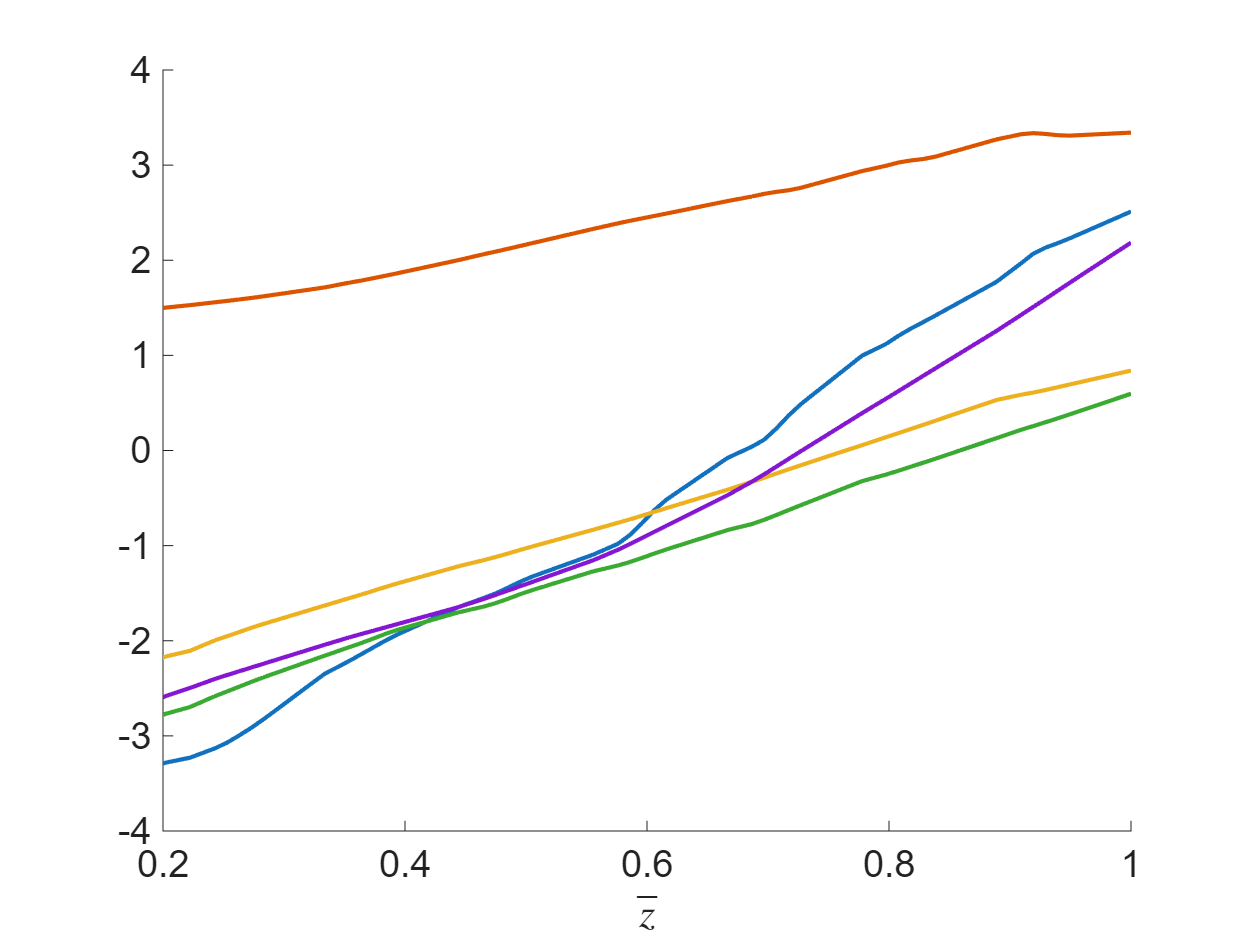} \\
			$f(0.8, \underline{z}, 0.3)$ & $g(0.8, \underline{z}, 0.3)$ \\
                \includegraphics[width=2.6in]{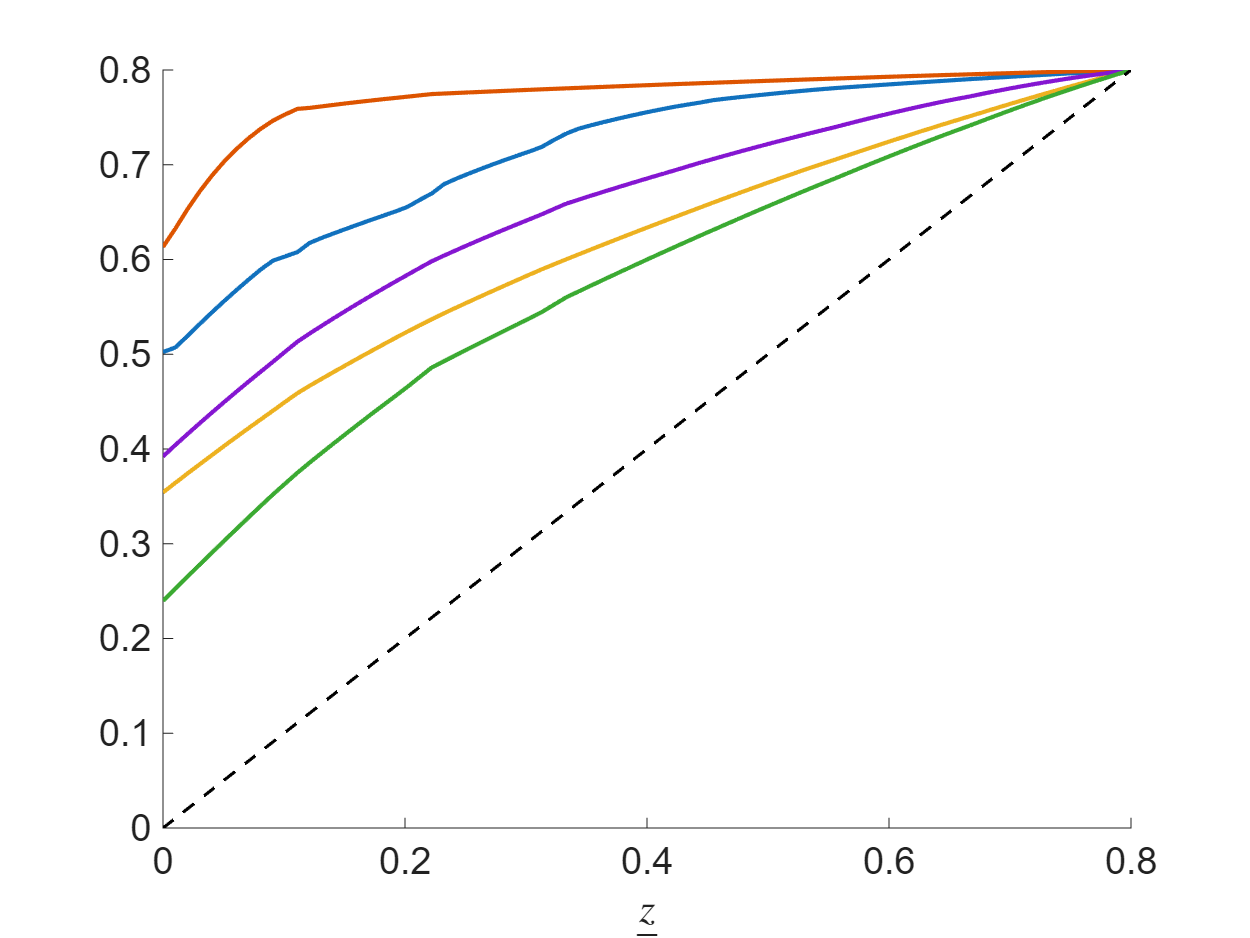}&
			\includegraphics[width=2.6in]{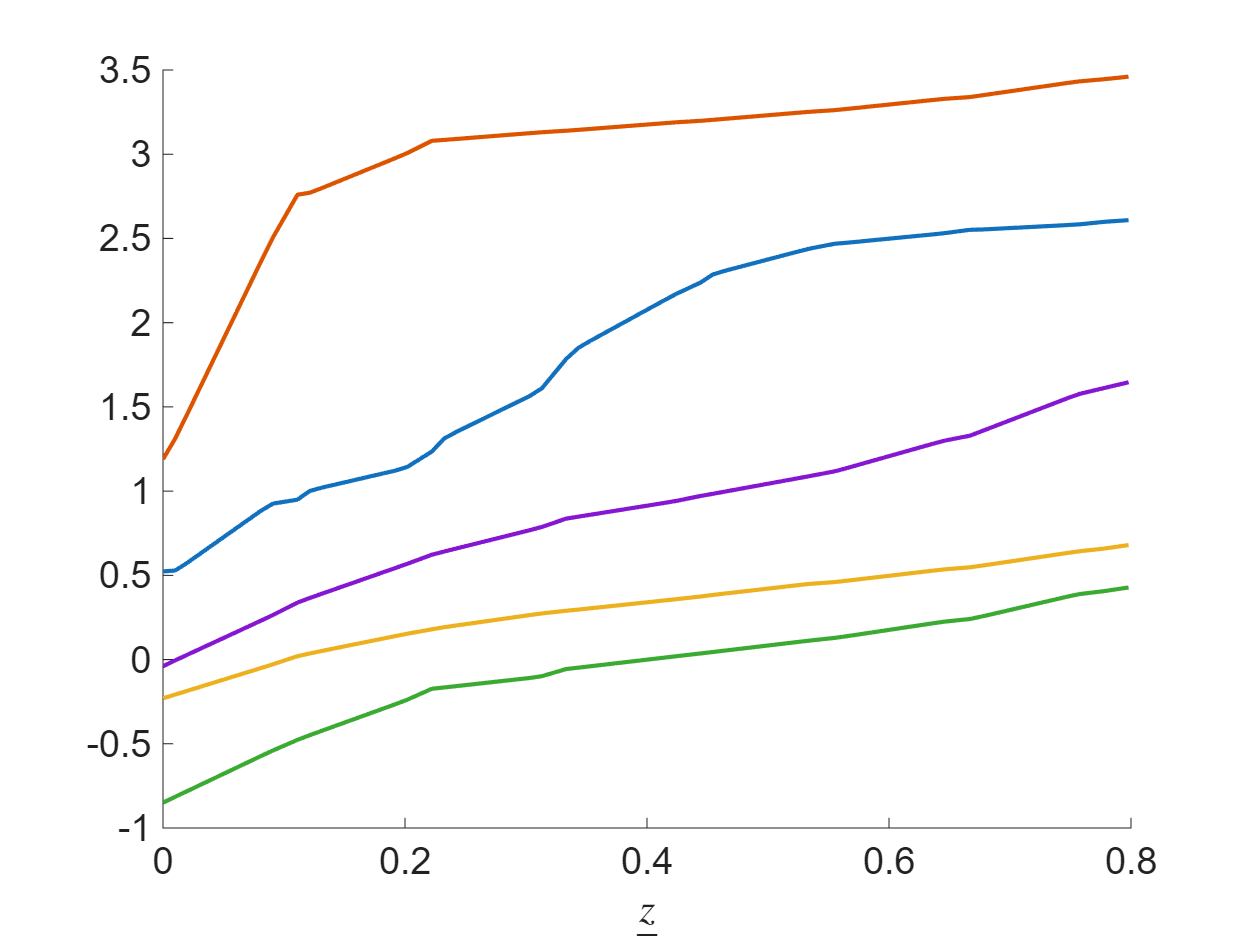} \\			
			$f(0.8, 0.2, p)$ & $g(0.8, 0.2, p)$ \\
                \includegraphics[width=2.6in]{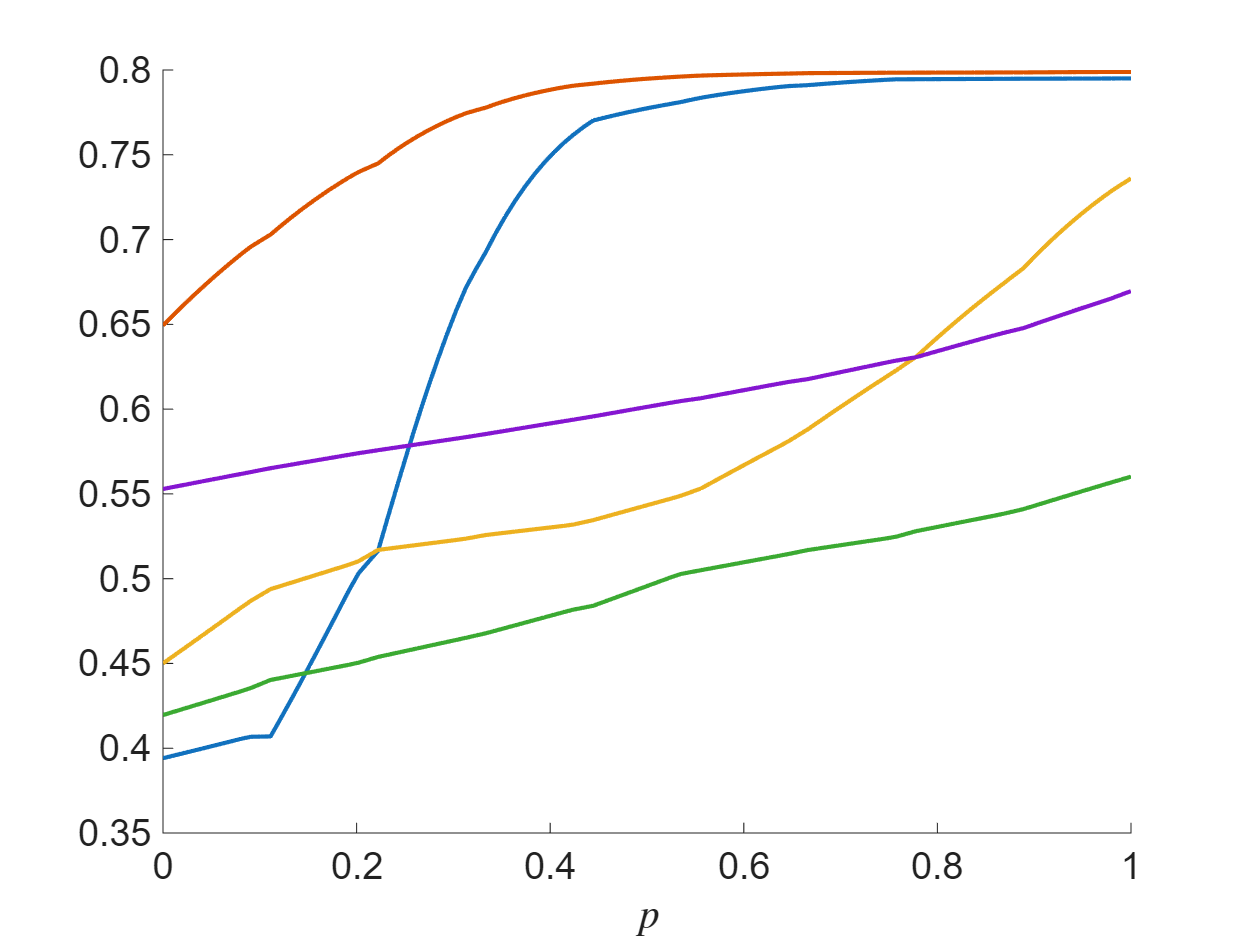}&
			\includegraphics[width=2.6in]{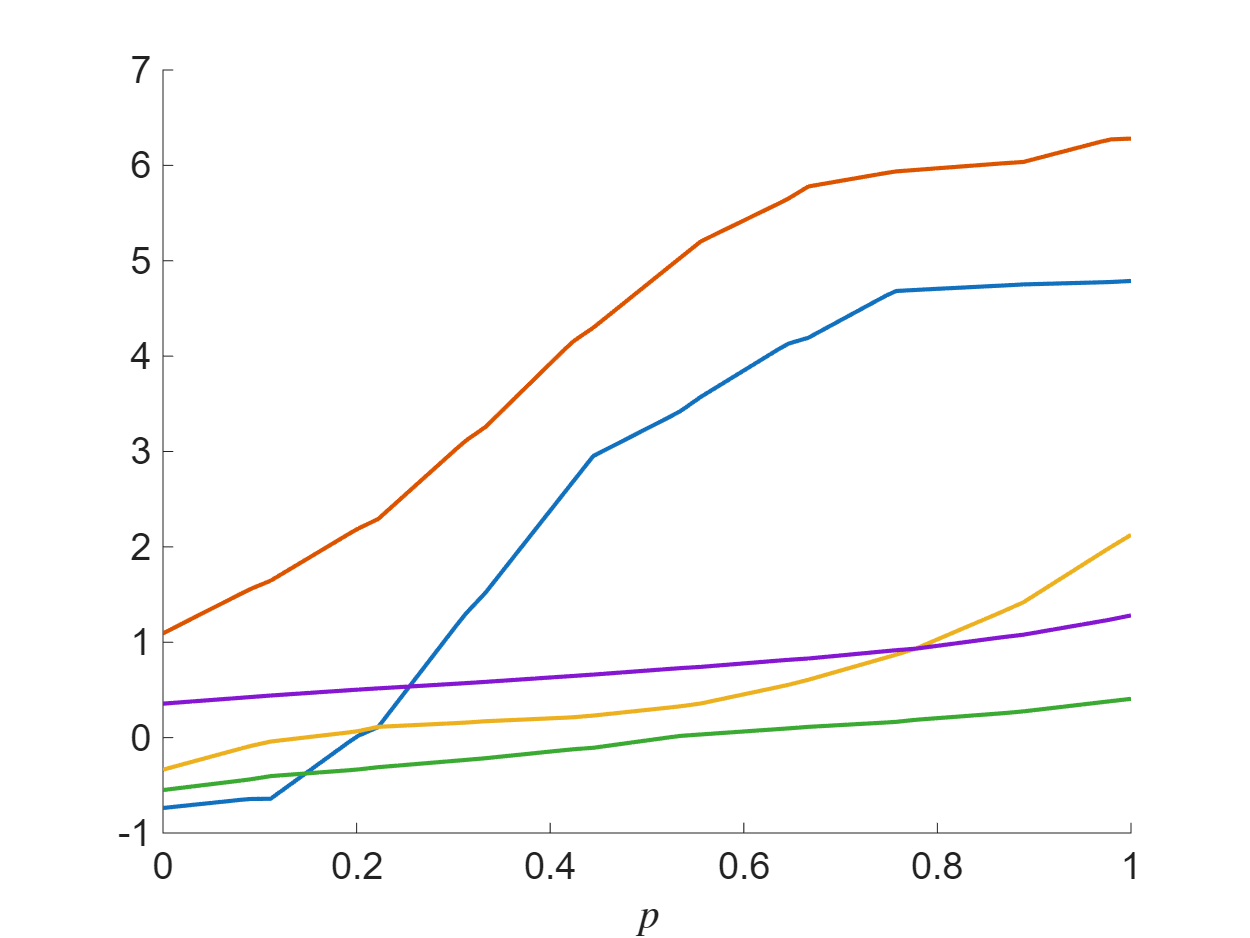} \\		
		\end{tabular}
	\end{center}

    	{\footnotesize {\em Notes}: Five drawn functional samples, $f$ and $g$ plotted against one parameter, holding the other two fixed.
     }
	\setlength{\baselineskip}{4mm}
\end{figure}

\paragraph{Additional Results}
Figure \ref{fig:model.fit} illustrates the model fit of $\operatorname{CPT}(\alpha, \delta, \gamma)$ and $\operatorname{DA}(\alpha, \eta)$ for one randomly drawn functional sample. The ``Optimal CPT/DA" (red) line shows the closest function within each model class to the pseudo-truth (blue). The discrepancy is measured as the integrated squared distance between the blue and red curves. For this particular draw, CPT delivers a visibly closer approximation than DA.

\begin{figure}[t!]
	\caption{CPT($\alpha,\delta, \gamma$) is More Flexible Than DA($\alpha,\eta$)}
	\label{fig:model.fit} 
	\begin{center} 
		\begin{tabular}{ccc}
			$f_{CPT}(\bar{z}, 0.2, 0.3)$ & $f_{DA}(\bar{z}, 0.2, 0.3)$ \\
			\includegraphics[width=2.2in]{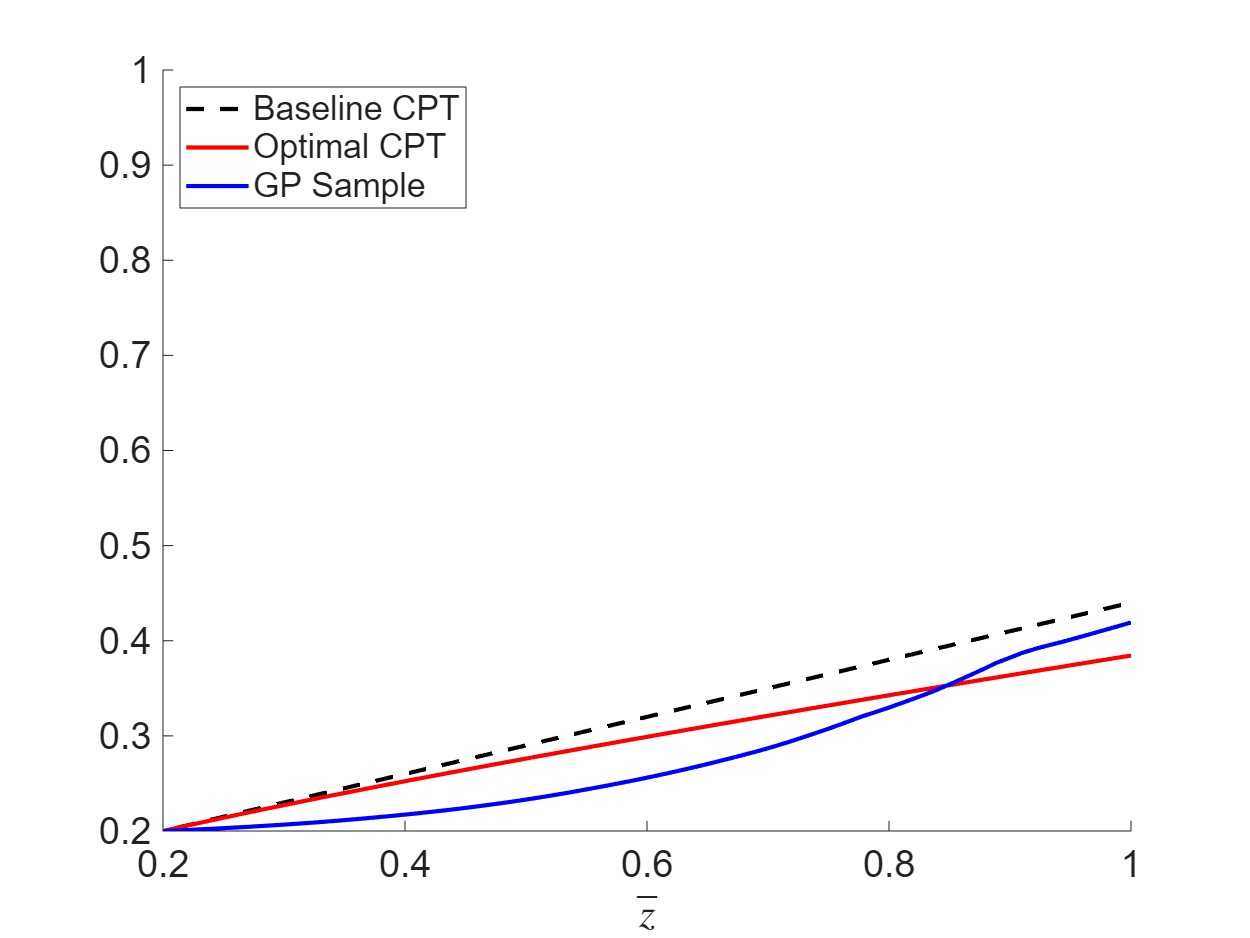} &
			\includegraphics[width=2.2in]{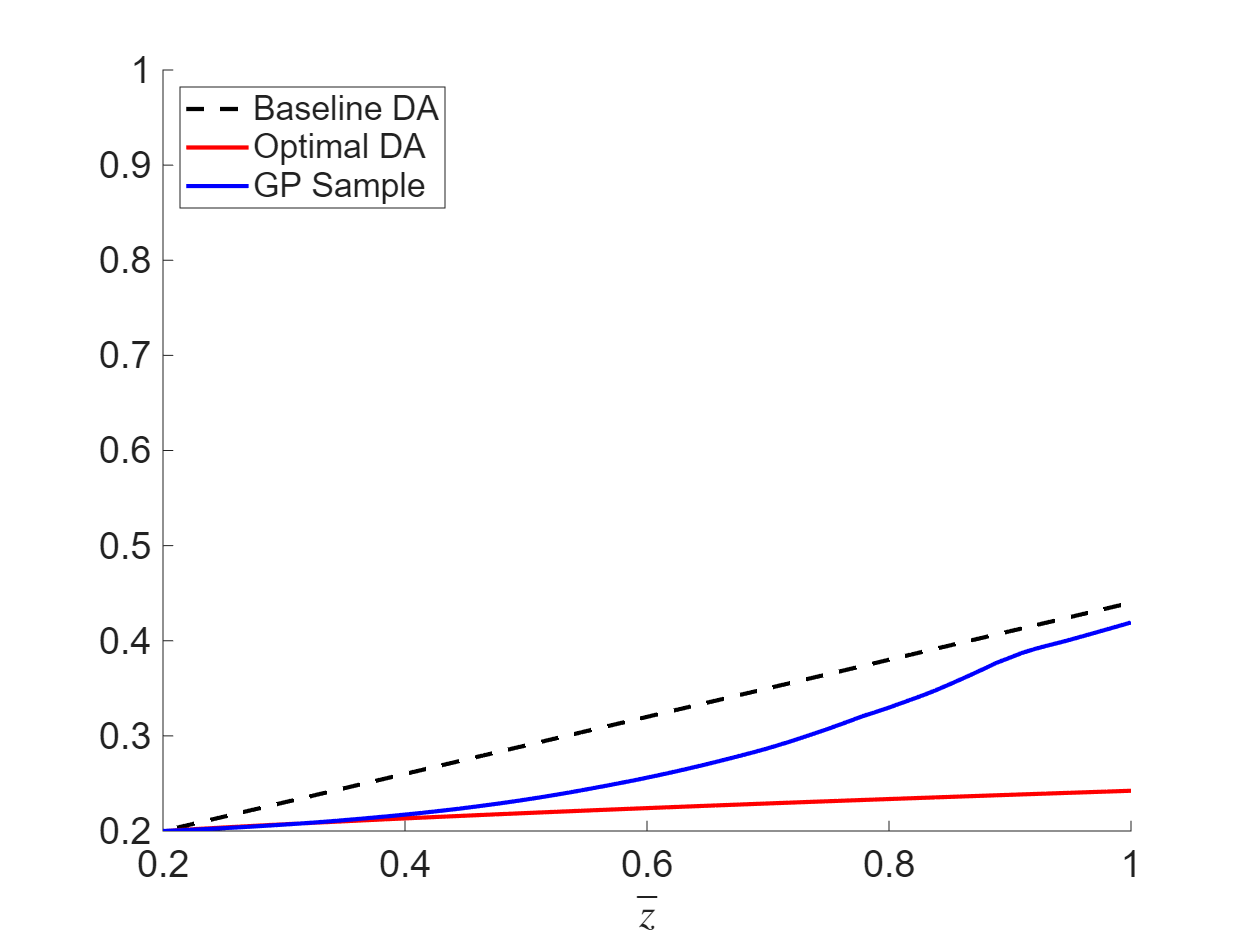} \\
			$f_{CPT}(0.8, \underline{z}, 0.3)$ & $f_{DA}(0.8, \underline{z}, 0.3)$ \\
                \includegraphics[width=2.5in]{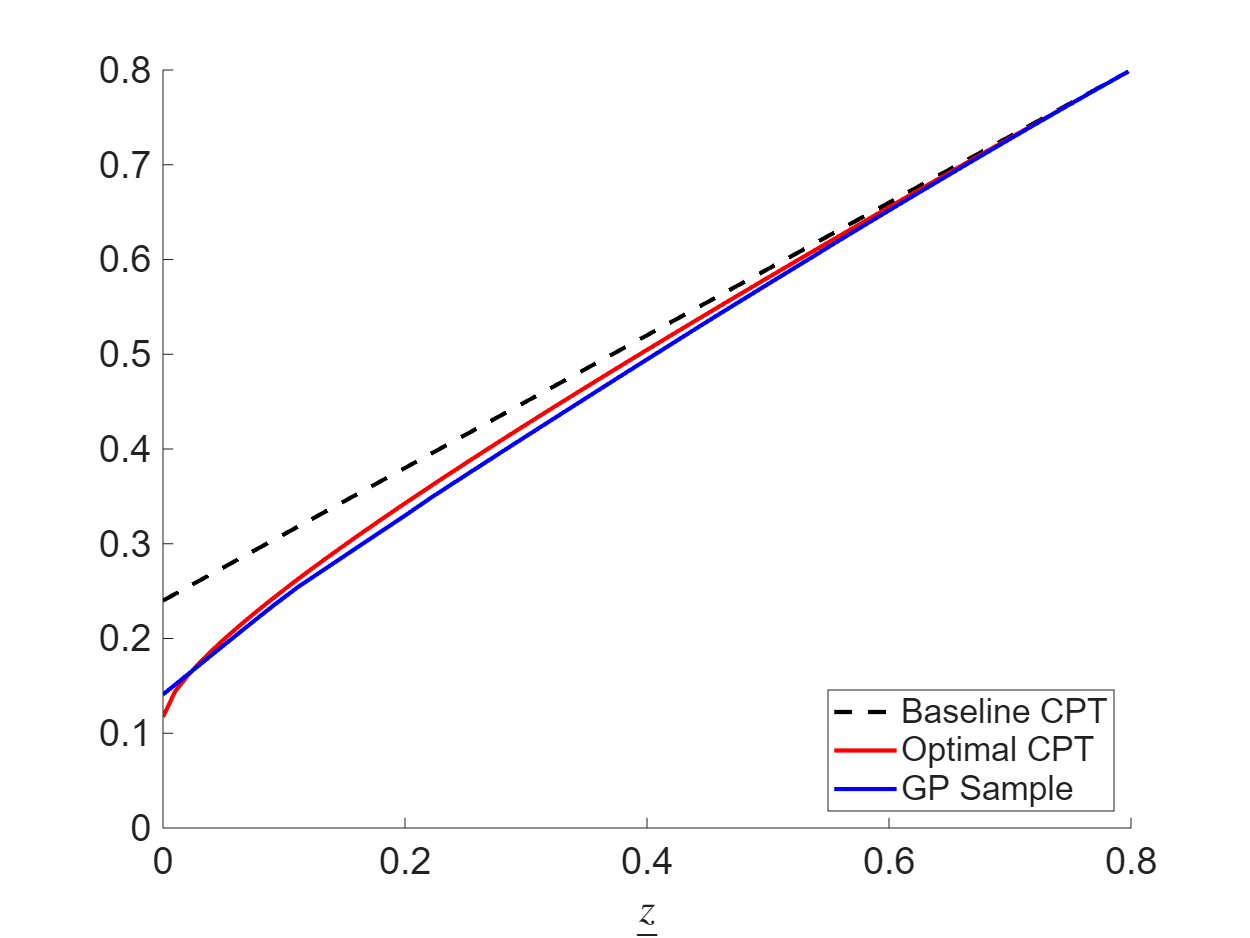}&
			\includegraphics[width=2.2in]{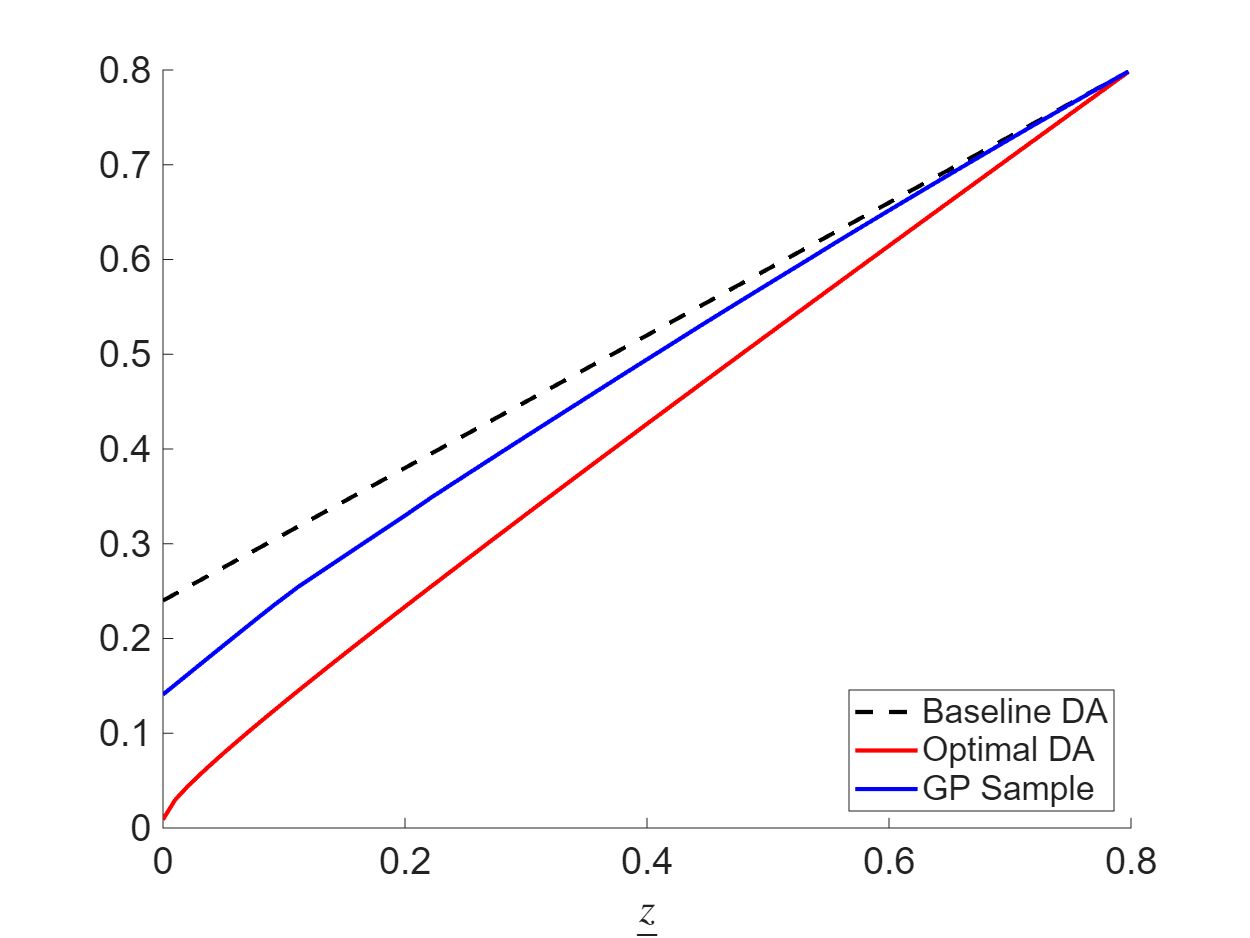} \\			
			$f_{CPT}(0.8, 0.2, p)$ & $f_{DA}(0.8, 0.2, p)$ \\
                \includegraphics[width=2.2in]{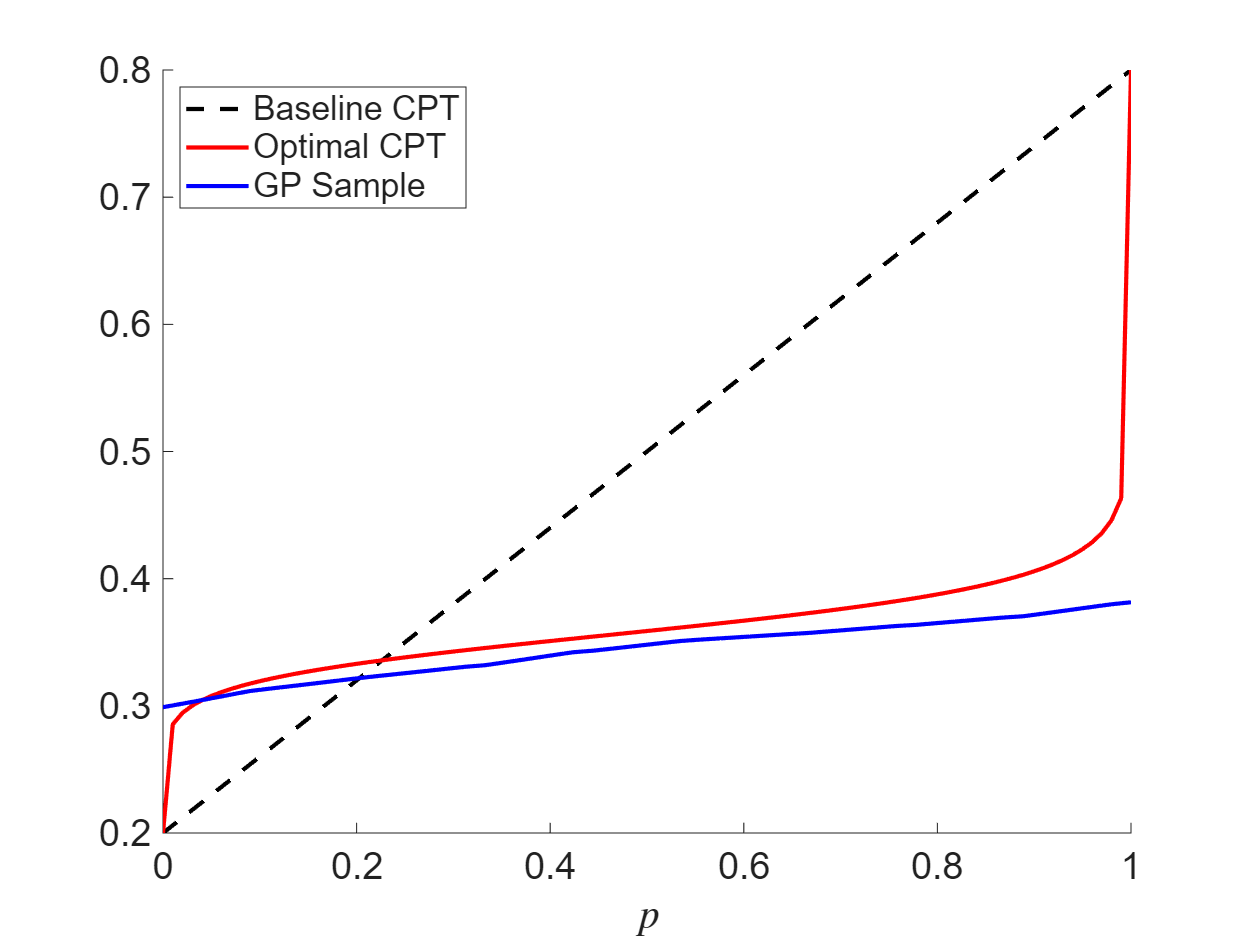}&
			\includegraphics[width=2.2in]{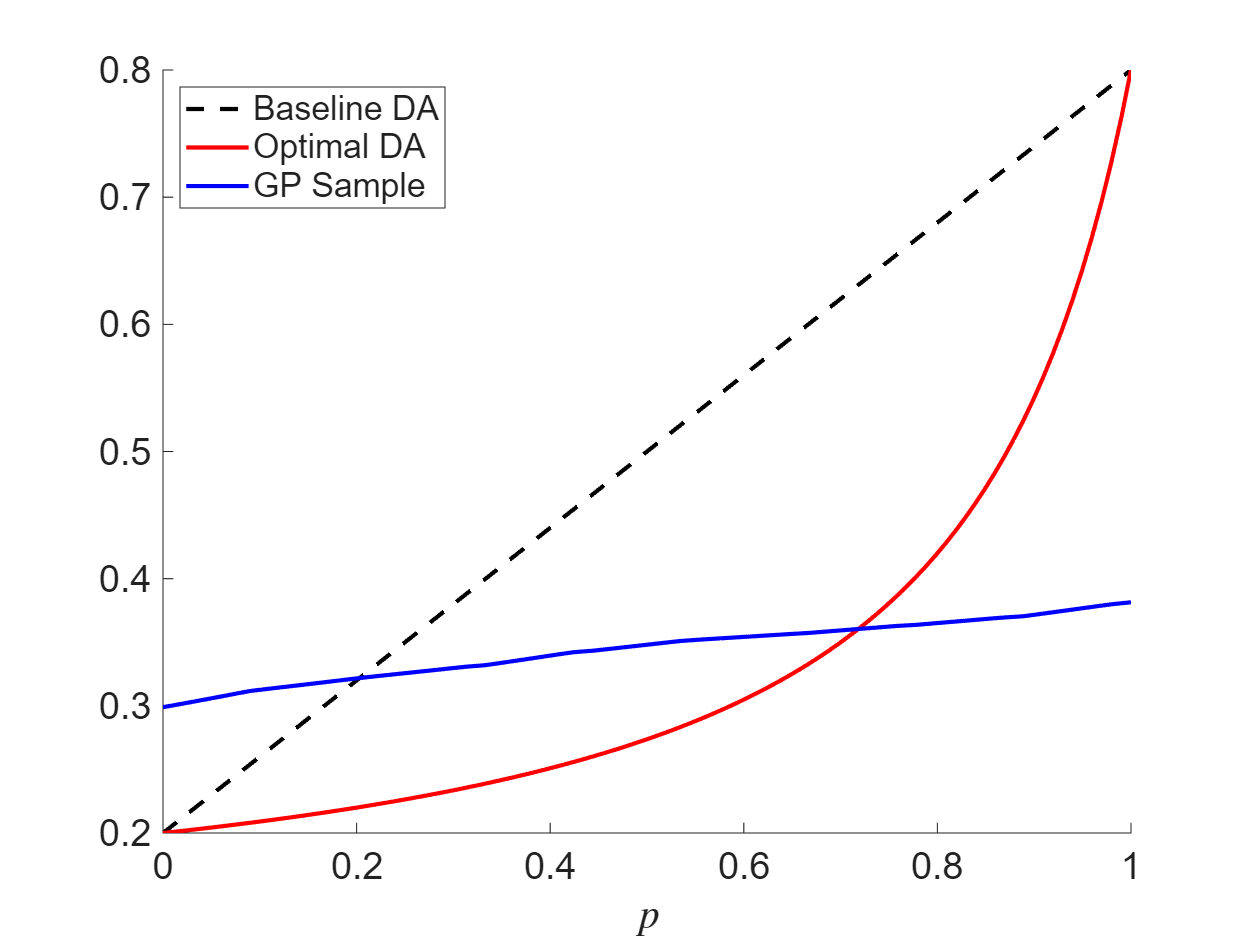} \\		
		\end{tabular}
	\end{center}

    	{\footnotesize {\em Notes}: 
        GP Sample (red solid line) refers to a functional sample $f$ from $\lambda_{\mathcal F}$. Baseline CPT/DA (black dashed line) denotes the CPT/DA model $f_{\text {base}}$. Optimal CPT/DA (blue solid line) refers to the function $f_\theta$ within the CPT/DA parametric family that achieves the closest fit to the drawn sample.
        The optimal parameters for the CPT model are $\hat{\alpha} = 0.65$, $\hat{\gamma} = 0.17$, $\hat{\delta} = 0.46$.  The optimal parameters for DA model are $\hat{\alpha} = 0.57$, $\hat{\eta} = 4.16$. $L^2$-norm of $|f_{CPT} - f_{m}|$ is $0.05$, while  $L^2$-norm of $|f_{DA} - f_{m}|$ is $0.07$. }
	\setlength{\baselineskip}{4mm}
\end{figure}

As a robustness check, we consider two deviations from the baseline Mat\'ern--$3/2$ GP used to draw the latent function $g$ in Section \ref{subsec:example.cpt}. In both cases, we enforce monotonicity exactly as in the baseline.

\emph{Squared exponential kernel}: we replace the Mat\'ern--$3/2$ kernel with the squared exponential (RBF) kernel,
\[
k_{\mathrm{RBF}}(x,x')=\sigma^{2}\exp\!\left(-\frac{\|x-x'\|^{2}}{2\ell^{2}}\right),
\]
holding $(\sigma^{2},\ell)$ fixed at their baseline values. 

\emph{Spline basis}: we replace the GP draw for $g$ with an additive cubic spline basis draw,
\[
g(x)=B(x)\alpha,\qquad \alpha \sim \mathcal{N}\!\left(0,\,(I+\lambda D'D)^{-1}\right),
\]
where $B(x)$ stacks cubic B-spline bases for each dimension (with $K=4$ internal knots), and we orthogonalize the non-intercept columns of $B$ for numerical stability. We use a P-spline penalty with $\lambda=10$, scale the coefficients by $0.6$, and then rescale the draw to match the baseline GP variance.

The results are reported in Table~\ref{table:cpt.robust}.
Relative to the baseline result, the resulting restrictiveness estimates are broadly similar and preserve the qualitative ranking across specifications, indicating that our conclusions are not sensitive to the sampling scheme.

\begin{table}[t!]
  \caption{Restrictiveness for Certainty Equivalents (Robustness)}
  \begin{center}
    \begin{tabular}{rcccc} \hline
          & \#Param & Baseline & RBF & Spline \\ \hline \hline
    \multicolumn{1}{l}{CPT Spec.} &       &       &       &  \\ \hline
    \multicolumn{1}{l}{$\alpha$, $\delta$, $\gamma$} & 3     & 0.56  & 0.50  & 0.56  \\
          &       & (0.00)  & (0.00)  & (0.00)  \\
    \multicolumn{1}{l}{$\alpha$, $\gamma$} & 2     & 0.77  & 0.75  & 0.90  \\
          &       & (0.00)  & (0.00)  & (0.00)  \\
    \multicolumn{1}{l}{$\gamma$, $\delta$} & 2     & 0.59  & 0.52  & 0.58  \\
          &       & (0.00)  & (0.00)  & (0.00)  \\
    \multicolumn{1}{l}{$\alpha$, $\delta$} & 2     & 0.67  & 0.60  & 0.60  \\
          &       & (0.01)  & (0.00)  & (0.00)  \\
    \multicolumn{1}{l}{$\alpha$} & 1     & 0.92  & 0.91  & 1.00  \\
          &       & (0.00)  & (0.00)  & (0.00)  \\
    \multicolumn{1}{l}{$\gamma$} & 1     & 0.86  & 0.85  & 0.90  \\
          &       & (0.00)  & (0.00)  & (0.00)  \\
    \multicolumn{1}{l}{$\delta$} & 1     & 0.69  & 0.61  & 0.61  \\
          &       & (0.01)  & (0.00)  & (0.00)  \\
    \hline
    \multicolumn{1}{l}{DA Spec.} &       &       &       &  \\ \hline
    \multicolumn{1}{l}{$\alpha$, $\eta$} & 2     & 0.67  & 0.60  & 0.60  \\
          &       & (0.01)  & (0.00)  & (0.00)  \\
    \multicolumn{1}{l}{$\eta$} & 1     & 0.69  & 0.61  & 0.61  \\
          &       & (0.01)  & (0.00)  & (0.00)  \\
    \multicolumn{1}{l}{$\alpha$} & 1     & 0.94  & 0.92  & 1.00  \\
          &       & (0.00)  & (0.00)  & (0.00)  \\
    \hline
    \end{tabular}%
   \end{center}

    \label{table:cpt.robust}
\end{table}%

\clearpage
\subsection{Details for Section 5.2}
\label{subsec:detail.example2}
\paragraph{Model}
The multinomial choice models we compare are
\begin{itemize}
\item Multinomial Logit (MNL): The market share of product $j$ in market $m$ is
\[
p_{jm}(X_{m}; \b_{0})=\frac{\exp\left(x_{jm}^{'}\b_{0}\right)}{1 + \sum_{k=1}^J \exp\left(x_{km}^{'}\b_{0}\right)}
\]
\item Nested Logit (NL): Products are partitioned into nests 
indexed by $g \in \mathcal{G}$, and $g(jm)$ denotes the nest containing 
product $j$ in market $m$. The market share of product $j$ in market $m$ is
\[
p_{jm}(X_m;\beta_0,\rho_0)
= p_{j \mid g(jm),m}\, P_{g(jm),m},
\]
where the within-nest conditional probability is
\[
p_{j \mid g(jm),m}
= 
\frac{
\exp\!\left(x_{jm}'\beta_0/(1-\rho_0)\right)
}{
\sum_{k \in \mathcal{J}_{g(jm)}(m)}
\exp\!\left(x_{km}'\beta_0/(1-\rho_0)\right)
},
\]
and the nest-level probability is
\[
P_{g,m}
=
\frac{
\left[
\sum_{k \in \mathcal{J}_g(m)}
\exp\!\left(x_{km}'\beta_0/(1-\rho_0)\right)
\right]^{1-\rho_0}
}{
1 + \sum_{h \in \mathcal{G}}
\left[
\sum_{k \in \mathcal{J}_h(m)}
\exp\!\left(x_{km}'\beta_0/(1-\rho_0)\right)
\right]^{1-\rho_0}
}.
\]
When $\rho_0=0$, the NL model collapses to the standard multinomial logit.

\item Mixed Logit (MXL): The market share of product $j$ in market $m$ is
\[
p_{jm}(X_{m}; \b_{0},\Sigma_0)=\E\left[\rest{\frac{\exp\left(x_{jm}^{'}\left(\b_{0}+\Sigma_0\nu_{i}\right)\right)}{1 + \sum_{k}\exp\left(x_{km}^{'}\left(\b_{0}+\Sigma_0\nu_{i}\right)\right)}}X_{m}\right],
\]
where 
$\nu_{i}\sim N\left(0, I\right)$ and $\Sigma_0 = diag(\sigma_1,\ldots,\sigma_{d}).$
\end{itemize}

\paragraph{Eligible Set $\mathcal{F}$ and Evaluation Distribution $\lambda_{\mathcal{F}}$}
Our specification of the eligible set is motivated by theoretical work on the flexibility of mixed logit models. The parametric MXL model we evaluate has  mean utility component $x_{jm}^{'}\b_{0}$ and  individual heterogeneity component $x_{jm}^{'} \Sigma \nu_i$. The three alternative eligible sets we consider differ in which components of utility are allowed to be general functions of product characteristics $x_{jm}$, possibly subject to monotonicity restrictions.

The first eligible set ``NP Both" relaxes both utility components to be general functions. The shares follow
\[
s_{jm}(X_{m}; f)= \mathbb{E}_{f_i}\left[\rest{\frac{\exp\left(f(x_{jm})+f_i(x_{jm})\right)}{1 + \sum_{k}\exp\left(f(x_{km})+f_i(x_{km})\right)}}X_{m}\right] \quad \text{(NP Both)}.
\]
Here, $f(.)$ determines the product-level utilities common across individual $i$, and the $f_i$ determine the individual-specific product-level utilities. We restrict $f$ to be monotonic decreasing in price $p_{jm}$.

To construct the evaluation distribution $\lambda_{\mathcal F}$, individual-specific components $f_i(\cdot)$ are drawn from zero-mean Gaussian processes defined over observed product characteristics
$x_{jm}=(p_{jm},d_{jm})$, where $p_{jm}$ denotes price and $d_{jm}$ is a binary category indicator.
The covariance kernel takes a product form
\[
K(x,x') = K_p(p,p')\,K_d(d,d'),
\]
where $K_p$ is a Mat\'ern kernel with smoothness parameter $\nu=3/2$,
\[
K_p(p,p')=\sigma^2\!\left(1+\sqrt{3}\frac{|p-p'|}{\ell}\right)
\exp\!\left(-\sqrt{3}\frac{|p-p'|}{\ell}\right),
\]
and $K_d(d,d')=1\{d=d'\}+\rho\,1\{d\neq d'\}$. We fix
$\sigma^2=10$, $\ell=10$, $\rho=0.6$. Individual-specific functions
$f_i(\cdot)$ are drawn independently across $N_s=2000$ simulated consumers.

To draw common component $f(\cdot)$ and enforce monotonicity in price, we adopt a derivative-based construction. For each
market, we first draw an unconstrained latent Gaussian process $h(\cdot)$ with the same kernel.
Products are sorted by price within category $d_{jm} \in \{0,1\}$, and the latent draw is transformed into a strictly
positive derivative magnitude via $\dot f=\log(1+\exp(h))$. The monotonic function is
then obtained by cumulative integration over price differences,
\[
f(p_{(k)}, d)=-\sum_{r=2}^k \dot f_{(r), d}\,\left(p_{(r)}-p_{(r-1)}\right),
\]
which guarantees that $f(\cdot)$ is weakly decreasing in price. Finally, $f(\cdot)$ is centered to
have zero mean across products in each market.

Given draws of $f(\cdot)$ and $f_i(\cdot)$, market shares are approximated by Monte Carlo
integration,
\[
\hat s_{jm}(X_m;f)=\frac{1}{N_s}\sum_{i=1}^{N_s}
\frac{\exp\!\left(f(x_{jm})+f_i(x_{jm})\right)}
{1+\sum_k\exp\!\left(f(x_{km})+f_i(x_{km})\right)}.
\]

We also consider two variants, each relaxing one of the utility components to be nonparametric. Specifically,
\[
s_{jm}(X_{m}; f)=\mathbb{E}_{\nu_i}\left[ \rest{\frac{\exp\left(f(x_{jm})+x_{jm}'\Sigma\nu_i)\right)}{1 + \sum_{k}\exp\left(f(x_{km})+x_{km}'\Sigma\nu_i\right)}}X_{m}\right] \quad \text{(NP mean)},
\]
and
\[
s_{jm}(X_{m}; f)=\mathbb{E}_{f_i}\left[ \rest{\frac{\exp\left(x_{jm}'\beta+f_i(x_{jm})\right)}{1 + \sum_{k}\exp\left(x_{km}'\beta+f_i(x_{km})\right)}}X_{m}\right] \quad \text{(NP individual)}.
\]
To sample from the evaluation distribution, we use the same Gaussian process priors for $f$ or $f_i$ as in the ``NP Both" case, and impose diffuse priors on the parametric components $\beta$ (NP individual) and $\Sigma$ (NP mean). Specifically,
$$
\beta \sim \mathcal{N}(0, \Omega) \mid\left\{\beta_{x^1}<0\right\}, \quad \Omega=\operatorname{Diag}\left(20^2, 20^2, 20^2\right),
$$
where the truncation is imposed only on the coefficient of the price variable, and the remaining coefficients are unrestricted. We assume that $\Sigma$ is diagonal, with each diagonal element independently distributed as $IG(2,1)$.

\paragraph{Squared exponential GP kernel}
As a robustness check, we report restrictiveness results using the squared exponential kernel for Gaussian process draws. 
As with the Mat\'ern--$3/2$ kernel, the squared exponential kernel is governed by two parameters,
\[
k_{\mathrm{SE}}\!\left(x, x^{\prime}\right)
=
\sigma^2
\exp\!\left(
-\frac{\|x-x^{\prime}\|^2}{2\ell^2}
\right),
\]
which control the marginal variance and length scale, respectively. 
We use the same parameterization of $\sigma^2$ and $\ell$ as in the baseline specification.

Table~\ref{tab:emp2.restrictiveness.kernel} reports restrictiveness under this alternative kernel. 
Relative to the baseline results in Panel A of Table~\ref{tab:emp_restrictiveness_completeness}, restrictiveness decreases for all three models.
Importantly, the qualitative ranking of models remains unchanged: MNL is the most restrictive model, while NL and MNL have similar restrictiveness.

\begin{table}
\centering
\caption{Multinomial Choice Models (Endogeneity, Squared Exponential Kernel)}
\begin{tabular}{l l c c}
\hline
\textbf{Eligible Set} & \textbf{Model} & \textbf{Restr.} & \textbf{SE} \\
\hline
NP Both & MNL & 0.132 & (0.009) \\
NP Both & NL  & 0.089 & (0.005) \\
NP Both & MXL & 0.088 & (0.005) \\
\hline
NP Individual & MNL & 0.000 & (0.000) \\
NP Individual & NL  & 0.000 & (0.000) \\
NP Individual & MXL & 0.000 & (0.000) \\
\hline
NP Mean & MNL & 0.132 & (0.009) \\
NP Mean & NL  & 0.090 & (0.005) \\
NP Mean & MXL & 0.091 & (0.005) \\
\hline
\end{tabular}
\label{tab:emp2.restrictiveness.kernel}
\end{table}

\paragraph{Spline basis}
As a robustness check, we replace the GP draws for both the common component $f$ and the individual component $f_i$ with cubic spline basis draws. Specifically, in the sampling procedure outlined in \ref{subsec:detail.example2}, the latent process $h$ is now generated from a spline-basis draw. We construct a truncated-power cubic basis
\[
b(x)=\big[1,\ x,\ x^2,\ x^3,\ (x-\kappa_1)_+^3,\ldots,(x-\kappa_K)_+^3\big]^\prime,
\]
with $K=4$ internal knots equally spaced over the observed range, and orthogonalize all non-intercept columns. Stacking these basis functions across observations yields a matrix $B$. We draw the basis coefficients from a P-spline prior, $\alpha \sim \mathcal{N}\!\left(0,\,(I+\lambda D'D)^{-1}\right)$ with $\lambda=10$, and form $h = B\alpha + c$, where the category effect $c$ follows an equicorrelated Gaussian with correlation $\rho=0.6$. Finally, we rescale $h$ to match the standard deviation of the original GP process.

Table~\ref{tab:emp2.restrictiveness.spline}
reports the resulting restrictiveness. Relative to the baseline results, restrictiveness increases for all three models.
 The qualitative ranking of models is unchanged.

\begin{table}
\centering
\caption{Multinomial Choice Models (Endogeneity, Spline Basis)}
\begin{tabular}{l l c c}
\hline
\textbf{Eligible Set} & \textbf{Model} & \textbf{Restr.} & \textbf{SE} \\
\hline
NP Both & MNL & 0.253 & (0.009) \\
NP Both & NL  & 0.213 & (0.006) \\
NP Both & MXL & 0.213 & (0.007) \\
\hline
NP Individual & MNL & 0.002 & (0.000) \\
NP Individual & NL  & 0.002 & (0.000) \\
NP Individual & MXL & 0.001 & (0.000) \\
\hline
NP Mean & MNL & 0.241 & (0.008) \\
NP Mean & NL  & 0.203 & (0.005) \\
NP Mean & MXL & 0.203 & (0.005) \\
\hline
\end{tabular}
\label{tab:emp2.restrictiveness.spline}
\end{table}

\subsection{Details for Section 5.3}
\label{subsec:detail.example3}
\paragraph{Model}
For each market $m$ and product $j\in\{1,\ldots,J\}$, let $x_{jm}\in\mathbb R^{K}$ denote observed product characteristics and $z_{jm}\in\mathbb R^{L}$ denote excluded instruments. Let $s_{0m}\in(0,1)$ denote the outside-option share in market $m$, which is treated as fixed and observed.

Let $\mathcal H$ be a prescribed function space. As in Section~\ref{subsec:sp.model}, the systematic utility index is
\[
u_{jm} = x_{jm}'\beta + \tilde{h}(x_{jm},z_{jm}),
\]
where $$\tilde{h}(x_{jm},z_{jm}) = h(x_{jm},z_{jm}) - \mathbb{E}[h(x_{jm},z_{jm})|z_{jm}].$$
In the presence of $h$, the scale of $\exp(u_{jm})$ can vary substantially across draws, which makes normalizing the outside-option utility  undesirable. For this reason, we treat the outside-option share $s_{0m}$ as fixed and compute model-implied shares only for inside goods, normalizing them to sum to $1-s_{0m}$ in each market.

Given a vector of inside-good utilities $u_m=(u_{1m},\ldots,u_{Jm})$, define the structural share map
for multinomial logit 
\[
S(u_m;x_m)
=
(1-s_{0m})\left(
\frac{\exp(u_{1m})}{\sum_{k=1}^J \exp(u_{km})},
\ldots,
\frac{\exp(u_{J-1, m})}{\sum_{k=1}^J \exp(u_{km})}
\right),
\]
with analogous definitions for nested logit and mixed logit, where the softmax is taken over inside goods and then rescaled by $1-s_{0m}$.

The baseline model class restricts the parametric component to $\beta=0$ while retaining the same control-function flexibility
\[
\mathcal F_{\textnormal{base}}
=
\left\{
S\!\left(\tilde h(x,z);x\right)
:\ h\in\mathcal H
\right\}.
\]
The multinomial choice model classes we compare are:
\begin{itemize}
\item Multinomial Logit (MNL): The MNL model class is
\[
\mathcal F_{\textnormal{MNL}}
=
\left\{
S\!\left(x'\beta+\tilde h(x,z);x\right)
:\ \beta\in\mathbb R^{K},\ h\in\mathcal H
\right\}.
\]

\item Nested Logit (NL): The NL model class is
\[
\mathcal F_{\textnormal{NL}}
=
\left\{
S_{\textnormal{NL}}\!\left(x'\beta+\tilde h(x,z);x,\rho\right)
:\ \beta\in\mathbb R^{K},\ \rho\in[0,1),\ h\in\mathcal H
\right\},
\]
where $S_{\textnormal{NL}}$ denotes the nested-logit share map with fixed outside share.

\item Mixed Logit (MXL): Let $\Sigma=\operatorname{diag}(\sigma_1,\ldots,\sigma_d)$ denote the random-coefficient scale matrix. The MXL model class is
\[
\mathcal F_{\textnormal{MXL}}
=
\left\{
S_{\textnormal{MXL}}\!\left(x'\beta+\tilde h(x,z);x,\Sigma\right)
:\ \beta\in\mathbb R^{K},\ \Sigma\in\Theta_{\Sigma},\ h\in\mathcal H
\right\},
\]
where $S_{\textnormal{MXL}}$ integrates the inside-good logit shares over $\nu\sim N(0,I)$ and rescales by $1-s_{0m}$.
\end{itemize}

\paragraph{Drawing $h$}
We draw a \emph{global} function $h$ using Random Fourier Features (RFF). Let
\[
\xi_{jm} := \big(\tilde p_{jm}, \tilde z_{1,jm}, \tilde z_{2,jm}\big)
\]
collect the arguments entering $h$. For each draw, we generate
\[
h(\xi)
=
\sqrt{\frac{2}{D}}
\sum_{d=1}^D a_d \cos\!\left(w_d'\xi + b_d\right),
\]
where $b_d \sim \mathrm{Unif}(0,2\pi)$, $a_d \sim \mathcal N(0,\sigma_h^2)$, and the frequency vectors
\[
w_d \sim t_{3}/\ell
\]
are drawn independently from a scaled Student-$t$ distribution with three degrees of freedom. This choice of spectral distribution corresponds to a Mat\'ern--$3/2$ kernel with length scale $\ell$, and $D$ controls the accuracy of the approximation. The same realization of $h$ is shared across all markets, ensuring that $\tilde h$ represents a common nonparametric component.

To construct the control-function residual, we project $h$ onto the space of functions of instruments. Let $z_{jm}=(z_{1,jm},z_{2,jm})$ denote the selected BLP instruments. We compute
\[
\bar h(z)
=
\Pi\!\left(h(\xi)\mid \mathrm{span}\{1,z_1,z_2,z_1^2,z_2^2,z_1 z_2\}\right),
\]
where $\Pi(\cdot)$ denotes least-squares projection. The control-function component entering the utility index is then
\[
\tilde h(x_{jm},z_{jm}) = h(\xi_{jm}) - \bar h(z_{jm}).
\]

\paragraph{Three instruments}
As a robustness check, we report restrictiveness results using three instruments that are most strongly correlated with the endogenous regressor in Table~\ref{tab:emp3.restrictiveness.3iv}. Relative to the specification with fewer instruments (Panel A of Table~\ref{tab:emp_restrictiveness_completeness}), restrictiveness increases across all models. This is consistent with theory, as adding an additional moment condition further constrains the model's ability to fit the pseudo data. Importantly, the qualitative ranking of models remains unchanged: MNL and NL exhibit nearly identical restrictiveness, while MXL remains less restrictive than the other two.

\begin{table}
\centering
\caption{Multinomial Choice Models (Endogeneity, 3 IVs)}
\begin{tabular}{l l c c}
\hline
\textbf{Eligible Set} & \textbf{Model} & \textbf{Restr.} & \textbf{SE} \\
\hline
NP Both & MNL & 0.779 & (0.013) \\
NP Both & NL  & 0.779 & (0.013) \\
NP Both & MXL & 0.673 & (0.022) \\
\hline
NP Individual & MNL & 0.651 & (0.007) \\
NP Individual & NL  & 0.651 & (0.007) \\
NP Individual & MXL & 0.589 & (0.014) \\
\hline
NP Mean & MNL & 0.800 & (0.010) \\
NP Mean & NL  & 0.800 & (0.010) \\
NP Mean & MXL & 0.665 & (0.010) \\
\hline
\end{tabular}
\label{tab:emp3.restrictiveness.3iv}
\end{table}

\paragraph{Squared exponential GP kernel}
As in example 2, we report restrictiveness results using the squared exponential kernel for Gaussian process draws in Table~\ref{tab:emp3.restrictiveness.kernel}. The restrictiveness of the three models are close to the baseline results.

\begin{table}
\centering
\caption{Multinomial Choice Models (Endogeneity, Squared Exponential Kernel)}
\begin{tabular}{l l c c}
\hline
\textbf{Eligible Set} & \textbf{Model} & \textbf{Restr.} & \textbf{SE} \\
\hline
NP Both & MNL & 0.742 & (0.014) \\
NP Both & NL  & 0.742 & (0.014) \\
NP Both & MXL & 0.663 & (0.022) \\
\hline
NP Individual & MNL & 0.596 & (0.009) \\
NP Individual & NL  & 0.596 & (0.009) \\
NP Individual & MXL & 0.544 & (0.015) \\
\hline
NP Mean & MNL & 0.698 & (0.009) \\
NP Mean & NL  & 0.698 & (0.009) \\
NP Mean & MXL & 0.607 & (0.008) \\
\hline
\end{tabular}
\label{tab:emp3.restrictiveness.kernel}
\end{table}

\paragraph{Spline basis}
As in example 2, we report restrictiveness results using the spline basis draws in Table~\ref{tab:emp3.restrictiveness.spline}. Compared to the baseline results, the restrictiveness of the three models are lower under the ``NP Individual'' eligible set, while remaining similar under the other two eligible sets. The qualitative ranking of models is unchanged.

\begin{table}[htbp]
\centering
\caption{Multinomial Choice Models (Endogeneity, Spline)}
\begin{tabular}{l l c c}
\hline
\textbf{Eligible Set} & \textbf{Model} & \textbf{Restr.} & \textbf{SE} \\
\hline
NP Both & MNL & 0.714 & (0.009) \\
NP Both & NL  & 0.714 & (0.009) \\
NP Both & MXL & 0.626 & (0.008) \\
\hline
NP Individual & MNL & 0.415 & (0.005) \\
NP Individual & NL  & 0.415 & (0.005) \\
NP Individual & MXL & 0.398 & (0.004) \\
\hline
NP Mean & MNL & 0.716 & (0.010) \\
NP Mean & NL  & 0.716 & (0.010) \\
NP Mean & MXL & 0.628 & (0.008) \\
\hline
\end{tabular}
\label{tab:emp3.restrictiveness.spline}
\end{table}

\end{document}